\PassOptionsToPackage{dvipsnames}{xcolor}
\documentclass[final]{ieeetj}
\makeatletter

\def\ps@onlypagenumber{%

  \def\@oddhead{}%

  \def\@evenhead{}%

  \def\@oddfoot{\hbox to \textwidth{\hfill\rffont\thepage}}%

  \def\@evenfoot{\hbox to \textwidth{\rffont\thepage\hfill}}%

}

\makeatother
\let\labelindent\relax
\usepackage{enumitem}
\usepackage{amsmath,amsfonts}
\usepackage{enumitem}  
\usepackage{algorithmic}
\usepackage{algorithm} 
\usepackage{array}
\usepackage[caption=false,font=normalsize,labelfont=sf,textfont=sf]{subfig}
\usepackage{textcomp}
\usepackage{stfloats}
\usepackage{url}
\usepackage{verbatim}
\usepackage{graphicx}
\usepackage{cite}
\usepackage{todonotes}
\usepackage{pdfcomment}
\usepackage{tikz}
\usepackage{booktabs}
\usepackage{pgfplots}
\pgfplotsset{compat=newest}
\usepackage{pgfplotstable}
\usepackage{siunitx}

\newlength{\steplabelwidth}
\usepackage{amsthm}
\usepackage{thmtools}
\newtheorem{theorem}{Theorem}

\newtheorem{corollary}{Corollary}
\newtheorem*{problem*}{Problem Formulation}
\newtheorem{proposition}{Proposition}
\newtheorem{remark}{Remark}
\theoremstyle{definition}
\newtheorem{example}{Example}
\newlength\figurewidth
\newlength\figureheight
\newcommand{\grayarrayrule}{%
  \multicolumn{20}{@{}c@{}}{%
    \color{gray!60}\leaders\hrule height 0.3pt\hfill\kern0pt%
  }%
}
\def\OJlogo{}
\def\seclogo{}
\AtBeginDocument{\definecolor{tmlcncolor}{cmyk}{0.93,0.59,0.15,0.02}\definecolor{NavyBlue}{RGB}{0,86,125}}
\makeatletter

\let\@revised\@empty

\makeatother

\makeatletter

\def\ps@plain{%

  \def\@oddhead{\vbox{\hsize\textwidth\vspace*{-13pt}\vbox to 29pt{\hsize126pt\vspace*{11.5pt}\OJlogo}\hfill\par\vspace*{-2.75pt}\hbox to \textwidth{\vrule width\textwidth height.3pt depth0pt}}}%

  \let\@evenhead\@oddhead

  \def\@oddfoot{\hbox to \textwidth{\hfill\rffont\thepage}}%

  \def\@evenfoot{\hbox to \textwidth{\rffont\thepage\hfill}}%

}

\makeatother
\begin{document}

\title{Ambiguity Analysis and Design of Sparse Arrays via Generalized Vandermonde Rank Conditions}

\receiveddate{}
\reviseddate{}
\accepteddate{}
\publisheddate{}
\doiinfo{}
\markboth{}{Marius Pesavento}
\author{Marius Pesavento,~\IEEEmembership{IEEE Senior Member}
\affil{Technische Universität Darmstadt, 64283 Darmstadt, Germany}
\corresp{Corresponding author: Marius Pesavento (email: pesavento@ieee.org).}
\authornote{This work was supported by the German Federal Ministry of Research, Technology and Space through Project ``Open6GHub+'' under Grant 16KIS2.}
}

\begin{abstract}
Sparse linear arrays obtained by thinning a uniform linear array (ULA)
achieve large effective apertures with a reduced number of physical
sensors and have become a key enabling technology across radar, sonar,
communications, and integrated sensing and communications. The price
of thinning, however, is the emergence of ambiguities in the array
manifold: distinct sets of directions of arrival that produce identical
sensor measurements, precluding unique identification of multiple
sources. Conventional sparse-array design criteria, based on
beampattern shaping or estimation-performance optimization, do not
fully capture how multiple steering vectors interact jointly to produce
such ambiguities. This paper develops a scalable algebraic framework
for the multi-source identifiability analysis of thinned ULAs. By
relating the rank deficiency of the generalized Vandermonde matrix
associated with the sparse steering matrix to that of a thinned
Toeplitz matrix, and further to a rank condition on an augmented
full-ULA steering matrix with prescribed generators, we obtain a
systematic characterization of the ambiguity sets in large sparse
arrays together with constructive design guidelines for
ambiguity-free geometries. Algebraic and numerical examples
demonstrate that the proposed framework characterizes ambiguity sets
at scales well beyond the practical reach of previous sparse-array
design and synthesis methods.
\end{abstract}
\begin{IEEEkeywords}
Generalized Vandermonde matrix, sparse array, thinned array, array design, array synthesis, structured low rank factorization, Kruskal-Rank, spark of generalized Vandermonde, polynomial rooting, rank minimization, structured matrices,  ambiguities, ambiguity sets, identifiability conditions 
\end{IEEEkeywords}
\maketitle
\pagestyle{empty}
\thispagestyle{onlypagenumber}

\pagestyle{onlypagenumber}
\section{Introduction}
Sparse arrays have become a key enabling technology in modern sensing and
communication systems, including automotive and airborne radar
\cite{9429942Sun}, UAV-based sensing systems \cite{11318848UAVsensing},
mmWave/THz massive MIMO communications \cite{10476611pipi}, integrated sensing
and communication (ISAC) systems~\cite{10942827RajamaekiPal}, \cite{10930389Li}, sonar \cite{10368436wage},
medical ultrasound, radio-astronomical aperture synthesis~\cite{8988146Syeda},
and seismic exploration~\cite{1449208capon}, where large effective apertures
are required under tight hardware, cost, and power
constraints~\cite{amin2024sparse,pesaventoThreeMoreDecades2023,10636507liu}.
Sparse linear arrays obtained by thinning uniform linear arrays (ULAs) are
particularly attractive, as they realize large effective apertures with a
substantially reduced number of physical
sensors~\cite{5456168PalPP,5164990Oliveri,5993514Rocca}. For a fixed sensor
count, they provide significantly larger apertures than ULAs; for a fixed
aperture, they reduce hardware cost and complexity. Furthermore, by exploiting
the virtual difference coarray, sparse arrays offer increased degrees of
freedom, permitting the identification of more sources than physical sensors,
while alleviating mutual coupling effects associated with densely packed
arrays~\cite{7458883LiuPP,7458907LiuPP,8061016WangAmin,10835190gini}.

In its simplest form, a sparse linear array is constructed from an underlying
ULA by omitting a subset of sensors while retaining the uniform sampling
grid, resulting in a non-uniform sensor placement on an originally uniform
grid. The resulting large effective aperture sharpens the main lobe and
enhances the angular resolution in direction-of-arrival (DoA) estimation and
beamforming. At the same time, the increased lags between adjacent sensors
raise the sidelobe levels~\cite{10477080Friedlander} and, in the narrowband
case, violate the half-wavelength inter-element spacing postulated by the
spatial sampling theorem, giving rise to ambiguities in the array
manifold~\cite{abramovichIdentifiabilityManifoldAmbiguity1999}.
\subsection{Related work}
Conventional design and analysis tools for sparse arrays predominantly assess
performance through single-source or few-source models. Beampattern-based
criteria are inherently single-source and capture only the magnitude of the
response, neglecting the phase relationships among multiple steering
vectors~\cite{10249750Zhao,5164990Oliveri,8061016WangAmin, wu_sparse_2024}.
 
Cram\'er-Rao
bound (CRB) driven sensor
selection~\cite{10890202Ids,10094942Ids,9398546Bilik}, in turn, optimizes
estimation variance but typically presumes single- or few-source scenarios
to avoid strong scenario dependence. Analytic thinning techniques use coarray-based criteria such as the effective aperture, the degrees of freedom (DoFs), hole-freeness or long
consecutive lags, and mutual coupling, provide only surrogate measures that
yield lower bounds on the number of unquely identifiable sources~\cite{10835190gini}.

As a result, none of these criteria fully characterize how multiple
steering vectors interact jointly and under which conditions the array manifold matrix drops rank, which is a limitation that becomes critical for sparse and thinned
arrays, where non-uniform sampling can introduce structured dependencies
that are invisible to single-source or magnitude-only metrics. This
motivates the need for ambiguity-aware, multi-source analysis tools.

This paper is devoted to the multi-source analysis of sparse linear arrays.
In multi-source scenarios, the sensor measurements contain a superposition of
contributions from signals impinging on the array from multiple directions,
and higher-order ambiguities, i.e., sets of DoA parameters that yield identical
measurements and therefore preclude unique DoA estimation. 

In the seminal work of Wax
and Ziskind~\cite{waxUniqueLocalizationMultiple1989} it was shown that, for
uncorrelated sources, the maximum number of sources that can be uniquely
identified from sensor-array measurements is governed by the Kruskal rank of
the array steering matrix, i.e., the largest number~$k$ such that any~$k$
steering vectors in the field of view are linearly independent.

Building on
tools from differential geometry, Manikas and
Proukakis~\cite{manikasModelingEstimationAmbiguities1998} derived sufficient
conditions for the existence of higher-order ambiguities, implying that
every thinned ULA suffers from
ambiguities~\cite{abramovichIdentifiabilityManifoldAmbiguity1999,
abramovichStabilityManifoldAmbiguity2000}, and characterized so-called
uniform ambiguity sets, whose DoAs correspond to a uniform partitioning of the field-of-view in the electrical angle (direction cosine). A second class
of non-uniform ambiguities arises only in symmetric thinned
arrays~\cite{Manikas2004DifferentialGeometryArrayProcessing,
dowlutExtendedAmbiguityCriterion2002}. 

More recently,
in~\cite{matterAmbiguitiesDoAEstimation2022} we proposed a computational
framework that systematically characterizes arbitrary ambiguity sets by
directly searching for sets of directions at which the array steering matrix
becomes rank-deficient, formulating the problem as a structured algebraic
problem solvable via mixed-integer optimization. The framework enables the
systematic enumeration of ambiguity sets, including continuous families of
ambiguities, for a given array geometry, rather than relying on ad hoc,
case-by-case identification. However, the computational complexity of the
mixed-integer formulation does not scale well with the array size, and in
practice the analysis is limited to comparatively small arrays \cite{10477080Friedlander}.
\subsection{Contributions}
In this paper, we introduce a novel scalable framework for identifying
ambiguities in thinned ULAs based on a conventient thinned Toeplitz parameterization of the sparse array
manifold. The framework is computationally efficient and enables, for the
first time, the systematic characterization of all ambiguities of large
sparse arrays, making it practically applicable to the multi-source design
of sparse uniform linear arrays.

Beyond array processing, the underlying mathematical problem is of
independent interest. It amounts to determining the spark and the
corresponding generator sets of generalized Vandermonde matrices obtained by
deleting prescribed rows from a full Vandermonde matrix \cite{Heineman1929GeneralizedVandermonde},\cite{buck_generalized_1992}, \cite{SchlickeweiViola2000GeneralizedVandermonde}. Such problems arise
not only in super-resolution spetral estimation \cite{6576276Recht}, \cite{5713264BajwaEldar},\cite{10177164MulletiEldar}, but
also in sparse recovery \cite{books/daglib/0036092Rauhut}, finite-rate-of-innovation sampling \cite{7744614Ye},
algebraic coding theory \cite{1336783Lacan}, sparse polynomial interpolation \cite{joszSparsePolynomialInterpolation2019}, moment problems
\cite{mourrainPolynomialExponentialDecomposition2018}, the observability and controllability in linear systems \cite{fusterSabaterJointCriterion1991}, and multiuser communication \cite{6510017Cardoso}, \cite{771047Scalione}. In all of these
settings, rank drops of structured Vandermonde-type matrices determine
uniqueness, identifiability, or the existence of ambiguous parameter
configurations. Thus, the results developed in this paper provide both a
practical tool for sparse-array design and a contribution to the broader
theory of structured low-rank  and generalized Vandermonde matrices.

The major contributions of this paper are as follows:
\begin{itemize}
\item By introducing a convenient null-space characterization for the
thinned array, we relate the rank drop of the generalized Vandermonde matrix
associated with the sparse array steering matrix to the rank drop of a
thinned Toeplitz matrix; see Theorem~\ref{thm:rank_condition}. This allows
the characterization of ambiguity sets for sparse arrays of arbitrary
aperture size, number of sparse array elements, and assumed number of
sources.
\item The proposed criterion provides a scalable approach to the analysis
of ambiguities in thinned arrays and yields new insights into their
formation. In particular, it relates the rank condition of the array
steering matrix of the thinned array to the rank condition of an augmented
steering matrix of a full ULA; see
Corollary~\ref{crl:corollary5}.
\item Building on these criteria, we develop guidelines for the design of
ambiguity-free thinned arrays in the form of simple necessary conditions on
the sensors to be retained or removed in the sparse layout. These conditions can significantly reduce the combinatorial search space 	in sparse array designs.
\item We provide extensive examples for the analysis and design of thinned
arrays and for the characterization of their ambiguity sets, using both
symbolic computations and numerical optimization methods.
\end{itemize}
\subsection{Paper structure}
The remainder of the paper is organized as follows. Section~\ref{sec:ProblemFormulation} introduces the ambiguity problem in sparse linear arrays. Section~\ref{sec:MainResults} presents the main results, providing an algebraic characterization of the ambiguities and deriving insights into the structure of ambiguity sets in thinned arrays. As a fundamental by-product, the rank condition on the steering matrix of the thinned array is shown to be equivalent to a rank condition on an augmented steering matrix of a full ULA, revealing a structural link between sparse and fully populated arrays. Building on these insights, Section~\ref{sec:ArrayDesign} develops practical guidelines for sensor selection that substantially reduce the search space of the underlying combinatorial subset selection problem. Section~\ref{sec:Examples} illustrates the theory on representative array configurations, fully characterizing the ambiguity sets for a given number of sources and applying the proposed design principles to construct sparse arrays with favorable ambiguity behavior. Section~\ref{sec:ConclusionAndFutureWork} concludes the paper and outlines open problems that can be approached using the analysis tools developed here.
\subsection{Notation}
Scalars, vectors, and matrices are denoted by $x$, $\boldsymbol{x}$, and
$\boldsymbol{X}$, respectively. The quantities $x_i$ anc $({\boldsymbol X})_{ij}$ denote
the $i$-th element of $\boldsymbol{x}$ and the $(i,j)$-th element of
$\boldsymbol{X}$, respectively, and $\boldsymbol{x}_j$ and $[\boldsymbol{X} ]_{:,j}$ denotes the $j$-th
column of $\boldsymbol{X}$. The operators $\mathcal{R}(\cdot)$ and
$\mathcal{N}(\cdot)$ denote the range and null space of a matrix,
respectively, and $\dim(\cdot)$ denotes the dimension of a subspace. The
matrix functions $\operatorname{rank}(\cdot)$ and $\det(\cdot)$ denote the
rank and determinant of a matrix, respectively, and $\operatorname{Re}(\cdot)$
and $\operatorname{Im}(\cdot)$ denote the real and imaginary parts of a
scalar, vector, or matrix, respectively. The Vandermonde matrix of dimension
$M \times L$ with generators $z_1, \ldots, z_L$ is denoted by
$\boldsymbol{A}_M(z_1, \ldots, z_L)$. Furthermore, $ \mathbb{B}^{p \times q}$, $ \mathbb{Z}^{p \times q}$, $\mathbb{R}^{p \times q}$, and $ \mathbb{C}^{p \times q}$ denote the set of binaray, integer, real-valued and complex-valued $(p \times q)$-matrices, respectively.
\section{Ambiguities in sparse arrays}\label{sec:ProblemFormulation}
The array response matrix of a thinned uniform linear array (ULA) is a
generalized Vandermonde matrix obtained from a full Vandermonde matrix with
distinct generators on the unit circle by removing the rows corresponding
to the missing sensor positions.\footnote{The theory of generalized
Vandermonde matrices developed in this paper extends naturally to arbitrary
generators, not necessarily located on the unit circle.}

Let ${\boldsymbol A}_M(z_1,\ldots,z_L) \in \mathbb{C}^{M \times L}$ denote
the steering matrix of a full ULA with $M$ elements and $L$ sources. It is
a Vandermonde matrix with distinct generators $z_i = e^{{\mathsf j}\theta_i}$,
$i = 1,\ldots,L$, where $\theta_i \in [-\pi,\pi)$, so that
$({\boldsymbol A}_M)_{m,i} = z_i^{m-1}$. In the array processing context,
$\theta$ denotes the spatial frequency (or electrical angle), given by
$\theta = \tfrac{2\pi d}{\lambda} \cos\vartheta$, where $\vartheta$ is the azimuth
angle, $\lambda$ the wavelength, and $d$ the inter-element spacing of the
underlying ULA. A common choice is $d \leq \lambda/2$, which ensures a
bijective mapping from the azimuth angle $\vartheta$ to the spatial frequency
$\theta \in [-\pi,\pi)$.

Furthermore, let  ${\boldsymbol A}_{\mathcal{I}}(z_1,\ldots,z_L) \in
\mathbb{C}^{M_{\mathcal{I}} \times L}$ denote the steering matrix of the
thinned ULA with the same total aperture $M$ but only
$M_{\mathcal{I}} < M$ sensors retained. The index set
\begin{equation}\label{def:DefIndexSet}
\mathcal{I} = \{m_1, m_2, \dots, m_{M_{\mathcal{I}}}\},
\quad 1\! =\! m_1\! <\! m_2\! < \!\cdots\! <\! m_{M_{\mathcal{I}}}\! = \! M
\end{equation}
contains the indices of the retained sensors in the full ULA. The
corresponding selection matrix ${\boldsymbol T}_{\mathcal{I}}$ is defined as
\begin{equation}\label{def:DefSelectionMatrix}
{\boldsymbol T}_{\mathcal{I}} \in \mathbb{B}^{M_{\mathcal{I}} \times M},
\qquad
\big({\boldsymbol T}_{\mathcal{I}}\big)_{k,m} =
\begin{cases}
1, & m = m_k,\\[4pt]
0, & \text{otherwise},
\end{cases}
\end{equation}
so that
\begin{equation}\label{def:DefAtilde}
{\boldsymbol A}_{\mathcal{I}}(z_1,\ldots,z_L)
= {\boldsymbol T}_{\mathcal{I}}\,{\boldsymbol A}_M(z_1,\ldots,z_L).
\end{equation}

The Vandermonde matrix ${\boldsymbol A}_M(z_1,\ldots,z_L)$ admits a
convenient characterization of its range as the null space of a structured
Toeplitz matrix \cite{57542Stoica}. Specifically, there exists a Toeplitz matrix
${\mathcal{T}}(\boldsymbol{g}) \in \mathbb{C}^{(M-L)\times M}$ with
parameter vector
$\boldsymbol{g} = [g_0, g_1,\ldots,g_L]^{\mathsf T} \in \mathbb{C}^{L+1}$
such that ${\cal R}({\boldsymbol A}_M) = {\cal N}({\mathcal{T}}(\boldsymbol{g}))$, i.e.,
\begin{equation}\label{eq:subspace_property}
{\mathcal{T}}(\boldsymbol{g})\,{\boldsymbol A}_M(z_1,\ldots,z_L)
= {\boldsymbol 0}.
\end{equation}
For even source number $L = 2K$, the Toeplitz matrix is given
in~\eqref{def:ToeplitzEven} at the top of the next page, with complex
coefficients $g_0, g_1, \ldots, g_{K-1}$ and real coefficient $g_K$. For
odd source number $L = 2K+1$, it is given in~\eqref{def:ToeplitzOdd}, with
complex coefficients $g_0, g_1, \ldots, g_K$.
\begin{figure*}[!t]
\normalsize
\begin{align}\label{def:ToeplitzEven}
\overset{
  \raisebox{3.0ex}{%
    $\begin{gathered}
      \text{\normalsize \textcolor{gray}{even:}}\\[-1pt]
      {\normalsize \textcolor{gray}{L\!=\!2K}}
    \end{gathered}$%
  }
}
{
{\mathcal{T}}\!(\boldsymbol{g})}& \!\!= \!\!
{
\setlength{\arraycolsep}{2pt}
\renewcommand{\arraystretch}{0.75}
\left[
\begin{array}{cccccccccccccccccccc}
\multicolumn{1}{c}{\scriptscriptstyle \textcolor{red}{1}}
& \multicolumn{1}{c}{\scriptscriptstyle \textcolor{red}{2}}
& \multicolumn{1}{c}{\scriptscriptstyle \textcolor{red}{3}}
& \multicolumn{1}{c}{\scriptscriptstyle \textcolor{red}{4}}
& \multicolumn{1}{c}{\scriptscriptstyle \textcolor{red}{\cdots}}
& \multicolumn{1}{c}{\scriptscriptstyle \textcolor{red}{K\text{\!--\!}2}}
& \multicolumn{1}{c}{\scriptscriptstyle \textcolor{red}{K\text{\!--\!}1}}
& \multicolumn{1}{c}{\scriptscriptstyle \textcolor{red}{K}}
& \multicolumn{1}{c}{\scriptscriptstyle \textcolor{red}{K\text{\!+\!}1}}
& \multicolumn{1}{c}{\scriptscriptstyle \textcolor{red}{K\text{\!+\!}2}}
& \multicolumn{1}{c}{\scriptscriptstyle \textcolor{red}{K\text{\!+\!}3}}
& \multicolumn{1}{c}{\scriptscriptstyle \textcolor{red}{K\text{\!+\!}4}}
& \multicolumn{1}{c}{\scriptscriptstyle\textcolor{red}{\cdots}}
& \multicolumn{1}{c}{\scriptscriptstyle \textcolor{red}{L\text{\!--\!}2}}
& \multicolumn{1}{c}{\scriptscriptstyle \textcolor{red}{L\text{\!--\!}1}}
& \multicolumn{1}{c}{\scriptscriptstyle \textcolor{red}{L}}
& \multicolumn{1}{c}{\scriptscriptstyle\textcolor{red}{L\text{\!+\!}1}}
& \multicolumn{1}{c}{\scriptscriptstyle \textcolor{red}{L\text{\!+\!}2}}
& \multicolumn{1}{c}{\scriptscriptstyle \textcolor{red}{\cdots}}
& \multicolumn{1}{c}{\scriptscriptstyle \textcolor{red}{M}} \\[-6pt]
\grayarrayrule \\[-1pt]
g_0 & g_1 & g_2 & g_3 & \cdots
& g_{K-3} & g_{K-2} & g_{K-1} & g_K
& g_{K-1}^* & g_{K-2}^* & g_{K-3}^*
& \cdots & g_3^* & g_2^* & g_1^* & g_0^*
& 0 & \cdots & 0
\\
0 & g_0 & g_1 & g_2 & g_3
& \cdots & g_{K-3} & g_{K-2} & g_{K-1}
& g_K & g_{K-1}^* & g_{K-2}^*
& g_{K-3}^* & \cdots & g_3^* & g_2^* & g_1^*
& g_0^* & \ddots & \vdots
\\
\vdots & \ddots & \ddots & \ddots & \ddots
& \ddots & \ddots & \ddots & \ddots
& \ddots & \ddots & \ddots
& \ddots & \ddots & \ddots & \ddots & \ddots
& \ddots & \ddots & 0
\\
0 & \cdots & 0 & g_0 & g_1
& g_2 & g_3 & \cdots & g_{K-3}
& g_{K-2} & g_{K-1} & g_K
& g_{K-1}^* & g_{K-2}^* & g_{K-3}^*
& \cdots & g_3^* & g_2^* & g_1^* & g_0^*
\\[-4pt]
\grayarrayrule \\[-2pt]
\multicolumn{1}{c}{\scriptscriptstyle \textcolor{red}{1}}
& \multicolumn{1}{c}{\scriptscriptstyle \textcolor{red}{\cdots}}
& \multicolumn{1}{c}{\scriptscriptstyle \textcolor{red}{M\text{\!--\!}L\text{\!--\!}1}}
& \multicolumn{1}{c}{\scriptscriptstyle \textcolor{red}{M\text{\!--\!}L}}
& \multicolumn{1}{c}{\scriptscriptstyle \textcolor{red}{M\text{\!--\!}L\text{\!+\!}1}}
& \multicolumn{1}{c}{\scriptscriptstyle \textcolor{red}{M\text{\!--\!}L\text{\!+\!}2}}
& \multicolumn{1}{c}{\scriptscriptstyle \textcolor{red}{M\text{\!--\!}L\text{\!+\!}3}}
& \multicolumn{1}{c}{\scriptscriptstyle \textcolor{red}{\cdots}}
& \multicolumn{1}{c}{\scriptscriptstyle \textcolor{red}{M\text{\!--\!}K\text{\!--\!}2}}
& \multicolumn{1}{c}{\scriptscriptstyle \textcolor{red}{M\text{\!--\!}K\text{\!--\!}1}}
& \multicolumn{1}{c}{\scriptscriptstyle \textcolor{red}{M\text{\!--\!}K}}
& \multicolumn{1}{c}{\scriptscriptstyle \textcolor{red}{M\text{\!--\!}K\text{\!--\!}1}}
& \multicolumn{1}{c}{\scriptscriptstyle \textcolor{red}{M\text{\!--\!}K\text{\!+\!}2}}
& \multicolumn{1}{c}{\scriptscriptstyle \textcolor{red}{M\text{\!--\!}K\text{\!+\!}3}}
& \multicolumn{1}{c}{\scriptscriptstyle \textcolor{red}{M\text{\!--\!}K\text{\!+\!}4}}
& \multicolumn{1}{c}{\scriptscriptstyle \textcolor{red}{\cdots}}
& \multicolumn{1}{c}{\scriptscriptstyle \textcolor{red}{M\text{\!--\!}3}}
& \multicolumn{1}{c}{\scriptscriptstyle \textcolor{red}{M\text{\!--\!}2}}
& \multicolumn{1}{c}{\scriptscriptstyle \textcolor{red}{M\text{\!--\!}1}}
& \multicolumn{1}{c}{\scriptscriptstyle \textcolor{gray}{M}}
\end{array}
\right]
}\\
\label{def:ToeplitzOdd}
\overset{
  \raisebox{3.0ex}{%
    $\begin{gathered}
      \text{\normalsize \textcolor{gray}{odd:}}\\[-1pt]
      {\normalsize \textcolor{gray}{L\!=\!2K\!+\!1}}
    \end{gathered}$%
  }
}
{
{\mathcal{T}}\!(\boldsymbol{g})} & \! \!=\! \!
{
\setlength{\arraycolsep}{2pt}
\renewcommand{\arraystretch}{0.75}
\left[
\begin{array}{cccccccccccccccccccc}
\multicolumn{1}{c}{\scriptscriptstyle \textcolor{red}{1}}
& \multicolumn{1}{c}{\scriptscriptstyle \textcolor{red}{2}}
& \multicolumn{1}{c}{\scriptscriptstyle \textcolor{red}{3}}
& \multicolumn{1}{c}{\scriptscriptstyle \textcolor{red}{4}}
& \multicolumn{1}{c}{\scriptscriptstyle \textcolor{red}{\cdots}}
& \multicolumn{1}{c}{\scriptscriptstyle \textcolor{red}{K\text{\!--\!}2}}
& \multicolumn{1}{c}{\scriptscriptstyle \textcolor{red}{K\text{\!--\!}1}}
& \multicolumn{1}{c}{\scriptscriptstyle \textcolor{red}{K}}
& \multicolumn{1}{c}{\scriptscriptstyle \textcolor{red}{K\text{\!+\!}1}}
& \multicolumn{1}{c}{\scriptscriptstyle \textcolor{red}{K\text{\!+\!}2}}
& \multicolumn{1}{c}{\scriptscriptstyle \textcolor{red}{K\text{\!+\!}3}}
& \multicolumn{1}{c}{\scriptscriptstyle \textcolor{red}{K\text{\!+\!}4}}
& \multicolumn{1}{c}{\scriptscriptstyle\textcolor{red}{\cdots}}
& \multicolumn{1}{c}{\scriptscriptstyle \textcolor{red}{L\text{\!--\!}2}}
& \multicolumn{1}{c}{\scriptscriptstyle \textcolor{red}{L\text{\!--\!}1}}
& \multicolumn{1}{c}{\scriptscriptstyle \textcolor{red}{L}}
& \multicolumn{1}{c}{\scriptscriptstyle\textcolor{red}{L\text{\!+\!}1}}
& \multicolumn{1}{c}{\scriptscriptstyle \textcolor{red}{L\text{\!+\!}2}}
& \multicolumn{1}{c}{\scriptscriptstyle \textcolor{red}{\cdots}}
& \multicolumn{1}{c}{\scriptscriptstyle \textcolor{red}{M}} \\[-6pt]
\grayarrayrule \\[-1pt]
g_0 & g_1 & g_2 & g_3 & \cdots
& g_{K-3} & g_{K-2} & g_{K-1} & g_K
& g_K^* & g_{K-1}^* & g_{K-2}^*
& \cdots & g_3^* & g_2^* & g_1^* & g_0^*
& 0 & \cdots & 0
\\
0 & g_0 & g_1 & g_2 & g_3
& \cdots & g_{K-3} & g_{K-2} & g_{K-1}
& g_K & g_K^* & g_{K-1}^*
& g_{K-2}^* & \cdots & g_3^* & g_2^* & g_1^*
& g_0^* & \ddots & \vdots
\\
\vdots & \ddots & \ddots & \ddots & \ddots
& \ddots & \ddots & \ddots & \ddots
& \ddots & \ddots & \ddots
& \ddots & \ddots & \ddots & \ddots & \ddots
& \ddots & \ddots & 0
\\
0 & \cdots & 0 & g_0 & g_1
& g_2 & g_3 & \cdots & g_{K-3}
& g_{K-2} & g_{K-1} & g_K
& g_K^* & g_{K-1}^* & g_{K-2}^*
& \cdots & g_3^* & g_2^* & g_1^* & g_0^*
\\[-4pt]
\grayarrayrule \\[-2pt]
\multicolumn{1}{c}{\scriptscriptstyle \textcolor{red}{1}}
& \multicolumn{1}{c}{\scriptscriptstyle \textcolor{red}{\cdots}}
& \multicolumn{1}{c}{\scriptscriptstyle \textcolor{red}{M\text{\!--\!}L\text{\!--\!}1}}
& \multicolumn{1}{c}{\scriptscriptstyle \textcolor{red}{M\text{\!--\!}L}}
& \multicolumn{1}{c}{\scriptscriptstyle \textcolor{red}{M\text{\!--\!}L\text{\!+\!}1}}
& \multicolumn{1}{c}{\scriptscriptstyle \textcolor{red}{M\text{\!--\!}L\text{\!+\!}2}}
& \multicolumn{1}{c}{\scriptscriptstyle \textcolor{red}{M\text{\!--\!}L\text{\!+\!}3}}
& \multicolumn{1}{c}{\scriptscriptstyle \textcolor{red}{\cdots}}
& \multicolumn{1}{c}{\scriptscriptstyle \textcolor{red}{M\text{\!--\!}K\text{\!--\!}3}}
& \multicolumn{1}{c}{\scriptscriptstyle \textcolor{red}{M\text{\!--\!}K\text{\!--\!}2}}
& \multicolumn{1}{c}{\scriptscriptstyle \textcolor{red}{M\text{\!--\!}K\text{\!--\!}1}}
& \multicolumn{1}{c}{\scriptscriptstyle \textcolor{red}{M\text{\!--\!}K}}
& \multicolumn{1}{c}{\scriptscriptstyle \textcolor{red}{M\text{\!--\!}K\text{\!+\!}1}}
& \multicolumn{1}{c}{\scriptscriptstyle \textcolor{red}{M\text{\!--\!}K\text{\!+\!}2}}
& \multicolumn{1}{c}{\scriptscriptstyle \textcolor{red}{M\text{\!--\!}K\text{\!+\!}3}}
& \multicolumn{1}{c}{\scriptscriptstyle \textcolor{red}{\cdots}}
& \multicolumn{1}{c}{\scriptscriptstyle \textcolor{red}{M\text{\!--\!}3}}
& \multicolumn{1}{c}{\scriptscriptstyle \textcolor{red}{M\text{\!--\!}2}}
& \multicolumn{1}{c}{\scriptscriptstyle \textcolor{red}{M\text{\!--\!}1}}
& \multicolumn{1}{c}{\scriptscriptstyle \textcolor{red}{M}}
\end{array}
\right]
}
\end{align}
\hrulefill
\vspace*{4pt}
\end{figure*}
The coefficients $g_0, g_1,\ldots,g_L$ provide an alternative
parameterization of the generator set $\{z_1,\ldots,z_L\}$. Defining the
centro-Hermitian polynomial
\begin{equation}\label{eq:Polynomial}
P_{\boldsymbol{g}}(z) = \sum_{\ell=0}^{L} g_\ell\, z^{\ell}
\end{equation}
with conjugate-symmetric coefficients $g_\ell = g_{L-\ell}^*$, the
generators $z_1,\ldots,z_L$ are precisely the roots of
$P_{\boldsymbol{g}}(z)$:
\begin{equation}\label{eq:PolynomialRelation}
P_{\boldsymbol{g}}(z) = 0
\quad\Longleftrightarrow\quad
z \in \{z_1,\ldots,z_L\}.
\end{equation}
The conjugate symmetry of the coefficients is a necessary consequence of
the assumption that all roots lie on the unit circle. The converse,
however, does not hold: conjugate-symmetric coefficients do not in general
guarantee that the roots are located on the unit circle.

Since ${\boldsymbol A}_M(z_1,\ldots,z_L)$ is a Vandermonde matrix with
distinct generators, it has full column rank $L$ whenever $M \geq L$. For
the thinned array, however, this guarantee no longer holds: the
row-subsampled matrix
${\boldsymbol A}_{\mathcal{I}}(z_1,\ldots,z_L) =
{\boldsymbol T}_{\mathcal{I}}\,{\boldsymbol A}_M(z_1,\ldots,z_L)$
may become rank-deficient even when ${\boldsymbol A}_M(z_1,\ldots,z_L)$
itself has full rank. Characterizing the conditions under which such rank
deficiencies occur is the central objective of this paper. In essence, this
amounts to determining the \emph{spark} of the generalized Vandermonde
matrix family ${\boldsymbol A}_{\mathcal{I}}(z_1,\ldots,z_L)$, i.e., the
smallest number of columns that can be made linearly dependent by an
admissible choice of distinct generators $z_1,\ldots,z_L$ on the unit
circle. The central problem considered in this paper is stated as follows.

\begin{problem*}[Ambiguities in sparse arrays]\label{prob:main}
Given a thinned ULA with index set $\mathcal{I}$ and $M_{\mathcal{I}} \geq L$
retained sensors of a $M$-element full ULA, and a number of sources $L$, determine all generator sets
$\{z_1,\ldots,z_L\}$ on the unit circle for which the steering matrix
\begin{equation*}
{\boldsymbol A}_{\mathcal{I}}(z_1,\ldots,z_L)
= {\boldsymbol T}_{\mathcal{I}}\,{\boldsymbol A}_M(z_1,\ldots,z_L)
\end{equation*}
loses rank. Equivalently, using the parameterization in terms of the
centro-Hermitian polynomial coefficients, determine all coefficient
vectors $\boldsymbol{g} = [g_0, g_1,\ldots,g_L]^{\mathsf T}$ for which
\begin{equation}\label{eq:CondRankDef}
\operatorname{rank}\!\big({\boldsymbol A}_{\mathcal{I}}(z_1,\ldots,z_L)\big)
< L,
\end{equation}
where $z_1,\ldots,z_L$ are the roots of $P_{\boldsymbol{g}}(z)$.
\end{problem*} 
\section{Algebraic charcterization of ambiguity sets}\label{sec:MainResults}
In this section we present the main results of the paper, namely the
algebraic characterization of ambiguity sets in sparse arrays for an
arbitrary number of sources. To address the ambiguity problem in sparse
arrays, we equivalently seek conditions under which the null space of
${\boldsymbol A}_{\mathcal{I}}^{\mathsf{T}}(z_1,\ldots,z_L)$ is nontrivial, that is,
\begin{equation}\label{eq:CondRankDefA_I}
\dim\!\big({\cal N}\big({\boldsymbol A}_{\mathcal{I}}^{\mathsf{T}}(z_1,\ldots,z_L)\big)\big)
= M_{\mathcal{I}} - L + D,
\end{equation}
where $D \geq 1$ is an integer that quantifies the degree of rank
deficiency.

In this case, there exists a full column rank matrix
${\boldsymbol K}_{\mathcal{I}} \in
\mathbb{C}^{M_{\mathcal{I}} \times (M_{\mathcal{I}} - L + D)}$ such that
\begin{equation}\label{eq:subspace_K_I}
{\boldsymbol K}_{\mathcal{I}}^{\mathsf{T}}\,
{\boldsymbol A}_{\mathcal{I}}(z_1,\ldots,z_L) = {\boldsymbol 0},
\end{equation}
which, using~\eqref{def:DefAtilde}, translates to
\begin{equation}\label{eq:subspace_row_sparse_K}
{\boldsymbol K}_{\mathcal{I}}^{\mathsf{T}}\,{\boldsymbol T}_{\mathcal{I}}\,
{\boldsymbol A}_M(z_1,\ldots,z_L)
= {\boldsymbol K}^{\mathsf{T}}\,{\boldsymbol A}_M(z_1,\ldots,z_L)
= {\boldsymbol 0},
\end{equation}
where
\begin{equation}\label{def:Kmatrix}
{\boldsymbol K}
= {\boldsymbol T}_{\mathcal{I}}^{\mathsf{T}}\,{\boldsymbol K}_{\mathcal{I}}
\in \mathbb{C}^{M \times (M_{\mathcal{I}} - L + D)}
\end{equation}
is, by construction, a row-sparse matrix whose row-sparsity pattern matches
the thinning pattern of the sparse array. That is, the rows of
${\boldsymbol K}$ indexed by the missing sensor positions of the ULA vanish.
Let ${\boldsymbol T}_{\bar{\mathcal{I}}} \in
\mathbb{B}^{(M - M_{\mathcal{I}}) \times M}$ denote the selection matrix
defined according to~\eqref{def:DefSelectionMatrix} for the complementary
index set
\begin{equation}\label{def:DefComplementaryIndexSet}
\bar{\mathcal{I}} = \{1, 2, \ldots, M\} \setminus \mathcal{I},
\end{equation}
which contains the indices $\bar{m}_1, \ldots, \bar{m}_{M - M_{\mathcal{I}}}$
of the omitted sensors. Left-multiplication by
${\boldsymbol T}_{\bar{\mathcal{I}}}$ retains exactly those rows
corresponding to the omitted sensors, so that the row-sparsity property of
${\boldsymbol K}$ reads
\begin{equation}\label{eq:row-sparsity-propoerty}
{\boldsymbol K}_{\bar{\mathcal{I}}}
= {\boldsymbol T}_{\bar{\mathcal{I}}}\,{\boldsymbol K} = {\boldsymbol 0}.
\end{equation}
Comparing~\eqref{eq:subspace_row_sparse_K}
with~\eqref{eq:subspace_property}, both ${\boldsymbol K}$ and
${\mathcal{T}}^{\mathsf{T}}(\boldsymbol{g})$ lie in the left null space of
${\boldsymbol A}_M(z_1,\ldots,z_L)$. Since the columns of
${\mathcal{T}}^{\mathsf{T}}(\boldsymbol{g})$ form a basis of this null
space, it follows that
\begin{equation}\label{eq:nullspace_K_G}
{\cal R}({\boldsymbol K})
\subseteq {\cal R}({\mathcal{T}}^{\mathsf{T}}(\boldsymbol{g}))
= {\cal N}({\boldsymbol A}_M^{\mathsf{T}}(z_1,\ldots,z_L)).
\end{equation}
In other words, the columns of the full column-rank, row-sparse matrix
${\boldsymbol K}$ can be expressed as linear combinations of the columns of
${\mathcal{T}}^{\mathsf{T}}(\boldsymbol{g})$. Hence, there exists a full
column-rank matrix
${\boldsymbol C} \in \mathbb{C}^{(M - L) \times (M_{\mathcal{I}} - L + D)}$
such that
\begin{equation}\label{eq:K_GC}
{\boldsymbol K}
= {\mathcal{T}}^{\mathsf{T}}(\boldsymbol{g})\,{\boldsymbol C}.
\end{equation}
It is worth emphasizing that although the columns of ${\boldsymbol K}$ are
contained in the subspace spanned by the columns of
${\mathcal{T}}^{\mathsf{T}}(\boldsymbol{g})$, the two matrices have very
different structure: ${\mathcal{T}}^{\mathsf{T}}(\boldsymbol{g})$ is
Toeplitz, whereas ${\boldsymbol K}$ is row-sparse.

To formalize this relationship and characterize the ambiguities of the
thinned ULA, we state the following theorem.
\begin{theorem}[Rank Condition for Generalized Vandermonde Matrices]\label{thm:rank_condition}
Let ${\boldsymbol A}_{\mathcal{I}}(z_1,\ldots,z_L) \in
\mathbb{C}^{M_{\mathcal{I}} \times L}$ be a generalized Vandermonde matrix
as defined in~\eqref{def:DefAtilde}, with index set $\mathcal{I}$ as
in~\eqref{def:DefIndexSet} and $L \leq M_{\mathcal{I}}$, and let
${\boldsymbol g} = [g_0, \ldots, g_L]^{\mathsf{T}}$ be the coefficient
vector of the centro-Hermitian polynomial $P_{\boldsymbol{g}}(z)$
in~\eqref{eq:Polynomial} whose roots are $z_1, \ldots, z_L$. Then
${\boldsymbol A}_{\mathcal{I}}(z_1,\ldots,z_L)$ is rank deficient if and
only if the complementary submatrix
\begin{equation}\label{def:DefG_0}
{\mathcal{T}}_{\bar{\mathcal{I}}}({\boldsymbol g})
= {\mathcal{T}}({\boldsymbol g})\,{\boldsymbol T}_{\bar{\mathcal{I}}}^{\mathsf{T}}
\in \mathbb{C}^{(M - L) \times (M - M_{\mathcal{I}})}
\end{equation}
is rank deficient, i.e.,
\begin{equation}\label{eq:thm-rank-ieee}
\operatorname{rank}\!\big({\mathcal{T}}_{\bar{\mathcal{I}}}({\boldsymbol g})\big)
= M - M_{\mathcal{I}} - D
\end{equation}
for some integer $D$ with $1 \leq D < L$.
\end{theorem}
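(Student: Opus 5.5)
The plan is to prove the statement by an exact dimension count. I would build an explicit linear isomorphism between the left null space $\mathcal{N}({\boldsymbol A}_{\mathcal{I}}^{\mathsf T})$ and the null space $\mathcal{N}({\mathcal{T}}_{\bar{\mathcal{I}}}^{\mathsf T}({\boldsymbol g}))$. This formalizes the chain \eqref{eq:subspace_K_I}--\eqref{eq:K_GC} and covers every $D$ at once.

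First I would check that ${\mathcal{T}}({\boldsymbol g})$ has full row rank $M-L$. Since $P_{\boldsymbol g}$ has degree exactly $L$, we have $g_L\neq 0$, and hence $g_0=g_L^*\neq 0$. The matrix ${\mathcal{T}}({\boldsymbol g})$ is banded with $g_0$ as the leading nonzero entry of each row, in distinct columns, so its rows are linearly independent. Together with \eqref{eq:subspace_property} and the fact that ${\boldsymbol A}_M$ has rank $L$ (distinct generators, $M\ge M_{\mathcal{I}}\ge L$), this gives the basis property used in \eqref{eq:nullspace_K_G}:
\begin{equation*}
\mathcal{N}({\boldsymbol A}_M^{\mathsf T})=\mathcal{R}({\mathcal{T}}^{\mathsf T}({\boldsymbol g})).
\end{equation*}

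Next I would define the map ${\boldsymbol c}\mapsto {\boldsymbol k}={\boldsymbol T}_{\mathcal{I}}{\mathcal{T}}^{\mathsf T}({\boldsymbol g}){\boldsymbol c}$ on $\mathcal{N}({\mathcal{T}}_{\bar{\mathcal{I}}}^{\mathsf T}({\boldsymbol g}))$ and show that it is a bijection onto $\mathcal{N}({\boldsymbol A}_{\mathcal{I}}^{\mathsf T})$.
\begin{itemize}
\item \emph{Well defined.} The condition ${\mathcal{T}}_{\bar{\mathcal{I}}}^{\mathsf T}({\boldsymbol g}){\boldsymbol c}={\boldsymbol T}_{\bar{\mathcal{I}}}{\mathcal{T}}^{\mathsf T}({\boldsymbol g}){\boldsymbol c}={\boldsymbol 0}$ says that ${\boldsymbol v}={\mathcal{T}}^{\mathsf T}({\boldsymbol g}){\boldsymbol c}$ vanishes on $\bar{\mathcal{I}}$. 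Hence ${\boldsymbol v}={\boldsymbol T}_{\mathcal{I}}^{\mathsf T}{\boldsymbol k}$, and then ${\boldsymbol k}^{\mathsf T}{\boldsymbol A}_{\mathcal{I}}={\boldsymbol v}^{\mathsf T}{\boldsymbol A}_M={\boldsymbol 0}$ by \eqref{eq:subspace_property}.
\item \emph{Injective.} If ${\boldsymbol k}={\boldsymbol 0}$, then ${\boldsymbol v}={\boldsymbol 0}$, and full column rank of ${\mathcal{T}}^{\mathsf T}({\boldsymbol g})$ forces ${\boldsymbol c}={\boldsymbol 0}$.
\item \emph{Surjective.} Given ${\boldsymbol k}\in\mathcal{N}({\boldsymbol A}_{\mathcal{I}}^{\mathsf T})$, the lifted vector ${\boldsymbol T}_{\mathcal{I}}^{\mathsf T}{\boldsymbol k}$ lies in $\mathcal{N}({\boldsymbol A}_M^{\mathsf T})$. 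It therefore equals ${\mathcal{T}}^{\mathsf T}({\boldsymbol g}){\boldsymbol c}$ for some ${\boldsymbol c}$, exactly as in \eqref{eq:K_GC}. Its row sparsity \eqref{eq:row-sparsity-propoerty} is precisely ${\mathcal{T}}_{\bar{\mathcal{I}}}^{\mathsf T}({\boldsymbol g}){\boldsymbol c}={\boldsymbol 0}$.
\end{itemize}

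Equating dimensions via rank--nullity then gives
\begin{equation*}
M_{\mathcal{I}}-\operatorname{rank}({\boldsymbol A}_{\mathcal{I}}) = (M-L)-\operatorname{rank}({\mathcal{T}}_{\bar{\mathcal{I}}}({\boldsymbol g})).
\end{equation*}
If we write $\operatorname{rank}({\boldsymbol A}_{\mathcal{I}})=L-D$, this becomes $\operatorname{rank}({\mathcal{T}}_{\bar{\mathcal{I}}}({\boldsymbol g}))=M-M_{\mathcal{I}}-D$, which is \eqref{eq:thm-rank-ieee}. Since $M_{\mathcal{I}}\ge L$, the matrix ${\mathcal{T}}_{\bar{\mathcal{I}}}({\boldsymbol g})$ has at most as many columns as rows. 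Full rank therefore means rank $M-M_{\mathcal{I}}$, so rank deficiency of ${\boldsymbol A}_{\mathcal{I}}$ ($D\ge1$) is equivalent to rank deficiency of ${\mathcal{T}}_{\bar{\mathcal{I}}}({\boldsymbol g})$. For the upper bound on $D$: the first row of ${\boldsymbol A}_{\mathcal{I}}$ (index $m_1=1$) is all ones, so $\operatorname{rank}({\boldsymbol A}_{\mathcal{I}})\ge1$ and hence $D\le L-1$.

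The main obstacle I expect is the surjectivity step, specifically the claim that $\mathcal{R}({\mathcal{T}}^{\mathsf T}({\boldsymbol g}))$ is the whole of $\mathcal{N}({\boldsymbol A}_M^{\mathsf T})$ rather than a subspace of it. This rests on the full-row-rank argument ($g_0\neq0$) combined with the dimension count $M-L$. I would make that argument explicit rather than citing \cite{57542Stoica}.
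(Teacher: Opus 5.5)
Your proposal is correct and follows the paper's own route: null vectors of ${\boldsymbol A}_{\mathcal{I}}^{\mathsf T}$ are lifted via ${\boldsymbol T}_{\mathcal{I}}^{\mathsf T}$ to row-sparse vectors ${\mathcal{T}}^{\mathsf T}({\boldsymbol g}){\boldsymbol c}$, and the row sparsity is exactly the condition ${\mathcal{T}}_{\bar{\mathcal{I}}}^{\mathsf T}({\boldsymbol g}){\boldsymbol c}={\boldsymbol 0}$. Your explicit isomorphism, with the full-row-rank check via $g_0\neq 0$ spelled out, gives equality of the two rank deficits and both implications at once, whereas the paper derives only $\operatorname{rank}({\mathcal{T}}_{\bar{\mathcal{I}}}({\boldsymbol g}))\le M-M_{\mathcal{I}}-D$ and leaves the converse to ``reversing the argument''.
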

\begin{IEEEproof}
Suppose the thinned Vandermonde matrix ${\boldsymbol A}_{\mathcal{I}}(z_1,\ldots,z_L) \in
\mathbb{C}^{M_{\mathcal{I}} \times L}$ is rank deficient with rank deficit
$D \geq 1$. Then
$\dim\big({\cal R}({\boldsymbol A}_{\mathcal{I}}(z_1,\ldots,z_L))\big)
= L - D$ and, by the rank-nullity theorem,
$\dim\big({\cal N}({\boldsymbol A}_{\mathcal{I}}^{\mathsf{T}}(z_1,\ldots,z_L))\big)
= M_{\mathcal{I}} - L + D$.
Hence there exist full column-rank matrices
${\boldsymbol K}_{\mathcal{I}} \in
\mathbb{C}^{M_{\mathcal{I}} \times (M_{\mathcal{I}} - L + D)}$,
${\boldsymbol K} \in
\mathbb{C}^{M \times (M_{\mathcal{I}} - L + D)}$, and
${\boldsymbol C} \in
\mathbb{C}^{(M - L) \times (M_{\mathcal{I}} - L + D)}$
satisfying~\eqref{def:Kmatrix}, \eqref{eq:nullspace_K_G},
and~\eqref{eq:K_GC}.

Substituting~\eqref{eq:K_GC} into~\eqref{eq:row-sparsity-propoerty} yields
\begin{equation}\label{eq:row-sparsity-propoertyC}
{\boldsymbol K}_{\bar{\mathcal{I}}}
= {\boldsymbol T}_{\bar{\mathcal{I}}}\,
   {\mathcal{T}}^{\mathsf{T}}({\boldsymbol g})\,{\boldsymbol C}
= {\mathcal{T}}_{\bar{\mathcal{I}}}^{\mathsf{T}}({\boldsymbol g})\,
  {\boldsymbol C}
= {\boldsymbol 0},
\end{equation}
so that
${\cal R}({\boldsymbol C}) \subseteq
{\cal N}({\mathcal{T}}_{\bar{\mathcal{I}}}^{\mathsf{T}}({\boldsymbol g}))$.
Since ${\boldsymbol C}$ has full column rank $M_{\mathcal{I}} - L + D$, it
follows that
\begin{equation}
\dim\!\big({\cal N}({\mathcal{T}}_{\bar{\mathcal{I}}}^{\mathsf{T}}({\boldsymbol g}))\big)
\geq M_{\mathcal{I}} - L + D.
\end{equation}
By~\eqref{def:DefG_0},
${\mathcal{T}}_{\bar{\mathcal{I}}}({\boldsymbol g}) \in
\mathbb{C}^{(M - L) \times (M - M_{\mathcal{I}})}$, and the rank-nullity
theorem applied to ${\mathcal{T}}_{\bar{\mathcal{I}}}^{\mathsf{T}}({\boldsymbol g})$
yields
\begin{equation}
\dim\!\big({\cal R}({\mathcal{T}}_{\bar{\mathcal{I}}}({\boldsymbol g}))\big)
+ \dim\!\big({\cal N}({\mathcal{T}}_{\bar{\mathcal{I}}}^{\mathsf{T}}({\boldsymbol g}))\big)
= M - L,
\end{equation}
so that
\begin{equation}
\operatorname{rank}\!\big({\mathcal{T}}_{\bar{\mathcal{I}}}({\boldsymbol g})\big)
\leq (M - L) - (M_{\mathcal{I}} - L + D)
= M - M_{\mathcal{I}} - D.
\end{equation}
Since ${\mathcal{T}}_{\bar{\mathcal{I}}}({\boldsymbol g})$ has at most
$M - M_{\mathcal{I}}$ linearly independent columns, this means that for
$D \geq 1$, ${\mathcal{T}}_{\bar{\mathcal{I}}}({\boldsymbol g})$ is rank
deficient with rank deficit at least $D$. The reverse implication follows
by reversing the argument.
\end{IEEEproof}
Condition~\eqref{eq:thm-rank-ieee} in Theorem~\ref{thm:rank_condition} is
necessary and sufficient for the existence of ambiguities in the thinned
array steering matrix ${\boldsymbol A}_{\mathcal{I}}(z_1,\ldots,z_L)$,
where ${\mathcal{T}}({\boldsymbol g})$ and
${\boldsymbol A}_{\mathcal{I}}(z_1,\ldots,z_L)$ are related
through~\eqref{eq:subspace_property}. The condition states that
ambiguities in the thinned ULA with steering matrix
${\boldsymbol A}_{\mathcal{I}}$ exist if and only if there exists a
coefficient vector ${\boldsymbol g}$ for which the complementary submatrix
${\mathcal{T}}_{\bar{\mathcal{I}}}({\boldsymbol g})$ is rank deficient.
Moreover, condition~\eqref{eq:thm-rank-ieee} can be exploited
constructively to efficiently compute coefficient vectors ${\boldsymbol g}$
at which ${\mathcal{T}}_{\bar{\mathcal{I}}}({\boldsymbol g})$ loses rank.

\begin{remark}[Generators off the unit circle]
Theorem~\ref{thm:rank_condition} does not require the generators to lie on
the unit circle. It holds for generalized Vandermonde matrices
${\boldsymbol A}_{\mathcal{I}}(z_1,\ldots,z_L)$ with distinct generators of
arbitrary nonzero magnitude. In this more general setting, the Toeplitz
matrix ${\mathcal{T}}({\boldsymbol g})$ and its complementary submatrix
${\mathcal{T}}_{\bar{\mathcal{I}}}({\boldsymbol g})$ are built from a
coefficient vector that does not necessarily satisfy the
centro-Hermitian property $g_{\ell} = g_{L-\ell}^*$. The proof carries
over verbatim.
\end{remark}
\subsection{Rotation invariance}
The rank of a generalized Vandermonde matrix
${\boldsymbol A}_{\mathcal{I}}(z_1, \ldots, z_L)$ is invariant under a
common rotation of its generators. That is, if
$\{\grave{z}_1, \ldots, \grave{z}_L\}$ is a generator set at which the
matrix drops rank, then so is $\{\acute{z}_1, \ldots, \acute{z}_L\}$ with
\begin{equation}\label{eq:rotation_of_z}
\acute{z}_\ell = \grave{z}_\ell\, e^{{\mathsf j}\alpha},
\qquad \ell = 1, \ldots, L,
\end{equation}
for any $\alpha \in (-\pi, \pi]$, and
\begin{equation}
\operatorname{rank}\!\big({\boldsymbol A}_{\mathcal{I}}(\acute{z}_1, \ldots, \acute{z}_L)\big)
= \operatorname{rank}\!\big({\boldsymbol A}_{\mathcal{I}}(\grave{z}_1, \ldots, \grave{z}_L)\big).
\end{equation}
This follows from the observation that a common rotation of the generators
of a full Vandermonde matrix amounts to left multiplication by the
nonsingular diagonal matrix
\begin{equation}
{\boldsymbol D}_{2,\alpha} = \operatorname{diag}\!\big(1,\, e^{{\mathsf j}\alpha},\, \ldots,\, e^{{\mathsf j}(M-1)\alpha}\big),
\end{equation}
that is,
\begin{equation}
{\boldsymbol A}_M(\acute{z}_1, \ldots, \acute{z}_L)
= {\boldsymbol D}_{2,\alpha}\,
  {\boldsymbol A}_M(\grave{z}_1, \ldots, \grave{z}_L).
\end{equation} 
For the thinned array, inserting the identity\footnote{Strictly,
${\boldsymbol T}_{\mathcal{I}}^{\mathsf{T}}\,{\boldsymbol T}_{\mathcal{I}} \in
\mathbb{B}^{M \times M}$ is not the identity but the diagonal projector
onto the retained sensor indices $\mathcal{I}$, satisfying
${\boldsymbol T}_{\mathcal{I}}\,{\boldsymbol T}_{\mathcal{I}}^{\mathsf{T}}\,{\boldsymbol T}_{\mathcal{I}}
= {\boldsymbol T}_{\mathcal{I}}$. The insertion is valid because
${\boldsymbol D}_{2,\alpha}$ is diagonal in the standard basis and therefore
commutes with this projector.}
${\boldsymbol T}_{\mathcal{I}}^{\mathsf{T}}\,{\boldsymbol T}_{\mathcal{I}}$ yields
\begin{align}
{\boldsymbol A}_{\mathcal{I}}(\acute{z}_1, \ldots, \acute{z}_L)
&= {\boldsymbol T}_{\mathcal{I}}\,
   {\boldsymbol A}_M(\acute{z}_1, \ldots, \acute{z}_L) \notag\\
&= {\boldsymbol T}_{\mathcal{I}}\,{\boldsymbol D}_{2,\alpha}\,
   {\boldsymbol T}_{\mathcal{I}}^{\mathsf{T}}\,
   {\boldsymbol T}_{\mathcal{I}}\,
   {\boldsymbol A}_M(\grave{z}_1, \ldots, \grave{z}_L) \notag\\
&= {\boldsymbol D}_{\mathcal{I},2,\alpha}\,
   {\boldsymbol A}_{\mathcal{I}}(\grave{z}_1, \ldots, \grave{z}_L),
\end{align}
where
\begin{equation}\label{def:DIalpha}
{\boldsymbol D}_{\mathcal{I},2,\alpha}
= {\boldsymbol T}_{\mathcal{I}}\,{\boldsymbol D}_{2,\alpha}\,
  {\boldsymbol T}_{\mathcal{I}}^{\mathsf{T}}
= \operatorname{diag}\!\big(e^{{\mathsf j}(m_1 - 1)\alpha},\, \ldots,\, e^{{\mathsf j}(m_{M_{\mathcal{I}}} - 1)\alpha}\big)
\end{equation}
is nonsingular. The rotation invariance carries over to the complementary
submatrix ${\mathcal{T}}_{\bar{\mathcal{I}}}({\boldsymbol g})$, as stated in
the following corollary.
\begin{corollary}[Rotation invariance]\label{crl:rotation_invariance}
Let $\grave{\boldsymbol g}$ and $\acute{\boldsymbol g}$ be two coefficient
vectors whose entries are related through
\begin{equation}\label{eq:rotationPolCoef}
\acute{g}_\ell = e^{{\mathsf j}\beta}\, e^{-{\mathsf j}\ell\alpha}\,
\grave{g}_\ell,
\qquad \ell = 0, 1, \ldots, L,
\end{equation}
for $\alpha \in (-\pi, \pi]$ and
\begin{equation}\label{eq:def_beta}
\beta =
\begin{cases}
K\alpha, & \text{for even $L = 2K$,}\\[2pt]
\big(K + \tfrac{1}{2}\big)\alpha, & \text{for odd $L = 2K + 1$.}
\end{cases}
\end{equation}
Then ${\mathcal{T}}_{\bar{\mathcal{I}}}(\acute{\boldsymbol g})$ is rank
deficient if and only if ${\mathcal{T}}_{\bar{\mathcal{I}}}(\grave{\boldsymbol g})$
is rank deficient.
\end{corollary}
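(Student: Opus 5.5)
The plan is to prove the corollary directly at the matrix level by exhibiting nonsingular matrices $\boldsymbol{P}$ and $\boldsymbol{Q}$ with
\begin{equation*}
{\mathcal{T}}(\acute{\boldsymbol g}) = \boldsymbol{P}\,{\mathcal{T}}(\grave{\boldsymbol g})\,\boldsymbol{Q},
\end{equation*}
where $\boldsymbol{Q}$ is diagonal. Diagonal column scaling commutes with column selection, so restricting to the columns indexed by $\bar{\mathcal{I}}$ will give
\begin{equation*}
{\mathcal{T}}_{\bar{\mathcal{I}}}(\acute{\boldsymbol g}) = \boldsymbol{P}\,{\mathcal{T}}_{\bar{\mathcal{I}}}(\grave{\boldsymbol g})\,\boldsymbol{Q}_{\bar{\mathcal{I}}},
\qquad
\boldsymbol{Q}_{\bar{\mathcal{I}}} = {\boldsymbol T}_{\bar{\mathcal{I}}}\boldsymbol{Q}{\boldsymbol T}_{\bar{\mathcal{I}}}^{\mathsf T}.
\end{equation*}
This is the same insertion of ${\boldsymbol T}_{\bar{\mathcal{I}}}^{\mathsf T}{\boldsymbol T}_{\bar{\mathcal{I}}}$ used for ${\boldsymbol D}_{\mathcal{I},2,\alpha}$ in~\eqref{def:DIalpha}. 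Ranks are then equal, which gives the equivalence, and in fact shows that the two rank deficits coincide.

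The key computation is entrywise. Row $r$ of ${\mathcal{T}}({\boldsymbol g})$, with $r=1,\ldots,M-L$, has entry $g_{m-r}$ in column $m$ (zero outside $0\le m-r\le L$). Substituting~\eqref{eq:rotationPolCoef} gives
\begin{equation*}
\acute g_{m-r} = e^{{\mathsf j}\beta}\, e^{{\mathsf j} r\alpha}\, e^{-{\mathsf j} m\alpha}\, \grave g_{m-r}.
\end{equation*}
So we may take
\begin{equation*}
\boldsymbol{P} = e^{{\mathsf j}\beta}\operatorname{diag}(e^{{\mathsf j}\alpha},\ldots,e^{{\mathsf j}(M-L)\alpha}),
\qquad
\boldsymbol{Q} = \operatorname{diag}(e^{-{\mathsf j}\alpha},\ldots,e^{-{\mathsf j}M\alpha}),
\end{equation*}
and $\boldsymbol{Q}$ is a unit multiple of ${\boldsymbol D}_{2,-\alpha}$. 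Both are nonsingular, and this step is routine.

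The main obstacle is to confirm that $\acute{\boldsymbol g}$ is an admissible coefficient vector, i.e.\ that it still satisfies the centro-Hermitian symmetry $\acute g_\ell = \acute g_{L-\ell}^*$, so that ${\mathcal{T}}(\acute{\boldsymbol g})$ has the structured form~\eqref{def:ToeplitzEven} or~\eqref{def:ToeplitzOdd}. This is where the specific choice of $\beta$ in~\eqref{eq:def_beta} enters. We need
\begin{equation*}
e^{{\mathsf j}(\beta-\ell\alpha)} = e^{-{\mathsf j}(\beta-(L-\ell)\alpha)},
\end{equation*}
which holds iff $2\beta = L\alpha$ modulo $2\pi$. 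With $\beta = L\alpha/2$ this holds in both the even case $\beta=K\alpha$ and the odd case $\beta=(K+\tfrac12)\alpha$. For even $L$ it also follows that the middle coefficient $\acute g_K$ stays real.

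To tie the result to the rotation invariance above, we will also check that the roots of $P_{\acute{\boldsymbol g}}$ are $\grave z_\ell e^{{\mathsf j}\alpha}$. The identity
\begin{equation*}
P_{\acute{\boldsymbol g}}(z) = e^{{\mathsf j}\beta} P_{\grave{\boldsymbol g}}(z e^{-{\mathsf j}\alpha})
\end{equation*}
shows this directly. As a cross-check, Theorem~\ref{thm:rank_condition} combined with the rank invariance of ${\boldsymbol A}_{\mathcal{I}}$ under common rotation already yields the equivalence of rank deficiency. The direct factorization is still preferable, since it also covers coefficient vectors whose roots are off the unit circle.
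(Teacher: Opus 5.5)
Your proposal is correct and follows the paper's proof. The paper establishes ${\mathcal{T}}(\acute{\boldsymbol g}) = e^{{\mathsf j}\beta}{\boldsymbol D}_{1,\alpha}{\mathcal{T}}(\grave{\boldsymbol g}){\boldsymbol D}_{2,\alpha}^*$, which is your factorization up to the cancelling unit scalars $e^{{\mathsf j}\alpha}$ and $e^{-{\mathsf j}\alpha}$ in $\boldsymbol P$ and $\boldsymbol Q$; it then moves the diagonal factor past ${\boldsymbol T}_{\bar{\mathcal{I}}}^{\mathsf T}$ exactly as you do. Your explicit check that $2\beta \equiv L\alpha \pmod{2\pi}$, which makes the conjugated half of the Toeplitz band transform consistently, is left implicit in the paper and is a useful addition rather than a different route.
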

Under the coefficient rotation~\eqref{eq:rotationPolCoef}, the roots of
$P_{\acute{\boldsymbol g}}(z)$ and $P_{\grave{\boldsymbol g}}(z)$ are related by the common rotation~\eqref{eq:rotation_of_z} of the generator
sets.
\begin{IEEEproof} The Toeplitz matrices defined in~\eqref{def:ToeplitzEven}
and~\eqref{def:ToeplitzOdd} for even and odd source numbers $L$,
respectively, satisfy
\begin{equation}\label{eq:TacuteTbreve}
{\mathcal{T}}(\acute{\boldsymbol g})
= e^{{\mathsf j}\beta}\,
  {\boldsymbol D}_{1,\alpha}\,
  {\mathcal{T}}(\grave{\boldsymbol g})\,
  {\boldsymbol D}_{2,\alpha}^*,
\end{equation}
where
\begin{align}
{\boldsymbol D}_{1,\alpha}
&= \operatorname{diag}\!\big(1,\, e^{{\mathsf j}\alpha},\, \ldots,\, e^{{\mathsf j}(M - L - 1)\alpha}\big),\\
{\boldsymbol D}_{2,\alpha}
&= \operatorname{diag}\!\big(1,\, e^{{\mathsf j}\alpha},\, \ldots,\, e^{{\mathsf j}(M-1)\alpha}\big),
\end{align}
and $\beta$ is given by~\eqref{eq:def_beta}.
Using the complementary selection matrix ${\boldsymbol T}_{\bar{\mathcal{I}}}$
defined according to~\eqref{def:DefSelectionMatrix} for the index set
$\bar{\mathcal{I}}$ in~\eqref{def:DefComplementaryIndexSet}, we have the
identity
\begin{equation}\label{eq:identityDiagT}
{\boldsymbol D}_{2,\alpha}^*\,{\boldsymbol T}_{\bar{\mathcal{I}}}^{\mathsf{T}}
= {\boldsymbol T}_{\bar{\mathcal{I}}}^{\mathsf{T}}\,
  {\boldsymbol D}_{\bar{\mathcal{I}},2,\alpha}^*,
\end{equation}
where
\begin{equation}
{\boldsymbol D}_{\bar{\mathcal{I}},2,\alpha}
= {\boldsymbol T}_{\bar{\mathcal{I}}}\,{\boldsymbol D}_{2,\alpha}\,
  {\boldsymbol T}_{\bar{\mathcal{I}}}^{\mathsf{T}}
\end{equation}
is nonsingular and diagonal. Combining~\eqref{eq:TacuteTbreve},
\eqref{eq:identityDiagT}, and definition~\eqref{def:DefG_0} yields
\begin{align}
{\mathcal{T}}_{\bar{\mathcal{I}}}(\acute{\boldsymbol g})
&= {\mathcal{T}}(\acute{\boldsymbol g})\,
   {\boldsymbol T}_{\bar{\mathcal{I}}}^{\mathsf{T}} \notag\\
&= e^{{\mathsf j}\beta}\,
   {\boldsymbol D}_{1,\alpha}\,
   {\mathcal{T}}(\grave{\boldsymbol g})\,
   {\boldsymbol D}_{2,\alpha}^*\,
   {\boldsymbol T}_{\bar{\mathcal{I}}}^{\mathsf{T}} \notag\\
&= e^{{\mathsf j}\beta}\,
   {\boldsymbol D}_{1,\alpha}\,
   {\mathcal{T}}(\grave{\boldsymbol g})\,
   {\boldsymbol T}_{\bar{\mathcal{I}}}^{\mathsf{T}}\,
   {\boldsymbol D}_{\bar{\mathcal{I}},2,\alpha}^* \notag\\
&= e^{{\mathsf j}\beta}\,
   {\boldsymbol D}_{1,\alpha}\,
   {\mathcal{T}}_{\bar{\mathcal{I}}}(\grave{\boldsymbol g})\,
   {\boldsymbol D}_{\bar{\mathcal{I}},2,\alpha}^*.
\end{align}
Since ${\boldsymbol D}_{1,\alpha}$ and
${\boldsymbol D}_{\bar{\mathcal{I}},2,\alpha}^*$ are nonsingular diagonal
matrices and $e^{{\mathsf j}\beta}$ is a unit-modulus scalar, the rank is
preserved:
\begin{equation}
\operatorname{rank}\!\big({\mathcal{T}}_{\bar{\mathcal{I}}}(\acute{\boldsymbol g})\big)
= \operatorname{rank}\!\big({\mathcal{T}}_{\bar{\mathcal{I}}}(\grave{\boldsymbol g})\big).
\end{equation}
\end{IEEEproof}
Since the generalized Vandermonde matrix
${\boldsymbol A}_{\mathcal{I}}(z_1, \ldots, z_L)$ is invariant under a
common rotation of its generators, a normalization constraint is typically
imposed to ensure uniqueness of the solutions obtained from
Theorem~\ref{thm:rank_condition} when characterizing the ambiguity sets of
a given array geometry~\cite{matterAmbiguitiesDoAEstimation2022}. A simple
choice is to fix one root to an arbitrary location on the unit circle,
for instance, $z_1 = 1$, which is equivalent to requiring the polynomial
coefficients to sum to zero, $P_{\boldsymbol g}(1) = {\boldsymbol g}^{\mathsf{T}}{\boldsymbol 1} = 0$.
This constraint resolves the rotation ambiguity up to a permutation of the
roots, i.e., up to the choice of which root is fixed to one. An
alternative is to fix the phase of the leading coefficient $g_0$ to zero,
i.e., $\operatorname{Im}(g_0) = 0$ and $\operatorname{Re}(g_0) > 0$.
\subsection{Conjugate Invariance}

In addition to the rotation invariance, the generalized Vandermonde matrix
${\boldsymbol A}_{\mathcal{I}}(z_1,\ldots,z_L)$ and its complementary
submatrix ${\mathcal{T}}_{\bar{\mathcal{I}}}({\boldsymbol g})$ exhibit a
conjugate invariance, as stated next.

\begin{corollary}[Conjugate invariance]\label{crl:conjugate_invariance}
The complementary submatrix
${\mathcal{T}}_{\bar{\mathcal{I}}}({\boldsymbol g})$ is rank deficient if
and only if ${\mathcal{T}}_{\bar{\mathcal{I}}}({\boldsymbol g}^*)$ is rank
deficient.
\end{corollary}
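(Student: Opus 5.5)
Since ${\mathcal{T}}({\boldsymbol g})$ is built entrywise from the coefficients $g_0,\ldots,g_L$ and their conjugates, entrywise conjugation of the coefficient vector commutes with the construction. The plan is therefore to show that ${\mathcal{T}}({\boldsymbol g}^*)$ is the entrywise conjugate of ${\mathcal{T}}({\boldsymbol g})$, that this carries over to the complementary submatrix, and that rank is invariant under entrywise conjugation.

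\emph{Step 1.} Inspect \eqref{def:ToeplitzEven} and \eqref{def:ToeplitzOdd}. Each row of ${\mathcal{T}}({\boldsymbol g})$ is a shifted copy of $[g_0,\ldots,g_L,0,\ldots,0]$. In the even case $L=2K$ this pattern reads $g_0,\ldots,g_{K-1},g_K,g_{K-1}^*,\ldots,g_0^*$ with $g_K$ real. In the odd case it reads $g_0,\ldots,g_K,g_K^*,\ldots,g_0^*$. Replacing ${\boldsymbol g}$ by ${\boldsymbol g}^*$ turns every $g_\ell$ into $g_\ell^*$ and every $g_\ell^*$ into $g_\ell$, and leaves the real $g_K$ unchanged. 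Hence ${\mathcal{T}}({\boldsymbol g}^*)=\big({\mathcal{T}}({\boldsymbol g})\big)^*$ entrywise. For the remark's setting without the centro-Hermitian constraint, where the row pattern is simply $g_0,\ldots,g_L$, the same identity holds trivially.

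\emph{Step 2.} The selection matrix ${\boldsymbol T}_{\bar{\mathcal{I}}}$ is real, with binary entries. Therefore
\[
{\mathcal{T}}_{\bar{\mathcal{I}}}({\boldsymbol g}^*)={\mathcal{T}}({\boldsymbol g}^*)\,{\boldsymbol T}_{\bar{\mathcal{I}}}^{\mathsf T}=\big({\mathcal{T}}({\boldsymbol g})\,{\boldsymbol T}_{\bar{\mathcal{I}}}^{\mathsf T}\big)^*=\big({\mathcal{T}}_{\bar{\mathcal{I}}}({\boldsymbol g})\big)^*.
\]

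\emph{Step 3.} Entrywise conjugation preserves rank. One way to see this is through minors: every minor of ${\boldsymbol X}^*$ is the conjugate of the corresponding minor of ${\boldsymbol X}$, so the two vanish together. Equivalently, $\operatorname{rank}({\boldsymbol X}^*)=\operatorname{rank}({\boldsymbol X}^{\mathsf H})=\operatorname{rank}({\boldsymbol X})$. Hence the two complementary submatrices have equal rank, and one is rank deficient exactly when the other is.

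\emph{Interpretation.} Conjugating the coefficients conjugates the polynomial, $P_{{\boldsymbol g}^*}(z)=\big(P_{\boldsymbol g}(z^*)\big)^*$. Its roots are therefore $z_\ell^*$, which corresponds to the mirrored DoA set $\theta_\ell\mapsto-\theta_\ell$. This is consistent with Theorem~\ref{thm:rank_condition}, because ${\boldsymbol A}_{\mathcal{I}}(z_1^*,\ldots,z_L^*)={\boldsymbol A}_{\mathcal{I}}(z_1,\ldots,z_L)^*$.

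There is no real obstacle. The only point needing care is Step 1: checking that the structured placement of conjugates in the even and odd patterns maps onto itself under ${\boldsymbol g}\mapsto{\boldsymbol g}^*$, in particular that the middle coefficient $g_K$ is real in the even case.
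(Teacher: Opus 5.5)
Your proposal is correct and follows essentially the same route as the paper. You show ${\mathcal{T}}({\boldsymbol g}^*) = {\mathcal{T}}({\boldsymbol g})^*$ by inspecting the banded even and odd patterns, pass the conjugation to ${\mathcal{T}}_{\bar{\mathcal{I}}}$ because ${\boldsymbol T}_{\bar{\mathcal{I}}}$ is real, and conclude from rank invariance under entrywise conjugation. The realness of $g_K$ that you flag in Step 1 is not actually needed, since the substitution $g_K \mapsto g_K^*$ conjugates that entry either way.
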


\begin{IEEEproof}
Inspection of~\eqref{def:ToeplitzEven} and~\eqref{def:ToeplitzOdd} shows
that replacing each coefficient $g_\ell$ by $g_\ell^*$ replaces every entry
of ${\mathcal{T}}({\boldsymbol g})$ by its complex conjugate. Hence
${\mathcal{T}}({\boldsymbol g}^*) = {\mathcal{T}}({\boldsymbol g})^*$, and
since ${\boldsymbol T}_{\bar{\mathcal{I}}}$ is real,
\begin{equation}
{\mathcal{T}}_{\bar{\mathcal{I}}}({\boldsymbol g}^*)
= {\mathcal{T}}({\boldsymbol g}^*)\,{\boldsymbol T}_{\bar{\mathcal{I}}}^{\mathsf{T}}
= \big({\mathcal{T}}({\boldsymbol g})\,{\boldsymbol T}_{\bar{\mathcal{I}}}^{\mathsf{T}}\big)^*
= {\mathcal{T}}_{\bar{\mathcal{I}}}({\boldsymbol g})^*.
\end{equation}
Since the rank of a matrix is invariant under complex conjugation,
$\operatorname{rank}\!\big({\mathcal{T}}_{\bar{\mathcal{I}}}({\boldsymbol g}^*)\big)
= \operatorname{rank}\!\big({\mathcal{T}}_{\bar{\mathcal{I}}}({\boldsymbol g})\big)$,
which completes the proof.
\end{IEEEproof}

Although Corollary~\ref{crl:conjugate_invariance} is elementary, it
captures a useful symmetry: the generalized Vandermonde matrix
${\boldsymbol A}_{\mathcal{I}}(z_1,\ldots,z_L)$ loses rank for the
generator set $\{z_1,\ldots,z_L\}$ if and only if it loses rank for the
conjugate generator set $\{z_1^*,\ldots,z_L^*\}$. Indeed, if
$\{z_1,\ldots,z_L\}$ are the roots of $P_{\boldsymbol g}(z)$, then
$\{z_1^*,\ldots,z_L^*\}$ are the roots of $P_{{\boldsymbol g}^*}(z)$, since
$P_{{\boldsymbol g}^*}(z) = \big(P_{\boldsymbol g}(z^*)\big)^*$.
\subsection{Direct Implications of Theorem~\ref{thm:rank_condition}}
In this subsection we discuss several implications that follow directly
from Theorem~\ref{thm:rank_condition}. The results show that the theorem is in agreement with established array processing theory.

The first implication confirms the common intuition that thinning a
$\lambda/2$-spaced ULA always introduces ambiguities in the resulting steering matrix.
\begin{corollary}[Violation of the spatial sampling theorem]\label{crl:corollary1}
For a thinned ULA with $M_{\mathcal{I}} < M$ sensors and a number of
sources $L \geq M_{\mathcal{I}}$, the thinned array steering matrix
${\boldsymbol A}_{\mathcal{I}}(z_1, \ldots, z_L)$ is always ambiguous:
there exists a set of distinct generators $\{z_1, \ldots, z_L\}$ on the
unit circle for which
${\boldsymbol A}_{\mathcal{I}}(z_1, \ldots, z_L)$ becomes rank deficient.
\end{corollary}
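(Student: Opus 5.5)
The plan is to split by the relative size of $L$ and $M_{\mathcal{I}}$ and, in the only nontrivial case $L = M_{\mathcal{I}}$, to write down an explicit left null vector of ${\boldsymbol A}_{\mathcal{I}}$ rather than work through the determinant of ${\mathcal{T}}_{\bar{\mathcal{I}}}({\boldsymbol g})$.

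\emph{Case $L > M_{\mathcal{I}}$.} For any distinct generators on the unit circle, ${\boldsymbol A}_{\mathcal{I}}(z_1,\ldots,z_L) \in \mathbb{C}^{M_{\mathcal{I}} \times L}$ has rank at most $M_{\mathcal{I}} < L$. Rank deficiency is therefore automatic. This case lies outside the standing assumption $M_{\mathcal{I}} \ge L$ of the Problem Formulation, so it only needs to be noted.

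\emph{Case $L = M_{\mathcal{I}}$.} Here ${\boldsymbol A}_{\mathcal{I}}$ is square, and it suffices to find a nonzero ${\boldsymbol c} \in \mathbb{C}^{M_{\mathcal{I}}}$ with ${\boldsymbol c}^{\mathsf T}{\boldsymbol A}_{\mathcal{I}}(z_1,\ldots,z_L) = {\boldsymbol 0}$. Any such ${\boldsymbol c}$ defines a sparse polynomial
\[
Q(z) = \sum_{k=1}^{M_{\mathcal{I}}} c_k z^{m_k-1},
\]
and the annihilation condition says exactly that $Q(z_\ell)=0$ for all $\ell$. So the task is to exhibit a polynomial supported on the exponents $\{m_k-1\}$ that has at least $L$ distinct roots on the unit circle.

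For this I will use the convention in~\eqref{def:DefIndexSet} that $m_1 = 1$ and $m_{M_{\mathcal{I}}} = M$, i.e.\ both end sensors are always retained. Take ${\boldsymbol c} = {\boldsymbol e}_{M_{\mathcal{I}}} - {\boldsymbol e}_1$, which gives $Q(z) = z^{M-1} - 1$. Its roots are the $(M-1)$-th roots of unity $e^{{\mathsf j}2\pi k/(M-1)}$, $k=0,\ldots,M-2$, which are distinct and on the unit circle. Since $M_{\mathcal{I}} < M$, we have $M-1 \ge M_{\mathcal{I}} = L$, so any $L$ of these roots can be chosen as generators. Then ${\boldsymbol c}^{\mathsf T}{\boldsymbol A}_{\mathcal{I}} = {\boldsymbol 0}$ with ${\boldsymbol c} \neq {\boldsymbol 0}$, and ${\boldsymbol A}_{\mathcal{I}}$ is rank deficient.

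Interpretively, these are the uniform ambiguity sets of~\cite{manikasModelingEstimationAmbiguities1998}, and by Corollary~\ref{crl:rotation_invariance} any common rotation of them also works. One can optionally cross-check against Theorem~\ref{thm:rank_condition}: ${\mathcal{T}}_{\bar{\mathcal{I}}}({\boldsymbol g})$ is then square of size $(M-L)\times(M-L)$, and its singularity for the ${\boldsymbol g}$ built from the chosen roots follows from the theorem. This is not needed for the proof.

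The main obstacle I anticipate is making sure the construction respects the stated hypotheses: the generators must be distinct and lie on the unit circle, and the endpoint convention $m_1=1$, $m_{M_{\mathcal{I}}}=M$ must be in force. The inequality $M-1 \ge L$ is exactly what strict thinning, $M_{\mathcal{I}}<M$, provides. If the endpoint convention were dropped, I would instead use $z^{m_{M_{\mathcal{I}}}-m_1}-1$ and need $m_{M_{\mathcal{I}}}-m_1 \ge M_{\mathcal{I}}$. That still holds whenever there is an interior gap, but it fails if the retained sensors are contiguous, which is the reason the convention matters.
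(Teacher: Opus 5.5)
Your proof is correct. It uses the same object as the paper's proof, a binomial supported only on the two end sensors that are always retained, but it reaches the conclusion by a shorter route.

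The paper takes $P_{\boldsymbol b}(z) = 1 + z^{M-1}$. Its roots are the $(M-1)$-th roots of $-1$ rather than of unity, which is only a common rotation. The paper factors it into centro-Hermitian pieces $P_{\boldsymbol h}(z)P_{\boldsymbol g}(z)$ of degrees $M-L-1$ and $L$, rewrites this as $\boldsymbol b^{\mathsf T} = \boldsymbol h^{\mathsf T}\mathcal{T}(\boldsymbol g)$, and uses $\boldsymbol b^{\mathsf T}\boldsymbol T_{\bar{\mathcal I}}^{\mathsf T} = \boldsymbol 0^{\mathsf T}$ to exhibit $\boldsymbol h$ as a left null vector of the square matrix $\mathcal{T}_{\bar{\mathcal I}}(\boldsymbol g)$. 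Rank deficiency of $\boldsymbol A_{\mathcal I}$ then follows from Theorem~\ref{thm:rank_condition}.

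You instead check $\boldsymbol c^{\mathsf T}\boldsymbol A_{\mathcal I} = \boldsymbol 0$ directly. This is self-contained and needs neither Theorem~\ref{thm:rank_condition} nor the centro-Hermitian factorization step. The two arguments are the same vector seen from two sides: $\boldsymbol T_{\mathcal I}\boldsymbol b = \boldsymbol e_1 + \boldsymbol e_{M_{\mathcal I}}$ is the analogue of your $\boldsymbol c$ for the rotated generator set. In the notation of \eqref{eq:K_GC}, the paper's construction is simply $\boldsymbol K = \boldsymbol b$, $\boldsymbol C = \boldsymbol h$.

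What the paper's detour buys is illustration rather than logical necessity. That subsection aims to show that the Toeplitz criterion reproduces known facts. The same mechanism, fixing a special $\boldsymbol g$ and certifying a rank drop of $\mathcal{T}_{\bar{\mathcal I}}(\boldsymbol g)$, drives Corollaries~\ref{crl:corollary3}--\ref{crl:corollary4b}. There a zero or repeated column certifies rank deficiency without anyone writing down a null vector of $\boldsymbol A_{\mathcal I}$.

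Your row-count treatment of $L > M_{\mathcal I}$ is more direct than the paper's column-subset reduction. Your remark on the endpoint convention correctly identifies where $m_1 = 1$, $m_{M_{\mathcal I}} = M$, and $M_{\mathcal I} < M$ enter.
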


\begin{IEEEproof}
It suffices to prove the corollary for $L = M_{\mathcal{I}}$, since rank
deficiency at $L = M_{\mathcal{I}}$ implies rank deficiency for any
$L > M_{\mathcal{I}}$ by selecting an arbitrary $M_{\mathcal{I}}$-subset
of the columns. For $L = M_{\mathcal{I}}$, the complementary submatrix
$\mathcal{T}_{\bar{\mathcal{I}}}(\boldsymbol{g}) \in
\mathbb{C}^{(M - L) \times (M - M_{\mathcal{I}})}$ is square of size
$M_{\bar{\mathcal{I}}} \times M_{\bar{\mathcal{I}}}$ with
$M_{\bar{\mathcal{I}}} = M - M_{\mathcal{I}}$.

Consider the centro-Hermitian polynomial
\begin{equation}\label{eq:conv_b_h_g}
P_{\boldsymbol{b}}(z) = 1 + z^{M - 1}
= \prod_{m = 1}^{M - 1}\!\big(z - e^{{\mathsf j}\pi(2m - 1)/(M - 1)}\big),
\end{equation}
whose $M - 1$ roots lie on the unit circle as the $(M-1)$-th roots of
$-1$. Factorizing $P_{\boldsymbol{b}}(z)$ into two centro-Hermitian factors, yields
\begin{equation}\label{eq:b_h_g_factorization}
P_{\boldsymbol{b}}(z) = P_{\boldsymbol{h}}(z)\, P_{\boldsymbol{g}}(z),
\end{equation}
where
\begin{equation}
P_{\boldsymbol{h}}(z) = \sum_{\ell = 0}^{M_{\bar{\mathcal{I}}} - 1} h_\ell\, z^\ell,
\qquad
P_{\boldsymbol{g}}(z) = \sum_{\ell = 0}^{L} g_\ell\, z^\ell,
\end{equation}
with coefficient vectors $\boldsymbol{b} = \boldsymbol{e}_1 +
\boldsymbol{e}_M \in \mathbb{B}^M$ (where $\boldsymbol{e}_i$ is the
$i$-th unit vector in $\mathbb{R}^M$),
$\boldsymbol{h} \in \mathbb{C}^{M_{\bar{\mathcal{I}}}}$ satisfying the
centro-Hermitian property $h_\ell = h_{M_{\bar{\mathcal{I}}} - 1 - \ell}^*$
with $|h_0| = 1$, and $\boldsymbol{g} \in \mathbb{C}^{L + 1}$ satisfying
$g_\ell = g_{L - \ell}^*$ with $|g_0| = 1$.

The polynomial identity~\eqref{eq:b_h_g_factorization} can be expressed
in matrix form as
\begin{equation}\label{eq:conv_b_h_g_matrix}
\boldsymbol{b}^{\mathsf{T}}
= \boldsymbol{h}^{\mathsf{T}}\,\mathcal{T}(\boldsymbol{g}),
\end{equation}
where $\mathcal{T}(\boldsymbol{g}) \in
\mathbb{C}^{(M - L) \times M}$ is the Toeplitz matrix defined
in~\eqref{def:ToeplitzEven}--\eqref{def:ToeplitzOdd}. Right-multiplying
both sides of~\eqref{eq:conv_b_h_g_matrix} by the complementary
selection matrix $\boldsymbol{T}_{\bar{\mathcal{I}}}^{\mathsf{T}}$
yields
\begin{equation}\label{eq:conv_b_h_g_matrix_square}
\boldsymbol{b}^{\mathsf{T}}\,\boldsymbol{T}_{\bar{\mathcal{I}}}^{\mathsf{T}}
= \boldsymbol{h}^{\mathsf{T}}\,\mathcal{T}(\boldsymbol{g})\,
  \boldsymbol{T}_{\bar{\mathcal{I}}}^{\mathsf{T}}
= \boldsymbol{h}^{\mathsf{T}}\,
  \mathcal{T}_{\bar{\mathcal{I}}}(\boldsymbol{g}).
\end{equation}
Since $\boldsymbol{b}$ has nonzero entries only at positions $1$ and
$M$, both of which lie in $\mathcal{I}$ (the first and last sensors are
always retained to preserve the aperture), the selection of columns
indexed by $\bar{\mathcal{I}}$ from $\boldsymbol{b}^{\mathsf{T}}$
yields
$\boldsymbol{b}^{\mathsf{T}}\,\boldsymbol{T}_{\bar{\mathcal{I}}}^{\mathsf{T}}
= \boldsymbol{0}^{\mathsf{T}}$, the zero row vector of length
$M_{\bar{\mathcal{I}}}$. Combined with $\boldsymbol{h} \neq \boldsymbol{0}$
(since $|h_0| = 1$),
identity~\eqref{eq:conv_b_h_g_matrix_square} exhibits $\boldsymbol{h}$
as a nontrivial left null vector of
$\mathcal{T}_{\bar{\mathcal{I}}}(\boldsymbol{g})$. As
$\mathcal{T}_{\bar{\mathcal{I}}}(\boldsymbol{g})$ is square, this
forces it to be rank deficient.

By Theorem~\ref{thm:rank_condition}, the corresponding generalized
Vandermonde matrix
${\boldsymbol A}_{\mathcal{I}}(\grave{z}_1, \ldots, \grave{z}_L)$ is
rank deficient, where $\{\grave{z}_1, \ldots, \grave{z}_L\}$ is the
subset of roots of $P_{\boldsymbol{b}}(z)$ associated with the factor
$P_{\boldsymbol{g}}(z)$. All these generators lie on the unit circle at
the $(M-1)$-th roots of $-1$.
\end{IEEEproof}

Corollary~\ref{crl:corollary1} states that, in thinned ULAs, the number
of columns $L$ of the array steering matrix
${\boldsymbol A}_{\mathcal{I}}(z_1, \ldots, z_L)$, i.e., the number
of identifiable sources, must be strictly less than the number of
physical sensors $M_{\mathcal{I}}$, that is, $L < M_{\mathcal{I}}$~\cite{waxUniqueLocalizationMultiple1989}.
This contrasts with the fully populated ULA, where the Vandermonde
determinant theorem guarantees full column rank of
${\boldsymbol A}_M(z_1, \ldots, z_L)$ for any $L \leq M$ and any set of
distinct generators~\cite{890366Sidiropoulos,
Heineman1929GeneralizedVandermonde, buck_generalized_1992,
SchlickeweiViola2000GeneralizedVandermonde}. Corollary~\ref{crl:corollary1}
can therefore be viewed as an algebraic counterpart of the spatial
sampling theorem, which states that a full-spark steering matrix can be
obtained for any set of distinct generators only when the inter-element
spacing satisfies $d \leq \lambda/2$, where $\lambda$ is the wavelength
of the impinging narrowband signal. The result of
Corollary~\ref{crl:corollary1} is consistent with earlier reports
in~\cite{manikasModelingEstimationAmbiguities1998,
abramovichIdentifiabilityManifoldAmbiguity1999}, where ambiguity
generator sets formed by roots of unity were derived for this case
through different arguments.
\begin{corollary}[Maximum-consecutive-lag arrays]\label{crl:corollary2}
If the number of sensors in the thinned array exceeds the number of
sources, i.e., $M_{\mathcal{I}} > L$, then there always exists a sparse
sensor configuration, obtained by removing up to
$M - M_{\mathcal{I}}$ sensors from the underlying $M$-element ULA, whose
array manifold is unambiguous, i.e.,
${\boldsymbol A}_{\mathcal{I}}(z_1,\ldots,z_L) \in
\mathbb{C}^{M_{\mathcal{I}} \times L}$ has full column rank for every set
of distinct generators $\{z_1,\ldots,z_L\}$.
\end{corollary}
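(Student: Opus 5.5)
The plan is an explicit construction: keep a contiguous block of $L$ consecutive sensors plus the last sensor. We take $\mathcal{I} = \{1,2,\ldots,L,M\}$ if $M_{\mathcal{I}} = L+1$. For larger $M_{\mathcal{I}}$ we keep the first $M_{\mathcal{I}}-1$ consecutive sensors $\{1,\ldots,M_{\mathcal{I}}-1\}$ together with $M$, so at most $M-M_{\mathcal{I}}$ sensors are removed. (Any superset of $L$ consecutive lags would do.) The idea is that this layout contains a full $L$-element ULA subarray. Full column rank can then be proved directly from the classical Vandermonde determinant theorem, without going through Theorem~\ref{thm:rank_condition}.

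The key steps are as follows.

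\textbf{Step 1.} Choose the row subset $\{1,\ldots,L\}\subseteq\mathcal{I}$. This is possible since $M_{\mathcal{I}}-1\ge L$ and these indices are retained.

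\textbf{Step 2.} The corresponding $L\times L$ submatrix of ${\boldsymbol A}_{\mathcal{I}}(z_1,\ldots,z_L)$ equals ${\boldsymbol A}_L(z_1,\ldots,z_L)$. This is a square Vandermonde matrix with determinant $\prod_{i<j}(z_j-z_i)\neq 0$ for distinct generators.

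\textbf{Step 3.} A matrix containing a nonsingular $L\times L$ submatrix has rank $L$. Hence ${\boldsymbol A}_{\mathcal{I}}$ has full column rank for every distinct generator set, which proves the claim. The constraint $m_1=1$, $m_{M_{\mathcal{I}}}=M$ from~\eqref{def:DefIndexSet} is respected by construction.

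As a consistency check with the paper's framework, we could also verify the result via Theorem~\ref{thm:rank_condition}. For this layout, ${\mathcal{T}}_{\bar{\mathcal{I}}}({\boldsymbol g})$ consists of columns $M_{\mathcal{I}},\ldots,M-1$ of ${\mathcal{T}}({\boldsymbol g})$. Rows $M_{\mathcal{I}}-L,\ldots,M-L-1$ of these columns form a lower-triangular block with $g_L=g_0^*\neq 0$ on the diagonal. This block is nonsingular, so ${\mathcal{T}}_{\bar{\mathcal{I}}}({\boldsymbol g})$ has full column rank $M-M_{\mathcal{I}}$.

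The main obstacle is this index bookkeeping in the Toeplitz check: one must confirm that the triangular block sits within the $M-L$ available rows. That requires $M-1-L\le M-L-1$, which holds. One must also confirm that $g_0\neq 0$, which follows because the generators are nonzero: $g_L$ is the leading coefficient and $g_0$ is proportional to the product of the roots. The direct Vandermonde argument avoids both issues, so it would be my primary proof.

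Finally, the name ``maximum-consecutive-lag'' fits the construction. The retained block of consecutive sensors is what guarantees identifiability, and it does so for every geometry containing at least $L$ consecutive retained positions.
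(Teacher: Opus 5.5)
Your proof is correct, but it takes a different route from the paper. The paper argues through Theorem~\ref{thm:rank_condition}. It treats only the boundary case $M_{\mathcal{I}} = L+1$, using the mirrored layout $\mathcal{I} = \{1, M-L+1, \ldots, M\}$. It then observes that the complementary submatrix $\mathcal{T}_{\bar{\mathcal{I}}}(\boldsymbol{g})$, built from columns $2,\ldots,M-L$, is triangular with $g_0 \neq 0$ on its diagonal, and so concludes that $\boldsymbol{A}_{\mathcal{I}}$ has full column rank. Your ``consistency check'' is essentially this argument moved to your layout, with $g_L = g_0^*$ on the diagonal instead of $g_0$.

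Your primary argument is different: you extract the $L \times L$ contiguous row block and apply the Vandermonde determinant directly. The paper mentions this reasoning only in the remark after the corollary, as the ``well known'' justification.

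\emph{What your route buys.} It is self-contained and does not rely on Theorem~\ref{thm:rank_condition}. It holds for any distinct generators, on or off the unit circle. It also covers every $M_{\mathcal{I}} > L$ explicitly. The paper writes out only $M_{\mathcal{I}} = L+1$ and leaves implicit that retaining further sensors only appends rows and cannot lower the column rank.

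\emph{What the paper's route buys.} Its stated purpose is to show that the Toeplitz criterion reproduces this elementary fact. Certifying rank through $\mathcal{T}_{\bar{\mathcal{I}}}(\boldsymbol{g})$ also still works for geometries with no $L$-element contiguous subarray, such as Examples~\ref{exmpl:7b} and~\ref{exmpl:9b}. There no Vandermonde subblock exists, so your argument cannot be used.

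One bookkeeping correction to your check. The rows $M_{\mathcal{I}}-L,\ldots,M-L-1$ of the triangular block must satisfy two bounds. The upper bound $M-L-1 \le M-L$ is the one you flagged, and it holds trivially. The hypothesis $M_{\mathcal{I}} > L$ is actually needed at the lower bound, $M_{\mathcal{I}} - L \geq 1$.
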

\begin{IEEEproof}
Consider the boundary case $M_{\mathcal{I}} = L + 1$ and choose
$\bar{\mathcal{I}} = \{2, 3, \ldots, M - L\}$, which corresponds to
removing $M - L - 1$ consecutive sensors from positions
$2, 3, \ldots, M - L$ of the underlying ULA. The retained index set is
\begin{equation}\label{eq:index_set_example}
\mathcal{I} = \{1, M - L + 1, M - L + 2, \ldots, M\}.
\end{equation}
The corresponding complementary submatrix
${\mathcal{T}}_{\bar{\mathcal{I}}}({\boldsymbol g}) \in
\mathbb{C}^{(M - L) \times (M - L - 1)}$ is upper triangular with $g_0$ on
the main diagonal, and therefore has full column rank for any nonzero
$g_0$. By Theorem~\ref{thm:rank_condition},
${\boldsymbol A}_{\mathcal{I}}(z_1,\ldots,z_L) \in
\mathbb{C}^{(L+1) \times L}$ has full column rank for every set of
distinct generators $\{z_1,\ldots,z_L\}$.
\end{IEEEproof}
The array defined by the index set $\mathcal{I}$
in~\eqref{eq:index_set_example} is commonly referred to as a
maximum-consecutive-lag array, since it contains a contiguous subarray of
$L$ consecutive sensors~\cite{10835190gini}. This subarray itself forms an
$L$-element ULA and, by the Vandermonde determinant
theorem~\cite{https://doi.org/10.1002/zamm.19870670330HornJohnson}, admits
an unambiguous manifold for $L$ sources. While the result of
Corollary~\ref{crl:corollary2} is therefore well known, it confirms that
Theorem~\ref{thm:rank_condition} reproduces this elementary fact through
its algebraic characterization. The Examples~\ref{exmpl:2},
\ref{exmpl:8b}, and \ref{exmpl:9b} in Section \ref{sec:Examples} will further show that thinned
ULAs with maximum consecutive lag strictly less than $L$, i.e., genuinely
sparse arrays that contain no $L$-element contiguous subarray, can also
be designed to be unambiguous, provided the total number of sensors
$M_{\mathcal{I}}$ exceeds $L$. Such truely sparse configurations
may be preferable when mutual coupling is a concern.
\section{Design of unambiguous thinned arrays}\label{sec:ArrayDesign}
In this section we provide practical guidelines on how the results of
Theorem~\ref{thm:rank_condition} can be used to design unambiguous sparse
arrays.

\begin{corollary}[Forbidden center elements in the removal set]\label{crl:corollary3}
Assume $2L \geq M + 1$ and define the center index set
$\mathcal{Q} = \{M - L + 1, \ldots, L\}$. Then, if any element of
$\mathcal{Q}$ is included in the removal set, i.e., if
$\mathcal{Q}\, \cap \, \bar{\mathcal{I}} \neq \emptyset$, the generalized
Vandermonde matrix ${\boldsymbol A}_{\mathcal{I}}(z_1, \ldots, z_L)$ is
ambiguous: there exists a generator set on the unit circle for which it
becomes rank deficient.
\end{corollary}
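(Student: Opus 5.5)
The plan is to exhibit one explicit coefficient vector $\boldsymbol g$ for which the complementary submatrix $\mathcal{T}_{\bar{\mathcal{I}}}(\boldsymbol g)$ has an identically zero column, and then invoke Theorem~\ref{thm:rank_condition}. The key observation is about where the coefficients sit in the columns of $\mathcal{T}(\boldsymbol g)$. From the band structure in \eqref{def:ToeplitzEven}--\eqref{def:ToeplitzOdd}, the $(r,m)$ entry is the coefficient of index $m-r$ (for $0\le m-r\le L$) and is zero otherwise. For a column index $m\in\mathcal{Q}$, the rows $r=1,\ldots,M-L$ carry the indices $m-1, m-2, \ldots, m-(M-L)$. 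Since $m\le L$, the largest index is $m-1\le L-1$. Since $m\ge M-L+1$, the smallest index is $m-(M-L)\ge 1$. Hence every column indexed by $\mathcal{Q}$ contains only the ``interior'' coefficients $g_1,\ldots,g_{L-1}$, and never $g_0$ or $g_L$. The hypothesis $2L\ge M+1$ is exactly what makes $\mathcal{Q}$ nonempty.

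Next I choose $g_0=g_L=1$ and $g_1=\cdots=g_{L-1}=0$, that is, $P_{\boldsymbol g}(z)=1+z^L$. This vector is centro-Hermitian, in both the even case (with $g_K=0$ real) and the odd case. Its roots are the $L$ distinct $L$-th roots of $-1$, which lie on the unit circle. By the previous paragraph, every column of $\mathcal{T}(\boldsymbol g)$ indexed by $\mathcal{Q}$ vanishes. If some $\bar m\in\mathcal{Q}\cap\bar{\mathcal{I}}$, then the corresponding column of $\mathcal{T}_{\bar{\mathcal{I}}}(\boldsymbol g)=\mathcal{T}(\boldsymbol g)\boldsymbol T_{\bar{\mathcal{I}}}^{\mathsf T}$ is zero. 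Therefore $\operatorname{rank}(\mathcal{T}_{\bar{\mathcal{I}}}(\boldsymbol g))\le M-M_{\mathcal{I}}-1$, i.e., the matrix is rank deficient with $D\ge1$.

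Finally I apply Theorem~\ref{thm:rank_condition} to conclude that $\boldsymbol A_{\mathcal{I}}$, evaluated at the roots of $1+z^L$, is rank deficient. Any common rotation of these roots also works, by Corollary~\ref{crl:rotation_invariance}. Two bookkeeping points need care, and I expect them to be the only real obstacle. First, the theorem assumes $L\le M_{\mathcal{I}}$; if instead $L>M_{\mathcal{I}}$, ambiguity already follows from Corollary~\ref{crl:corollary1}, so that case is covered separately. Second, the range condition $D<L$ must be checked. This follows from $\mathcal{T}_{\bar{\mathcal{I}}}$ being a submatrix of $\mathcal{T}(\boldsymbol g)$ and from $\operatorname{rank}\boldsymbol A_{\mathcal{I}}\ge1$, so the deficit of $\boldsymbol A_{\mathcal{I}}$ is at most $L-1$. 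Alternatively, the reverse direction of the theorem's proof can be used directly: the zero column supplies an extra left-null vector of $\boldsymbol A_{\mathcal{I}}$.
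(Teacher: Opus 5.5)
Your proposal is correct and follows the paper's proof essentially verbatim: choose $\boldsymbol g=[1,0,\ldots,0,1]^{\mathsf T}$, so that $P_{\boldsymbol g}(z)=1+z^L$ has distinct unit-circle roots. Then observe that the columns of $\mathcal{T}(\boldsymbol g)$ indexed by $\mathcal{Q}$ vanish, so $\mathcal{T}_{\bar{\mathcal{I}}}(\boldsymbol g)$ has a zero column, and conclude via Theorem~\ref{thm:rank_condition}. Your explicit index check (every entry of a column $m\in\mathcal{Q}$ involves only $g_1,\ldots,g_{L-1}$) and your bookkeeping on $L\le M_{\mathcal{I}}$ and $D<L$ are details the paper leaves implicit, and you handle them correctly.
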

\begin{IEEEproof}
Choose the coefficient vector $\grave{\boldsymbol g} = [1, 0, \ldots, 0, 1]^{\mathsf{T}}$,
i.e., $\grave{g}_0 = \grave{g}_L = 1$ and $\grave{g}_\ell = 0$ for
$\ell = 1, \ldots, L - 1$. From the definition of ${\mathcal{T}}({\boldsymbol g})$
in~\eqref{def:ToeplitzEven} and~\eqref{def:ToeplitzOdd}, the columns of
${\mathcal{T}}(\grave{\boldsymbol g})$ indexed by $\mathcal{Q}$ are
identically zero. Hence, if $\mathcal{Q}\, \cap\, \bar{\mathcal{I}} \neq \emptyset$,
the complementary submatrix
${\mathcal{T}}_{\bar{\mathcal{I}}}(\grave{\boldsymbol g})$ contains a zero
column and is therefore rank deficient. By
\begin{equation}\label{eq:rootsofunity_caseA}
P_{\grave{\boldsymbol{g}}} (z) = 1 + z^L =  \prod_{\ell = 1}^{L}\!\big(z - e^{{\mathsf j}\pi(2\ell - 1)/L}\big),
\end{equation}
c.f.~\eqref{eq:conv_b_h_g},
the roots $\grave{z}_1, \ldots, \grave{z}_L$ of
$P_{\grave{\boldsymbol g}}(z)$ lie on the unit circle, and by
Theorem~\ref{thm:rank_condition},
${\boldsymbol A}_{\mathcal{I}}(\grave{z}_1, \ldots, \grave{z}_L)$ is rank
deficient.
\end{IEEEproof}
\begin{corollary}[Mutually exclusive indices, Type A]\label{crl:corollary4a}
Assume $2L \geq M - 1$ and define $\mathcal{Q} = \{M - L,\, L + 1\}$. If
both indices in $\mathcal{Q}$ are contained in the removal set, i.e.,
$\mathcal{Q} \subseteq \bar{\mathcal{I}}$, then there exists an ambiguous
generator set $\{\acute{z}_1, \ldots, \acute{z}_L\}$ on the unit circle for
which the generalized Vandermonde matrix
${\boldsymbol A}_{\mathcal{I}}(\acute{z}_1, \ldots, \acute{z}_L)$ is rank
deficient.
\end{corollary}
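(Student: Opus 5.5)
The plan is to mirror the proof of Corollary~\ref{crl:corollary3}. I will exhibit one explicit centro-Hermitian coefficient vector $\acute{\boldsymbol g}$ whose roots lie on the unit circle and for which the two columns of ${\mathcal{T}}(\acute{\boldsymbol g})$ indexed by $\mathcal{Q}=\{M-L,\,L+1\}$ are linearly dependent. Then, whenever $\mathcal{Q}\subseteq\bar{\mathcal{I}}$, the complementary submatrix ${\mathcal{T}}_{\bar{\mathcal{I}}}(\acute{\boldsymbol g})$ contains two dependent columns and is rank deficient. Theorem~\ref{thm:rank_condition} then transfers the rank deficiency to ${\boldsymbol A}_{\mathcal{I}}(\acute z_1,\ldots,\acute z_L)$.

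First I will record the entry pattern of the Toeplitz matrix in \eqref{def:ToeplitzEven}--\eqref{def:ToeplitzOdd}. Its $(i,j)$ entry is $g_{j-i}$ for rows $i=1,\ldots,M-L$, with the convention $g_k=0$ for $k\notin\{0,\ldots,L\}$. For column $j=L+1$ the index $j-i=L+1-i$ ranges over $L,L-1,\ldots,2L+1-M$, and these are all valid indices because $2L\ge M-1$ gives $2L+1-M\ge 0$. For column $j=M-L$ the index $M-L-i$ ranges over $M-L-1,\ldots,0$, and $M-L-1\le L$ again by $2L\ge M-1$. Hence, under the hypothesis, neither column meets the zero padding: each column is a full window of $M-L$ consecutive coefficients of ${\boldsymbol g}$.

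Next I will choose the all-ones vector, $\acute g_\ell=1$ for $\ell=0,\ldots,L$. It is real and symmetric, hence centro-Hermitian, and
\begin{equation*}
P_{\acute{\boldsymbol g}}(z)=\sum_{\ell=0}^{L}z^\ell=\frac{z^{L+1}-1}{z-1}=\prod_{\ell=1}^{L}\big(z-e^{{\mathsf j}2\pi\ell/(L+1)}\big).
\end{equation*}
So its roots $\acute z_\ell$ are the $L$ nontrivial $(L+1)$-th roots of unity. They are distinct and lie on the unit circle, giving a uniform ambiguity set in the spirit of \eqref{eq:rootsofunity_caseA}. By the previous step, both columns $M-L$ and $L+1$ of ${\mathcal{T}}(\acute{\boldsymbol g})$ equal the all-ones vector of length $M-L$. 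Thus ${\mathcal{T}}_{\bar{\mathcal{I}}}(\acute{\boldsymbol g})$ has two identical columns whenever $\mathcal{Q}\subseteq\bar{\mathcal{I}}$, and the conclusion follows from Theorem~\ref{thm:rank_condition}. Corollary~\ref{crl:rotation_invariance} can be invoked to add that the whole rotated family $\{\acute z_\ell e^{{\mathsf j}\alpha}\}$ is likewise ambiguous.

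The only delicate point is the index bookkeeping in the first step, namely checking that $2L\ge M-1$ is exactly what keeps both columns free of zero padding. A related edge case is $M=2L+1$, where $M-L=L+1$ and $\mathcal{Q}$ collapses to a single index. In that case I would instead argue via the zero-column mechanism of Corollary~\ref{crl:corollary3}, or treat it as degenerate. If the all-ones choice failed in some subcase, my fallback would be a polynomial with coefficients $(-1)^\ell$, that is, the $(L+1)$-th roots of $(-1)^L$ excluding $-1$. For that choice the two columns agree up to the sign $(-1)^{2L+1-M}$, which still gives linear dependence.
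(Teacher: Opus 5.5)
Your proof is correct and is essentially the paper's own argument. You take $\acute{\boldsymbol g}=\boldsymbol{1}$, note that under $2L\ge M-1$ columns $M-L$ and $L+1$ of $\mathcal{T}(\acute{\boldsymbol g})$ both equal the all-ones vector of length $M-L$, and apply Theorem~\ref{thm:rank_condition} with the $L$ nontrivial $(L+1)$-th roots of unity as generators; your index bookkeeping makes explicit what the paper only asserts.

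One caution on your edge case. For $M=2L+1$ the zero-column route of Corollary~\ref{crl:corollary3} is unavailable, because column $L+1$ equals $[g_L,\ldots,g_0]^{\mathsf T}\neq\boldsymbol{0}$. The claim in fact fails there: $L=1$, $M=3$, $\bar{\mathcal I}=\{2\}$ gives $\boldsymbol{A}_{\mathcal I}(z_1)=[1,\,z_1^2]^{\mathsf T}$, which is never rank deficient. So ``treat it as degenerate'' is the right option, with ``both indices'' read as presupposing $M\le 2L$. The $(-1)^\ell$ fallback is unnecessary, and its roots are the $(L+1)$-th roots of $(-1)^{L+1}$ other than $-1$, not of $(-1)^L$.
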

\begin{IEEEproof}
Choose $\acute{\boldsymbol g} = {\boldsymbol 1}$. From~\eqref{def:ToeplitzEven}
and~\eqref{def:ToeplitzOdd}, columns $M - L$ and $L + 1$ of
${\mathcal{T}}(\acute{\boldsymbol g})$ are identical. Hence, if both
indices are contained in $\bar{\mathcal{I}}$, the complementary submatrix
${\mathcal{T}}_{\bar{\mathcal{I}}}(\acute{\boldsymbol g})$ has two
identical columns and is rank deficient. With the factorization
\begin{equation}\label{eq:rootsofunity_caseB}
P_{\acute{\boldsymbol g}}(z) = \sum_{\ell = 0}^{L} z^\ell = \frac{z^{L+1}-1}{z-1} 
=
\prod_{k=1}^{L}
\left(z-e^{{\mathsf j}2\pi k/(L+1)}\right),
\end{equation}
the roots of
$P_{\acute{\boldsymbol g}}(z)$ given by $\acute{z}_\ell = e^{{\mathsf j}2\pi k/(L+1)}$ ($\ell = 1, \ldots, L$) lie on the unit
circle, and Theorem~\ref{thm:rank_condition} implies that
${\boldsymbol A}_{\mathcal{I}}(\acute{z}_1, \ldots, \acute{z}_L)$ is rank
deficient.
\end{IEEEproof}
\begin{corollary}[Mutually exclusive indices, Type B]\label{crl:corollary4b}
Set $N = M - L$, and let $p, q$ be integers satisfying
\begin{equation}\label{eq:corollary4b-conditions}
0\leq p \leq K,
\quad
p + 1 \leq q \leq L - p,
\quad
N \leq L + q - 2p - 1.
\end{equation}
Define the index pair $\mathcal{S}_{p,q} = \{q,\, L - p + q\}$. If
$\mathcal{S}_{p,q} \subseteq \bar{\mathcal{I}}$, then there exists an
ambiguous generator set $\{\check{z}_1, \ldots, \check{z}_L\}$ on the unit
circle for which the generalized Vandermonde matrix
${\boldsymbol A}_{\mathcal{I}}(\check{z}_1, \ldots, \check{z}_L)$ is rank
deficient.
\end{corollary}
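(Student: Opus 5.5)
The plan is to follow the template of Corollaries~\ref{crl:corollary3} and~\ref{crl:corollary4a}. I will exhibit an explicit centro-Hermitian coefficient vector $\check{\boldsymbol g}$ whose roots lie on the unit circle and for which columns $q$ and $L-p+q$ of ${\mathcal{T}}(\check{\boldsymbol g})$ coincide. If $\mathcal{S}_{p,q}\subseteq\bar{\mathcal{I}}$, the complementary submatrix ${\mathcal{T}}_{\bar{\mathcal{I}}}(\check{\boldsymbol g})$ then has two identical columns. It is therefore rank deficient, and Theorem~\ref{thm:rank_condition} yields rank deficiency of ${\boldsymbol A}_{\mathcal{I}}(\check z_1,\ldots,\check z_L)$.

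The candidate is the product polynomial
\begin{equation*}
P_{\check{\boldsymbol g}}(z)=\big(1+z^{L-p}\big)\sum_{k=0}^{p} z^{k},
\end{equation*}
which has degree $L$. Its roots are the $(L-p)$-th roots of $-1$ together with the nontrivial $(p+1)$-th roots of unity, so all lie on the unit circle. Both factors have real palindromic coefficients, so $\check{\boldsymbol g}$ is centro-Hermitian. Since $p\le K$ gives $p<L-p$, the two blocks do not overlap, and $\check g_\ell=1$ for $\ell\in\{0,\ldots,p\}\cup\{L-p,\ldots,L\}$ with $\check g_\ell=0$ otherwise. For $p=0$ this reduces to $1+z^L$, the vector of Corollary~\ref{crl:corollary3}.

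The core step is a direct index check. Entry $(i,j)$ of ${\mathcal{T}}(\boldsymbol g)$ equals $g_{j-i}$, taken as zero outside $0,\ldots,L$, for $i=1,\ldots,N$. Writing $s=L-p$ and $\ell=q-i$, column equality means $\check g_{\ell}=\check g_{\ell+s}$ for $\ell=q-N,\ldots,q-1$. I will split $\ell$ into four ranges.
\begin{itemize}
\item For $0\le\ell\le p$: $\ell+s\in[L-p,L]$, so both sides equal $1$.
\item For $p<\ell<L-p$: $\check g_\ell=0$ and $\ell+s>L$, so both sides vanish.
\item The range $\ell\ge L-p$ would break equality, but it is excluded because $\ell\le q-1\le L-p-1$; this uses $q\le L-p$.
\item For $\ell<0$: $\check g_\ell=0$, and I need $\ell+s\notin[0,p]$. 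The smallest such index is $q-N+L-p$, and requiring it to be at least $p+1$ gives exactly $N\le L+q-2p-1$.
\end{itemize}
The remaining conditions are bookkeeping: $q\ge p+1$ keeps the first column index meaningful, and $L-p+q\le M$ follows from the stated bounds. The conditions~\eqref{eq:corollary4b-conditions} thus match this verification exactly, which strongly suggests this is the intended construction.

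The main obstacle is distinctness of the generators. The two root sets can intersect: $z^{L-p}=-1$ and $z^{p+1}=1$ can share roots, for instance $z=-1$ when $L-p$ is odd and $p+1$ is even. I would first check whether the intended statement tolerates this, since a common root only shrinks the generator set and a rank drop with fewer distinct columns still gives an ambiguity for $L$ sources after adjoining arbitrary extra generators. Otherwise, I would replace $1+z^{L-p}$ by the rotated $1+e^{{\mathsf j}\gamma}z^{L-p}$ and check whether the column-equality argument survives up to the proportionality constant $e^{{\mathsf j}\gamma}$. I expect it does not survive directly, since only one block of coefficients is scaled. A cleaner fallback is to apply Corollary~\ref{crl:rotation_invariance} to one factor only, or to accept proportional rather than identical columns, which still gives rank deficiency.
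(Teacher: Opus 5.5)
Your core argument is correct and follows the paper's route. You exhibit a centro-Hermitian $\check{\boldsymbol g}$ with unit-circle roots such that columns $q$ and $L-p+q$ of $\mathcal{T}(\check{\boldsymbol g})$ coincide, then apply Theorem~\ref{thm:rank_condition}. The paper uses the sparser vector with ones only at $0,p,L-p,L$, i.e.\ $(1+z^p)(1+z^{L-p})$, while you fill the whole block $0,\ldots,p$. Both pass the same index check, which yields exactly $N\le L+q-2p-1$, and the two constructions coincide for $p\le 1$.

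Two small slips. First, $L-p+q\le M$ is not implied by~\eqref{eq:corollary4b-conditions}, which only bounds $N$ from above; it follows from the hypothesis $\mathcal{S}_{p,q}\subseteq\bar{\mathcal I}\subseteq\{1,\ldots,M\}$. Second, $p\le K$ gives $p<L-p$ except for $L=2K$, $p=K$, but there no admissible $q$ exists, so that case is vacuous.

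The distinctness issue you raise is real, and the paper's proof does not check it. For $M=7$, $L=4$, $p=1$, $q=2$, both constructions give $(1+z)^2(1-z+z^2)$, which has a double root at $-1$.

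Your first remedy does not follow as stated. Theorem~\ref{thm:rank_condition} presupposes distinct roots. With a double root, $\mathcal{N}(\mathcal{T}(\check{\boldsymbol g}))$ contains a derivative (confluent) vector. So rank deficiency of $\mathcal{T}_{\bar{\mathcal I}}(\check{\boldsymbol g})$ does not by itself give a rank drop of ${\boldsymbol A}_{\mathcal I}$ on the fewer distinct roots. To use it, you would have to rerun the column check for the square-free part $(1+z^{L-p})B'(z)$, with $N$ replaced by $N$ plus the number of common roots.

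Your second remedy, which you expected to fail, works at once:
\begin{itemize}
\item Your own range analysis shows that the nonzero entries of column $q$ are the $\check g_\ell$ with $0\le\ell\le p$. Each is matched with $\check g_{\ell+L-p}$ from the second block in column $L-p+q$, and every other entry vanishes in both columns.
\item So take $P_\gamma(z)=e^{-{\mathsf j}\gamma/2}\big(1+e^{{\mathsf j}\gamma}z^{L-p}\big)\sum_{k=0}^{p}z^k$, which is centro-Hermitian. Column $L-p+q$ is then exactly $e^{{\mathsf j}\gamma}$ times column $q$, and proportional columns already force rank deficiency.
\item The roots are the $L-p$ distinct $(L-p)$-th roots of $-e^{-{\mathsf j}\gamma}$ together with the $p$ nontrivial $(p+1)$-th roots of unity. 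The two families are disjoint for all but at most $p$ values of $e^{{\mathsf j}\gamma}$.
\end{itemize}
Choosing such a $\gamma$ gives $L$ distinct generators on the unit circle, and the proof is complete.
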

\begin{IEEEproof}
Choose the coefficient vector $\check{\boldsymbol g}$ with
$\check{g}_0 = \check{g}_p = \check{g}_{L-p} = \check{g}_L = 1$ and all
other entries zero. For $p=0$ the polynomial $P_{\check{\boldsymbol g}}(z)$  as in \eqref{eq:rootsofunity_caseA}. For  $1 \leq p \neq L - p$ the polynomial factorizes as
\begin{equation}\label{eq:rootsofunity_caseC}
P_{\check{\boldsymbol g}}(z) = 1 + z^p + z^{L-p} + z^L
= (1 + z^p)\,(1 + z^{L-p}),
\end{equation}
and for $1 \leq p = L - p$ (i.e., $L = 2p$) as
\begin{equation}\label{eq:rootsofunity_caseD}
P_{\check{\boldsymbol g}}(z) = 1 + 2z^p + z^{2p} = (1 + z^p)^2.
\end{equation}
In all cases all roots lie on the unit circle: at the $L$-th roots of
$-1$ in~\eqref{eq:rootsofunity_caseA}, at the $2p$-th roots of
$-1$ in~\eqref{eq:rootsofunity_caseC}, and likewise (with multiplicity
two) in~\eqref{eq:rootsofunity_caseD}.

We now show that, with this choice of $\check{\boldsymbol g}$, columns
$q$ and $L - p + q$ of ${\mathcal{T}}(\check{\boldsymbol g})$ are
identical. The nonzero entries in the first row of
${\mathcal{T}}(\check{\boldsymbol g})$ occur at column positions
$1, p+1, L-p+1, L+1$, all with value~$1$. Since
${\mathcal{T}}(\check{\boldsymbol g})$ is Toeplitz, an entry appearing in
the first row at column position $s + 1$ appears in row $r$ at column
position $r + s$. Hence column $j$ contains a nonzero entry in row $r$ if
and only if $r = j - s$ for some shift $s \in \{0, p, L - p, L\}$ (for $p = 0$ this  reduces to $s \in \{0, L\}$).

Restricting to valid row indices $\{1, \ldots, N\}$, where $N = M - L$,
the set of nonzero row indices in column $q$ is
\begin{equation}\label{eq:Set_q}
\mathcal{S}_q
= \{q,\, q - p,\, q - L + p,\, q - L\} \cap \{1, \ldots, N\}.
\end{equation}
The condition $p + 1 \leq q \leq L - p$ implies
$1 \leq q - p \leq q$ and $q - L + p \leq 0$, $q - L \leq 0$, so the last two
candidates are outside the valid range. Hence
$\mathcal{S}_q = \{q,\, q - p\}$ (for $p = 0$ this reduces to $\mathcal{S}_q = \{q\}$).

For column $L - p + q$, the analogous calculation yields
\begin{equation}
\mathcal{S}_{L-p+q}
= \{L - p + q,\, L - 2p + q,\, q,\, q - p\} \cap \{1, \ldots, N\}.
\end{equation}
The condition $N \leq L + q - 2p - 1$ is equivalent to
$L - 2p + q \geq N + 1$, so $L - 2p + q > N$. Since $p \geq 0$,
$L - p + q = (L - 2p + q) + p > N$ as well. Therefore the first two
candidates are outside the valid range, and
$\mathcal{S}_{L-p+q} = \{q,\, q - p\} = \mathcal{S}_q$ (for $p=0$ this reduces to $\mathcal{S}_{L+q} = \{q\} = \mathcal{S}_q$ ).

Columns $q$ and $L - p + q$ thus have nonzero entries in exactly the same
rows. In both columns, the entries in rows $q$ and $q - p$ equal
$\check{g}_0 = 1$ and $\check{g}_p = 1$ (for $p=0$ this reduces to $\check{g}_0 = 1$), respectively. Hence columns $q$
and $L - p + q$ of ${\mathcal{T}}(\check{\boldsymbol g})$ are identical.
Consequently, if $\mathcal{S}_{p,q} \subseteq \bar{\mathcal{I}}$, the
complementary submatrix
${\mathcal{T}}_{\bar{\mathcal{I}}}(\check{\boldsymbol g})$ contains two
identical columns and is rank deficient. By
Theorem~\ref{thm:rank_condition}, the corresponding generalized
Vandermonde matrix
${\boldsymbol A}_{\mathcal{I}}(\check{z}_1, \ldots, \check{z}_L)$ is rank
deficient.
\end{IEEEproof}
Corollaries~\ref{crl:corollary3}, \ref{crl:corollary4a},
and~\ref{crl:corollary4b} establish that certain index pairs are
\emph{mutually exclusive} for inclusion in the removal set
$\bar{\mathcal{I}}$: if both indices of $\mathcal{Q}$ or of
$\mathcal{S}_{p,q}$ are removed, the resulting array manifold becomes
ambiguous, and the corresponding steering matrix is rank deficient. The
ambiguity generator sets produced by these removal patterns, i.e., the
roots of the polynomials~\eqref{eq:rootsofunity_caseA},
\eqref{eq:rootsofunity_caseB}, \eqref{eq:rootsofunity_caseC},
and~\eqref{eq:rootsofunity_caseD}, are all roots of unity (or of $-1$).
Such roots-of-unity ambiguity patterns have also been reported
in~\cite{manikasModelingEstimationAmbiguities1998}
and~\cite{matterAmbiguitiesDoAEstimation2022}.

Furthermore, corollaries~\ref{crl:corollary3}, \ref{crl:corollary4a} and~\ref{crl:corollary4b}  can provide upper bounds for the maximum number of sources $L$ for a full rank thinned ULA steering matrix for a given aperture $M$ and a given number of removed sensors $M_{\bar{\mathcal I}} = M - M_{\mathcal I}$.
\subsection{Fixing Predefined Generators}
If a subset of generators $\{z_1, \ldots, z_{L'}\}$ of the polynomial
$P_{\boldsymbol g}(z)$ in~\eqref{eq:Polynomial} is prescribed, the
polynomial can be factorized as
\begin{equation}
P_{\boldsymbol g}(z) = P_{\boldsymbol f}(z)\, P_{\boldsymbol h}(z),
\end{equation}
where
\begin{align}
P_{\boldsymbol h}(z)
&= \sum_{\ell = 0}^{L'} h_{\ell}\, z^{\ell}
= \prod_{\ell = 1}^{L'} (z - z_\ell), \\
P_{\boldsymbol f}(z)
&= \sum_{n = 0}^{L - L'} f_{n}\, z^{n},
\end{align}
and the remaining $L - L'$ roots of $P_{\boldsymbol g}(z)$ are the roots
of $P_{\boldsymbol f}(z)$. Under this factorization, the Toeplitz matrix
in~\eqref{def:ToeplitzEven}--\eqref{def:ToeplitzOdd} admits the
factorization
\begin{equation}\label{eq:Toeplitz-factorization}
{\mathcal{T}}({\boldsymbol g}) = {\mathcal{T}}({\boldsymbol h})\,
{\mathcal{T}}({\boldsymbol f}),
\end{equation}
where
${\mathcal{T}}({\boldsymbol f}) \in \mathbb{C}^{(M - L + L') \times M}$
and
${\mathcal{T}}({\boldsymbol h}) \in \mathbb{C}^{(M - L) \times (M - L + L')}$.
Substituting~\eqref{eq:Toeplitz-factorization} into the definition of the
complementary submatrix~\eqref{def:DefG_0} yields
\begin{equation}\label{eq:ProductG_HF}
{\mathcal{T}}_{\bar{\mathcal{I}}}({\boldsymbol g})
= {\mathcal{T}}({\boldsymbol h})\,
  {\mathcal{T}}_{\bar{\mathcal{I}}}({\boldsymbol f}).
\end{equation}

Setting $\bar{L}' = L - L'$, the matrix ${\mathcal{T}}({\boldsymbol h})$
admits a convenient null-space characterization via the
property~\eqref{eq:subspace_property} applied to the polynomial
$P_{\boldsymbol h}(z)$: the columns of the Vandermonde matrix
${\boldsymbol A}_{M - \bar{L}'}(z_1, \ldots, z_{L'}) \in
\mathbb{C}^{(M - \bar{L}') \times L'}$ form a basis of
$\mathcal{N}({\mathcal{T}}({\boldsymbol h}))$, i.e.,
\begin{equation}\label{eq:TAM}
{\mathcal{T}}({\boldsymbol h})\,
{\boldsymbol A}_{M - \bar{L}'}(z_1, \ldots, z_{L'}) = {\boldsymbol 0}.
\end{equation}
This observation leads to the following corollary.

\begin{corollary}[Vandermonde reduction]\label{crl:corollary5}
Let $\{\acute{z}_1, \ldots, \acute{z}_{L'}\} \subset
\{z_1, \ldots, z_L\}$ be an arbitrary subset of $L' < L$ generators of
$P_{\boldsymbol g}(z)$, and let ${\boldsymbol f}$ be the coefficient
vector of the cofactor polynomial $P_{\boldsymbol f}(z)$, so that
$P_{\boldsymbol g}(z) = P_{\boldsymbol f}(z)\, P_{\boldsymbol h}(z)$ with
$P_{\boldsymbol h}(z) = \prod_{\ell = 1}^{L'} (z - \acute{z}_\ell)$. Then
the generalized Vandermonde matrix
${\boldsymbol A}_{\mathcal{I}}(z_1, \ldots, z_L)$ is rank deficient if and
only if the $(M - \bar{L}') \times (M_{\bar{\mathcal{I}}} + L')$ augmented
matrix
\begin{equation}\label{eq:generalizeVandermonde}
\mathcal{A}({\boldsymbol f}, \acute{z}_1, \ldots, \acute{z}_{L'})
=
\begin{bmatrix}
{\mathcal{T}}_{\bar{\mathcal{I}}}({\boldsymbol f}) &
{\boldsymbol A}_{M - \bar{L}'}(\acute{z}_1, \ldots, \acute{z}_{L'})
\end{bmatrix}
\end{equation}
is rank deficient.
\end{corollary}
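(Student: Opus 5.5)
By Theorem~\ref{thm:rank_condition}, ${\boldsymbol A}_{\mathcal{I}}(z_1,\ldots,z_L)$ is rank deficient if and only if ${\mathcal{T}}_{\bar{\mathcal{I}}}({\boldsymbol g})$ has a nontrivial right null vector, i.e., there is ${\boldsymbol x}\neq{\boldsymbol 0}$ in $\mathbb{C}^{M_{\bar{\mathcal{I}}}}$ with ${\mathcal{T}}_{\bar{\mathcal{I}}}({\boldsymbol g}){\boldsymbol x}={\boldsymbol 0}$. I will show that this holds if and only if $\mathcal{A}({\boldsymbol f},\acute z_1,\ldots,\acute z_{L'})$ has a nontrivial right null vector. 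The plan is to use the factorization~\eqref{eq:ProductG_HF}, ${\mathcal{T}}_{\bar{\mathcal{I}}}({\boldsymbol g})={\mathcal{T}}({\boldsymbol h})\,{\mathcal{T}}_{\bar{\mathcal{I}}}({\boldsymbol f})$, together with the null-space identity~\eqref{eq:TAM}, ${\mathcal N}({\mathcal{T}}({\boldsymbol h}))={\mathcal R}({\boldsymbol A}_{M-\bar L'}(\acute z_1,\ldots,\acute z_{L'}))$. The identity~\eqref{eq:TAM} is an equality of subspaces: ${\mathcal{T}}({\boldsymbol h})$ is a banded Toeplitz matrix of size $(M-L)\times(M-L+L')$ with nonzero leading coefficient $h_0$ (or $h_{L'}=1$), hence of full row rank $M-L$. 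Its null space therefore has dimension $L'$, and the $L'$ columns of the Vandermonde matrix (distinct generators, $M-\bar L'\ge L'$) are independent and annihilated by it.

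For ($\Rightarrow$), let ${\mathcal{T}}({\boldsymbol h}){\mathcal{T}}_{\bar{\mathcal{I}}}({\boldsymbol f}){\boldsymbol x}={\boldsymbol 0}$ with ${\boldsymbol x}\neq{\boldsymbol 0}$. Then ${\mathcal{T}}_{\bar{\mathcal{I}}}({\boldsymbol f}){\boldsymbol x}\in{\mathcal N}({\mathcal{T}}({\boldsymbol h}))$, so ${\mathcal{T}}_{\bar{\mathcal{I}}}({\boldsymbol f}){\boldsymbol x}={\boldsymbol A}_{M-\bar L'}{\boldsymbol y}$ for some ${\boldsymbol y}$. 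Hence $[{\boldsymbol x}^{\mathsf T},-{\boldsymbol y}^{\mathsf T}]^{\mathsf T}$ is a nonzero null vector of $\mathcal{A}$, and $\mathcal{A}$ is rank deficient.

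For ($\Leftarrow$), suppose $\mathcal{A}[{\boldsymbol x}^{\mathsf T},{\boldsymbol y}^{\mathsf T}]^{\mathsf T}={\boldsymbol 0}$ with the stacked vector nonzero. Left-multiplying by ${\mathcal{T}}({\boldsymbol h})$ kills the Vandermonde block and leaves ${\mathcal{T}}_{\bar{\mathcal{I}}}({\boldsymbol g}){\boldsymbol x}={\boldsymbol 0}$. What remains is to show ${\boldsymbol x}\neq{\boldsymbol 0}$. If ${\boldsymbol x}={\boldsymbol 0}$, then ${\boldsymbol A}_{M-\bar L'}{\boldsymbol y}={\boldsymbol 0}$ with ${\boldsymbol y}\neq{\boldsymbol 0}$, which contradicts the full column rank of a Vandermonde matrix with distinct generators. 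So ${\boldsymbol x}\neq{\boldsymbol 0}$, ${\mathcal{T}}_{\bar{\mathcal{I}}}({\boldsymbol g})$ is rank deficient, and Theorem~\ref{thm:rank_condition} gives the claim.

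The main obstacle is not this linear-algebra chain but making the factorization~\eqref{eq:Toeplitz-factorization} rigorous with consistent dimensions and structure. Once it holds, restricting to the columns $\bar{\mathcal{I}}$ is immediate. I would verify it by identifying row $r$ of ${\mathcal{T}}({\boldsymbol u})$ with the coefficient vector of $z^{r-1}P_{\boldsymbol u}(z)$, i.e., of a shifted polynomial. Row $r$ of ${\mathcal{T}}({\boldsymbol h}){\mathcal{T}}({\boldsymbol f})$ is then the coefficient vector of $\sum_k h_k z^{r-1+k}P_{\boldsymbol f}(z)=z^{r-1}P_{\boldsymbol h}(z)P_{\boldsymbol f}(z)$, which is row $r$ of ${\mathcal{T}}({\boldsymbol g})$; this is exactly polynomial multiplication as convolution.

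A secondary point is the centro-Hermitian normalization. The factor $P_{\boldsymbol h}=\prod(z-\acute z_\ell)$ need not be centro-Hermitian, and ${\mathcal{T}}({\boldsymbol g})$ is defined only up to a nonzero scalar. By the Remark on generators off the unit circle, Theorem~\ref{thm:rank_condition} and~\eqref{eq:subspace_property} hold for general coefficient vectors, and rescaling by a nonzero scalar does not change any rank. So these normalizations do not affect the argument.
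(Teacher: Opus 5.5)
Your proposal is correct and follows essentially the same route as the paper. Both arguments combine Theorem~\ref{thm:rank_condition}, the factorization ${\mathcal{T}}_{\bar{\mathcal{I}}}({\boldsymbol g})={\mathcal{T}}({\boldsymbol h})\,{\mathcal{T}}_{\bar{\mathcal{I}}}({\boldsymbol f})$, the null-space identity~\eqref{eq:TAM}, and the full column rank of the Vandermonde block, which guarantees that the ${\mathcal{T}}_{\bar{\mathcal{I}}}({\boldsymbol f})$-part of any null vector of $\mathcal{A}$ is nonzero. Your additional checks supply details the paper only asserts: the convolution argument for~\eqref{eq:Toeplitz-factorization}, and the dimension count showing ${\mathcal N}({\mathcal{T}}({\boldsymbol h}))={\mathcal R}({\boldsymbol A}_{M-\bar L'})$.
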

\begin{IEEEproof}
By Theorem~\ref{thm:rank_condition}, rank deficiency of
${\boldsymbol A}_{\mathcal{I}}(z_1, \ldots, z_L)$ is equivalent to the
existence of a nonzero $\boldsymbol{\alpha}$ with
$\mathcal{T}_{\bar{\mathcal{I}}}(\boldsymbol{g})\,\boldsymbol{\alpha}
= \boldsymbol{0}$. Substituting~\eqref{eq:ProductG_HF}, this means
$\mathcal{T}_{\bar{\mathcal{I}}}(\boldsymbol{f})\,\boldsymbol{\alpha}
\in \mathcal{N}(\mathcal{T}(\boldsymbol{h}))$, which by~\eqref{eq:TAM}
is equivalent to the existence of $\boldsymbol{\beta}$ with
$\mathcal{T}_{\bar{\mathcal{I}}}(\boldsymbol{f})\,\boldsymbol{\alpha}
= {\boldsymbol A}_{M - \bar{L}'}(\acute{z}_1, \ldots, \acute{z}_{L'})\,
\boldsymbol{\beta}$. The latter is precisely the statement that
$[\boldsymbol{\alpha}^{\mathsf{T}},\, -\boldsymbol{\beta}^{\mathsf{T}}]^{\mathsf{T}}$
is a null vector of $\mathcal{A}(\boldsymbol{f}, \acute{z}_1, \ldots,
\acute{z}_{L'})$. Since the prescribed generators are distinct,
${\boldsymbol A}_{M - \bar{L}'}$ has full column rank, so any such null
vector has $\boldsymbol{\alpha} \neq \boldsymbol{0}$.
\end{IEEEproof}
Corollary~\ref{crl:corollary5} provides a direct relation between the
rank of the generalized Vandermonde matrix
${\boldsymbol A}_{\mathcal{I}}(z_1, \ldots, z_L)$ of the thinned array and
that of the full Vandermonde matrix
${\boldsymbol A}_{M - \bar{L}'}(\acute{z}_1, \ldots, \acute{z}_{L'})$ of a
ULA with $M - \bar{L}'$ elements, for any arbitrary subset of $L'$
generators. This reveals a structural link between sparse and fully
populated arrays: characterizing ambiguities in the thinned array reduces
to characterizing rank deficiencies of an augmented full-ULA steering
matrix.
\section{Analysis and design examples}\label{sec:Examples}
In this section, we illustrate the algebraic framework introduced in
Section~\ref{sec:MainResults} on representative sparse linear array
configurations. The examples cover both the \emph{analysis} task of fully
characterizing the ambiguity sets of a given thinned ULA and the
\emph{design} task of constructing sparse arrays with favorable ambiguity
behavior.

\begin{example}\label{exmpl:1}
\begin{figure}[t]
\centering
\begin{tikzpicture}[
    x=0.8cm, y=0.8cm,
    sensor/.style={circle, fill=black, draw=black, inner sep=0pt, minimum size=5pt},
    removed/.style={circle, fill=white, draw=black, inner sep=0pt, minimum size=5pt, line width=0.5pt},
]
  \draw[gray!60, thin] (0.5,0) -- (9.5,0);

  \foreach \p in {1,2,3,5,6,7,8,9}{
    \node[sensor] at (\p,0) {};
  }

  \node[removed] at (4,0) {};

  \node[anchor=north, font=\footnotesize] at (1,-0.2) {$1$};
  \node[anchor=north, font=\footnotesize] at (4,-0.2) {$i$};
  \node[anchor=north, font=\footnotesize] at (9,-0.2) {$9$};
\end{tikzpicture}
\caption{Thinned ULA considered in Corollary~\ref{crl:corollary3} and
Examples~\ref{exmpl:1}--\ref{exmpl:3}, with $M_{\mathcal{I}} = 8$ sensors
and a single hole at position $i \in \{2, 3, \ldots, 8\}$. Filled circles
denote the retained sensors
$\mathcal{I} = \{1, 2, \ldots, 9\} \setminus \{i\}$; the open circle marks
the removed position $\bar{\mathcal{I}} = \{i\}$.}
\label{fig:single_hole_ula}
\end{figure}

Consider a thinned ULA with $M_{\mathcal{I}} = 8$ sensors and a single
hole at position $i$, $i \in \{2, 3, \ldots, 8\}$, so that
$\mathcal{I} = \{1, 2, \ldots, 9\} \setminus \{i\}$ and $M = 9$. The array
is illustrated in Fig.~\ref{fig:single_hole_ula} for an arbitrary choice
of $i$. For $L = 8$ sources, the Toeplitz matrix is the single row
$\mathcal{T}(\boldsymbol{g}) = [g_0,\, g_1,\, g_2,\, g_3,\, g_4,\, g_3^*,\,
g_2^*,\, g_1^*,\, g_0^*]$, where $g_0, \ldots, g_3 \in \mathbb{C}$ with
$g_0 \neq 0$ and $g_4 \in \mathbb{R}$. The complementary submatrix
in~\eqref{def:DefG_0} reduces to the scalar
$\mathcal{T}_{\bar{\mathcal{I}}}(\boldsymbol{g}) = g_{i-1}$, which vanishes
for $g_{i-1} = 0$. Since the centro-Hermitian property
$g_{i-1} = g_{9-i}^*$ couples the coefficients pairwise, and the polynomial
$P_{\boldsymbol{g}}(z)$ in~\eqref{eq:Polynomial} can always be chosen to
have all roots on the unit circle (for instance, via the constructions
in~\eqref{eq:rootsofunity_caseA}), the
condition $g_{i-1} = 0$ is satisfiable. By
Theorem~\ref{thm:rank_condition}, ambiguities always exist. This is a
special case of Corollary~\ref{crl:corollary1}.
\end{example}
\begin{example}\label{exmpl:2}
\begin{figure}[t]
\centering
\begin{tikzpicture}[
    x=0.8cm, y=0.8cm,
    sensor/.style={circle, fill=black, draw=black, inner sep=0pt, minimum size=5pt},
    removed/.style={circle, fill=white, draw=black, inner sep=0pt, minimum size=5pt, line width=0.5pt},
]
  \draw[gray!60, thin] (0.5,0) -- (9.5,0);

  \foreach \p in {1,3,4,5,6,7,8,9}{
    \node[sensor] at (\p,0) {};
  }

  \node[removed] at (2,0) {};

  \foreach \p in {1,...,9}{
    \node[anchor=north, font=\footnotesize] at (\p,-0.2) {$\p$};
  }
\end{tikzpicture}
\caption{Thinned ULA considered in Example~\ref{exmpl:2} for $L = 7$
sources with $M_{\mathcal{I}} = 8$ sensors and a single hole at position
$i_1 = 2$, i.e., $\mathcal{I} = \{1, 3, 4, 5, 6, 7, 8, 9\}$.}
\label{fig:hole_at_2}
\end{figure}

Consider again a thinned ULA with $M_{\mathcal{I}} = 8$ sensors but with
the single hole at position $i_1 = 2$, i.e.,
$\mathcal{I} = \{1, 3, 4, 5, 6, 7, 8, 9\}$ and $L = 7$ sources. The array
is illustrated in Fig.~\ref{fig:hole_at_2}. The Toeplitz matrix becomes
\begin{equation}\label{eq:example2}
\mathcal{T}(\boldsymbol{g}) =
\begin{bmatrix}
g_0 & g_1 & g_2 & g_3 & g_3^* & g_2^* & g_1^* & g_0^* & 0\\
0   & g_0 & g_1 & g_2 & g_3   & g_3^* & g_2^* & g_1^* & g_0^*
\end{bmatrix},
\end{equation}
and the complementary submatrix reduces to
$\mathcal{T}_{\bar{\mathcal{I}}}(\boldsymbol{g}) = [g_1,\, g_0]^{\mathsf{T}}$.
Since $g_0 \neq 0$ by definition, this column vector has full column rank,
i.e., it cannot drop rank below one. Hence, by
Theorem~\ref{thm:rank_condition}, the array manifold
${\boldsymbol A}_{\mathcal{I}}(z_1, \ldots, z_L)$ is unambiguous for
$L = 7$, and the same conclusion extends to $L < 7$.

The array defined by $\mathcal{I}$ is a maximum-consecutive-lag array,
so Example~\ref{exmpl:2} is a special case of
Corollary~\ref{crl:corollary2}.
\end{example}
\begin{example}\label{exmpl:3}
\begin{figure}[t]
\centering
\begin{tikzpicture}[
    x=0.8cm, y=0.8cm,
    sensor/.style={circle, fill=black, draw=black, inner sep=0pt, minimum size=5pt},
    removed/.style={circle, fill=white, draw=black, inner sep=0pt, minimum size=5pt, line width=0.5pt},
]
  \draw[gray!60, thin] (0.5,0) -- (9.5,0);

  \foreach \p in {1,2,4,5,6,7,8,9}{
    \node[sensor] at (\p,0) {};
  }

  \node[removed] at (3,0) {};

  \foreach \p in {1,...,9}{
    \node[anchor=north, font=\footnotesize] at (\p,-0.2) {$\p$};
  }
\end{tikzpicture}
\caption{Thinned ULA considered in Example~\ref{exmpl:3} for $L = 7$
sources with $M_{\mathcal{I}} = 8$ sensors and a single hole at position
$i_1 = 3$, i.e., $\mathcal{I} = \{1, 2, 4, 5, 6, 7, 8, 9\}$.}
\label{fig:hole_at_3}
\end{figure}

Consider a thinned ULA with $M_{\mathcal{I}} = 8$ sensors and a single
hole at position $i_1 = 3$, i.e.,
$\mathcal{I} = \{1, 2, 4, 5, 6, 7, 8, 9\}$ and $L = 7$ sources, as shown
in Fig.~\ref{fig:hole_at_3}. The Toeplitz matrix
$\mathcal{T}(\boldsymbol{g})$ has the same form as
in~\eqref{eq:example2}, and the complementary submatrix becomes
$\mathcal{T}_{\bar{\mathcal{I}}}(\boldsymbol{g}) = [g_2,\, g_1]^{\mathsf{T}}$.
This vector vanishes if and only if $g_1 = g_2 = 0$, while $g_0$ and $g_3$
remain free. Hence the ambiguity sets of the array are characterized by
the unit-circle roots of the polynomial
\begin{equation}\label{eq:poly}
P_{\boldsymbol{g}}(z) = g_0 + g_3\, z^3 + g_3^*\, z^4 + g_0^*\, z^7,
\end{equation}
in the sense that any choice of $g_0, g_3 \in \mathbb{C}$ (with
$g_0 \neq 0$) for which all roots of $P_{\boldsymbol{g}}(z)$ lie on the
unit circle yields an ambiguity set at which the generalized Vandermonde
matrix ${\boldsymbol A}_{\mathcal{I}}(z_1, \ldots, z_L)$ becomes rank
deficient. In practice, however, it is not obvious which choices of
$g_0, g_3$ admit such a root configuration, so we turn to the augmented
matrix formulation of Corollary~\ref{crl:corollary5}.

\begin{figure}
\centering
\setlength{\figurewidth}{0.8\linewidth}
\setlength{\figureheight}{0.45\linewidth}
\includegraphics[width=\columnwidth]{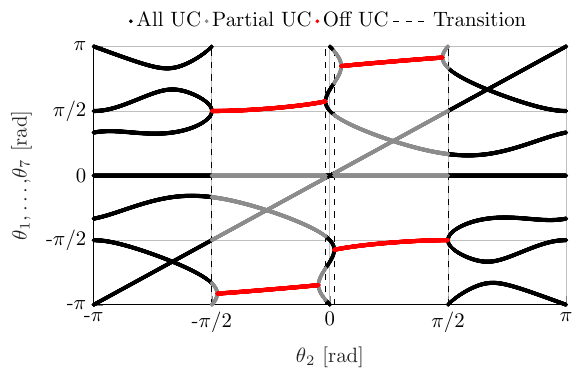}
\vspace{-1em}
\caption{Ambiguity sets for the thinned ULA in Fig.~\ref{fig:hole_at_3}
with $L = 7$ sources. The root angles $\theta_i = \arg z_i$ of the
polynomial $P_{{\boldsymbol f}, \acute{z}_2}(z)$
in~\eqref{eq:Poly_f_z2} are plotted against the angle
$\theta_2 = \arg \acute{z}_2$, which is varied across the field of view.
Black dots mark values of $\acute{z}_2$ for which \emph{all} roots of
$P_{{\boldsymbol f}, \acute{z}_2}(z)$ lie on the unit circle (i.e.,
$|z_i| = 1$ for all $i$); red dots mark roots that lie off the unit
circle; gray dots mark unit-circle roots in cases where not all roots
are on the unit circle.}
\label{fig:thin_ula_miss3}
\end{figure}

Fix two prescribed generators $\acute{z}_1 = 1$ and
$\acute{z}_2 = e^{{\mathsf j}\theta_2}$, with $\acute{z}_2 \neq 1$. By
Corollary~\ref{crl:corollary5}, the array manifold is ambiguous if and
only if the $4 \times 3$ augmented matrix
\begin{equation}\label{eq:Adef_exmpl3}
\mathcal{A}({\boldsymbol f}, \acute{z}_1, \acute{z}_2)
=
\begin{bmatrix}
f_2 & 1 & 1            \\
f_1 & 1 & \acute{z}_2     \\
f_0 & 1 & \acute{z}_2^2   \\
0   & 1 & \acute{z}_2^3
\end{bmatrix}
\end{equation}
is rank deficient. Since $\acute{z}_2 \neq 1$, the second and third
columns of $\mathcal{A}({\boldsymbol f}, \acute{z}_1, \acute{z}_2)$ are
linearly independent. Hence rank deficiency holds if and only if the
first column lies in the span of the other two, i.e., if there exist
$\alpha_1, \alpha_2 \in \mathbb{C}$ such that
\begin{equation}
\begin{bmatrix}
f_2\\ f_1\\ f_0\\ 0
\end{bmatrix}
=
\alpha_1\,
\begin{bmatrix}
1\\1\\1\\1
\end{bmatrix}
+
\alpha_2\,
\begin{bmatrix}
1\\ \acute{z}_2\\ \acute{z}_2^2\\ \acute{z}_2^3
\end{bmatrix}.
\end{equation}
From the last row, $\alpha_1 = -\alpha_2\, \acute{z}_2^3$. Substituting
into the remaining rows yields
\begin{equation}
f_2 = \alpha_2(1 - \acute{z}_2^3),
\quad
f_1 = \alpha_2\, \acute{z}_2(1 - \acute{z}_2^2),
\quad
f_0 = \alpha_2\, \acute{z}_2^2(1 - \acute{z}_2).
\end{equation}
Since $f_0 \neq 0$ and $\acute{z}_2 \neq 1$, we obtain
\begin{equation}
\alpha_2 = \frac{f_0}{\acute{z}_2^2\,(1 - \acute{z}_2)},
\end{equation}
and the rank-deficient (i.e., ambiguous) case is characterized by
\begin{equation}\label{eq:f1_f2_exmpl3}
f_1 = f_0\, \frac{1 + \acute{z}_2}{\acute{z}_2},
\qquad
f_2 = f_0\, \frac{1 + \acute{z}_2 + \acute{z}_2^2}{\acute{z}_2^2}.
\end{equation}
The associated centro-Hermitian polynomial reads
\begin{equation}
Q(z) = f_0 + f_1 z + f_2 z^2 + f_2^* z^3 + f_1^* z^4 + f_0^* z^5.
\end{equation}
Substituting~\eqref{eq:f1_f2_exmpl3} and writing $f_0 = e^{{\mathsf j}\phi_0}$
gives, after dividing by $f_0$,
\begin{align}
\frac{Q(z)}{f_0}
&= 1
+ \frac{1 + \acute{z}_2}{\acute{z}_2}\, z
+ \frac{1 + \acute{z}_2 + \acute{z}_2^2}{\acute{z}_2^2}\, z^2 \notag\\
&\quad
+ e^{-{\mathsf j}2\phi_0}(1 + \acute{z}_2 + \acute{z}_2^2)\, z^3
+ e^{-{\mathsf j}2\phi_0}(1 + \acute{z}_2)\, z^4
+ z^5.
\end{align}
For all roots of $Q(z)$ to lie on the unit circle, $Q(z)$ must be
self-inversive, which forces $e^{-{\mathsf j}2\phi_0} = 1$, i.e.,
$f_0 = \pm 1$.

Up to the nonzero factor $f_0$, the ambiguity condition is therefore
governed by the polynomial
\begin{equation}\label{eq:Poly_f_z2}
\begin{aligned}
P_{{\boldsymbol f}, \acute{z}_2}(z)
&= z^5
+ (1 + \acute{z}_2)\, z^4
+ (1 + \acute{z}_2 + \acute{z}_2^2)\, z^3\\
&\quad
+ \frac{1 + \acute{z}_2 + \acute{z}_2^2}{\acute{z}_2^2}\, z^2
+ \frac{1 + \acute{z}_2}{\acute{z}_2}\, z
+ 1,
\end{aligned}
\end{equation}
subject to the exclusions $z \neq 1$ and $z \neq \acute{z}_2$ (which
exclude the prescribed generators). The array manifold is ambiguous
precisely at those $\acute{z}_2 = e^{{\mathsf j}\theta_2}$,
$\acute{z}_2 \neq 1$, for which all roots of
$P_{{\boldsymbol f}, \acute{z}_2}(z)$ lie on the unit circle.
Figure~\ref{fig:thin_ula_miss3} displays the corresponding values of
$\theta_2$ over the field of view.
\end{example}
\begin{example}\label{exmpl:4a}
\begin{figure}[t]
\centering
\begin{tikzpicture}[
    x=0.8cm, y=0.8cm,
    sensor/.style={circle, fill=black, draw=black, inner sep=0pt, minimum size=5pt},
    removed/.style={circle, fill=white, draw=black, inner sep=0pt, minimum size=5pt, line width=0.5pt},
]
  \draw[gray!60, thin] (0.5,0) -- (9.5,0);

  \foreach \p in {1,4,5,6,7,8,9}{
    \node[sensor] at (\p,0) {};
  }

  \foreach \p in {2,3}{
    \node[removed] at (\p,0) {};
  }

  \foreach \p in {1,...,9}{
    \node[anchor=north, font=\footnotesize] at (\p,-0.2) {$\p$};
  }
\end{tikzpicture}
\caption{Thinned ULA considered in Examples~\ref{exmpl:4a} and
\ref{exmpl:4b}, for $L = 7$ sources (Example~\ref{exmpl:4a}) and $L = 6$
sources (Example~\ref{exmpl:4b}), with $M_{\mathcal{I}} = 7$ sensors and
two adjacent holes at positions $i_1 = 2$ and $i_2 = 3$, i.e.,
$\mathcal{I} = \{1, 4, 5, 6, 7, 8, 9\}$.}
\label{fig:holes_at_2_3}
\end{figure}

Consider a thinned ULA with $M_{\mathcal{I}} = 7$ sensors and two adjacent
holes at positions $i_1 = 2$ and $i_2 = 3$, i.e.,
$\mathcal{I} = \{1, 4, 5, 6, 7, 8, 9\}$ and $L = 7$ sources, as depicted
in Fig.~\ref{fig:holes_at_2_3}. The Toeplitz matrix
$\mathcal{T}(\boldsymbol{g})$ has the same form as
in~\eqref{eq:example2}, and the complementary submatrix
\eqref{def:DefG_0} becomes
\begin{equation}
\mathcal{T}_{\bar{\mathcal{I}}}(\boldsymbol{g})
=
\begin{bmatrix}
g_1 & g_2 \\
g_0 & g_1
\end{bmatrix}.
\end{equation}
This matrix is rank deficient if and only if
$\det\big(\mathcal{T}_{\bar{\mathcal{I}}}(\boldsymbol{g})\big)
= g_1^2 - g_0 g_2 = 0$. Setting $g_0 = 1$ without loss of generality
(after an appropriate rotation of the roots), the condition reduces to
$g_2 = g_1^2$, i.e., for $g_1 = |g_1|\, e^{{\mathsf j}\phi_1}$,
$g_2 = |g_1|^2\, e^{{\mathsf j}2\phi_1}$. The corresponding polynomial
becomes
\begin{equation}\label{eq:poly4}
\begin{aligned}
P_{\boldsymbol{g}}(z)
&= 1
+ |g_1|\, e^{{\mathsf j}\phi_1}\, z
+ |g_1|^2\, e^{{\mathsf j}2\phi_1}\, z^2
+ g_3\, z^3
+ g_3^*\, z^4 \\
&\quad
+ |g_1|^2\, e^{-{\mathsf j}2\phi_1}\, z^5
+ |g_1|\, e^{-{\mathsf j}\phi_1}\, z^6
+ z^7,
\end{aligned}
\end{equation}
where $g_1$ and $g_3 = |g_3|\, e^{{\mathsf j}\phi_3}$ are free parameters.
Ambiguities exist whenever all roots of~\eqref{eq:poly4} lie on the unit
circle. However, fully characterizing the parameter range
$(|g_1|, \phi_1, |g_3|, \phi_3)$ for which this holds, and devising a
constructive procedure to enumerate the ambiguity sets, appears difficult
through direct analysis of~\eqref{eq:poly4}.

We therefore turn to Corollary~\ref{crl:corollary5}. Fix three prescribed
generators on the unit circle: $\acute{z}_1 = 1$,
$\acute{z}_2 = e^{{\mathsf j}\theta_2}$, and
$\acute{z}_3 = e^{{\mathsf j}\theta_3}$, with $\acute{z}_i \neq \acute{z}_j$
for $i \neq j$. The cofactor coefficient vector is
$\boldsymbol{f} = [f_0, f_1, f_2, f_1^*, f_0^*]^{\mathsf{T}}$ with
$f_0, f_1 \in \mathbb{C}$ and $f_2 \in \mathbb{R}$ (from the
centro-Hermitian condition $f_\ell = f_{L - L' - \ell}^*$ with
$L - L' = 4$). The augmented matrix
in~\eqref{eq:generalizeVandermonde} is the $5 \times 5$ matrix
\begin{equation}\label{eq:Adef_exmpl4a}
\mathcal{A}(\boldsymbol{f}, \acute{z}_2, \acute{z}_3)
=
\begin{bmatrix}
f_1 & f_2 & 1 & 1               & 1               \\
f_0 & f_1 & 1 & \acute{z}_2      & \acute{z}_3      \\
0   & f_0 & 1 & \acute{z}_2^{2}  & \acute{z}_3^{2}  \\
0   & 0   & 1 & \acute{z}_2^{3}  & \acute{z}_3^{3}  \\
0   & 0   & 1 & \acute{z}_2^{4}  & \acute{z}_3^{4}
\end{bmatrix}.
\end{equation}
By Corollary~\ref{crl:corollary5}, the array manifold is ambiguous if and
only if $\mathcal{A}(\boldsymbol{f}, \acute{z}_2, \acute{z}_3)$ is rank
deficient.

Gaussian elimination on~\eqref{eq:Adef_exmpl4a} yields a rank-deficiency
condition that can be expressed in terms of the complete homogeneous
symmetric polynomials in the prescribed generators
$\acute{z}_1 = 1, \acute{z}_2, \acute{z}_3$:
\begin{align}
S_1 &= 1 + \acute{z}_2 + \acute{z}_3,\\
S_2 &= 1 + \acute{z}_2 + \acute{z}_3 + \acute{z}_2^2
       + \acute{z}_2 \acute{z}_3 + \acute{z}_3^2.
\end{align}
Specifically, $\mathcal{A}(\boldsymbol{f}, \acute{z}_2, \acute{z}_3)$ is
rank deficient if and only if
\begin{equation}\label{eq:simple_procedure}
f_2 = \frac{f_1^2}{f_0} - f_1\, S_1^* + f_0\, S_2^*.
\end{equation}

A practical construction of rank-deficient matrices
$\mathcal{A}(\boldsymbol{f}, \acute{z}_2, \acute{z}_3)$ proceeds as
follows. First, fix two distinct unit-modulus generators
$\acute{z}_2, \acute{z}_3 \neq 1$. Setting $\gamma = f_1 / f_0$,
condition~\eqref{eq:simple_procedure} can be rewritten as
\begin{equation}
\frac{f_2}{f_0} = \gamma^2 - \gamma\, S_1^* + S_2^*,
\end{equation}
which is a quadratic function in $\gamma$. Choosing $f_0$ on the unit circle and
$f_2 \in [-6, 6]$ real (to satisfy the centro-Hermitian constraint), the
quadratic function yields up to two candidate values for $\gamma$, and hence up
to two candidate values $f_1 = \gamma\, f_0$, each producing a rank-4
matrix $\mathcal{A}(\boldsymbol{f}, \acute{z}_2, \acute{z}_3)$.

To select solutions that yield additional generators on the unit circle,
one subsequently tests the centro-Hermitian polynomial
\begin{equation}
F(z) = f_0 + f_1 z + f_2 z^2 + f_1^* z^3 + f_0^* z^4.
\end{equation}
If all four roots of $F(z)$ lie on the unit circle, the corresponding
parameters $(\boldsymbol{f}, \acute{z}_2, \acute{z}_3)$ are accepted;
otherwise, $f_0$ and/or $f_2$ are adjusted and the procedure is repeated.

Numerical experiments show that, for any choice of three distinct
prescribed generators $\acute{z}_1 = 1, \acute{z}_2, \acute{z}_3$ on the
unit circle, parameters $(f_0, f_1, f_2)$ satisfying the above
construction can be found such that $F(z)$ has four distinct unit-circle
roots $\acute{z}_4, \acute{z}_5, \acute{z}_6, \acute{z}_7$, which together
with $\acute{z}_1, \acute{z}_2, \acute{z}_3$ form an ambiguity set of
seven distinct generators on the unit circle. By
Corollary~\ref{crl:corollary5}, the generalized Vandermonde matrix
${\boldsymbol A}_{\mathcal{I}}(\acute{z}_1, \ldots, \acute{z}_7)$ is rank
deficient. This is consistent with Corollary~\ref{crl:corollary1}, which
states that ambiguities always exist whenever the number of sources
satisfies $L \geq M_{\mathcal{I}}$ , here, $L = M_{\mathcal{I}} = 7$, the
boundary case.
\end{example}
\begin{example}\label{exmpl:4b}
Consider again the thinned ULA of Fig.~\ref{fig:holes_at_2_3} with
$M_{\mathcal{I}} = 7$ sensors and removal set
$\bar{\mathcal{I}} = \{2, 3\}$, but now for $L = 6$ sources. The Toeplitz
matrix becomes
\begin{equation}\label{eq:example5}
\mathcal{T}(\boldsymbol{g})
=
\begin{bmatrix}
g_0 & g_1 & g_2 & g_3 & g_2^* & g_1^* & g_0^* & 0     & 0\\
0   & g_0 & g_1 & g_2 & g_3   & g_2^* & g_1^* & g_0^* & 0\\
0   & 0   & g_0 & g_1 & g_2   & g_3   & g_2^* & g_1^* & g_0^*
\end{bmatrix},
\end{equation}
and the complementary submatrix is
\begin{equation}
\mathcal{T}_{\bar{\mathcal{I}}}(\boldsymbol{g})
=
\begin{bmatrix}
g_1 & g_2 \\
g_0 & g_1 \\
0   & g_0
\end{bmatrix}.
\end{equation}
The $2 \times 2$ submatrix formed by the last two rows is upper
triangular with determinant $g_0^2 \neq 0$, so
$\mathcal{T}_{\bar{\mathcal{I}}}(\boldsymbol{g})$ has full column rank
unconditionally. By Theorem~\ref{thm:rank_condition}, the array manifold
${\boldsymbol A}_{\mathcal{I}}(z_1, \ldots, z_L)$ is unambiguous for
$L = 6$, and the same conclusion extends to $L < 6$. This example is an
instance of Corollary~\ref{crl:corollary2}: removing the two consecutive
sensors at positions $\{2, 3\}$ leaves the retained set
$\mathcal{I} = \{1\} \cup \{4, 5, \ldots, 9\}$, which contains the
$6$-element contiguous ULA subarray $\{4, 5, \ldots, 9\}$, sufficient to
guarantee unambiguous identification of $L \leq 6$ sources.
\end{example}
\begin{example}\label{exmpl:6}
\begin{figure}[t]
\centering
\begin{tikzpicture}[
    x=0.8cm, y=0.8cm,
    sensor/.style={circle, fill=black, draw=black, inner sep=0pt, minimum size=5pt},
    removed/.style={circle, fill=white, draw=black, inner sep=0pt, minimum size=5pt, line width=0.5pt},
]
  \draw[gray!60, thin] (0.5,0) -- (9.5,0);

  \foreach \p in {1,3,5,6,7,8,9}{
    \node[sensor] at (\p,0) {};
  }

  \foreach \p in {2,4}{
    \node[removed] at (\p,0) {};
  }

  \foreach \p in {1,...,9}{
    \node[anchor=north, font=\footnotesize] at (\p,-0.2) {$\p$};
  }
\end{tikzpicture}
\caption{Thinned ULA considered in Example~\ref{exmpl:6} for $L = 6$
sources with $M_{\mathcal{I}} = 7$ sensors and two holes at positions
$i_1 = 2$ and $i_2 = 4$, i.e., $\mathcal{I} = \{1, 3, 5, 6, 7, 8, 9\}$.}
\label{fig:holes_at_2_4}
\end{figure}

Consider the thinned ULA depicted in Fig.~\ref{fig:holes_at_2_4}, with
$M_{\mathcal{I}} = 7$ sensors and two holes at positions $i_1 = 2$ and
$i_2 = 4$, i.e., $\mathcal{I} = \{1, 3, 5, 6, 7, 8, 9\}$ and $L = 6$
sources. The Toeplitz matrix $\mathcal{T}(\boldsymbol{g})$ has the same
form as in~\eqref{eq:example5}, and the complementary submatrix is
\begin{equation}\label{eq:Exmpl6T_I}
\mathcal{T}_{\bar{\mathcal{I}}}(\boldsymbol{g})
=
\begin{bmatrix}
g_1 & g_3 \\
g_0 & g_2 \\
0   & g_1
\end{bmatrix}.
\end{equation}
Setting $g_0 = 1$ without loss of generality (by an appropriate rotation
of the roots), we analyze the rank of~\eqref{eq:Exmpl6T_I}. Suppose that
$\mathcal{T}_{\bar{\mathcal{I}}}(\boldsymbol{g})$ is column-rank
deficient. Then there exists a nonzero vector
$\boldsymbol{\alpha}=[\alpha_1,\alpha_2]^{\mathsf T}$ such that
\begin{equation}
\begin{bmatrix}
g_1 & g_3 \\
1   & g_2 \\
0   & g_1
\end{bmatrix}
\begin{bmatrix}
\alpha_1\\\alpha_2
\end{bmatrix}
=\boldsymbol{0}.
\end{equation}
If $\alpha_2=0$, then the second row gives $\alpha_1=0$, contradicting
$\boldsymbol{\alpha}\neq \boldsymbol{0}$. Hence $\alpha_2\neq 0$. The
third row therefore implies $g_1=0$, and the first row then gives
$g_3\alpha_2=0$, so that $g_3=0$. Thus rank deficiency implies
$g_1=g_3=0$.

Substituting $g_1=g_3=0$ back in \eqref{eq:Exmpl6T_I}, then
$\mathcal{T}_{\bar{\mathcal{I}}}(\boldsymbol{g})$ has only one nonzero
row, namely $[1,,g_2]$, and hence has rank one for any
$g_2\in\mathbb{C}$.  Therefore, the rank-drop condition is
\begin{equation}\label{eq:rankcond_exmpl6}
g_1 = g_3 = 0, \quad g_2 \in \mathbb{C} \text{ free}.
\end{equation}

Under~\eqref{eq:rankcond_exmpl6} the coefficient vector takes the form
$\boldsymbol{g} = [1, 0, g_2, 0, g_2^*, 0, 1]^{\mathsf{T}}$, and the
polynomial $P_{\boldsymbol{g}}(z)$ becomes a polynomial in $w = z^2$,
namely
\begin{equation}
P_{\boldsymbol{g}}(z) = 1 + g_2 z^2 + g_2^* z^4 + z^6
= 1 + g_2 w + g_2^* w^2 + w^3.
\end{equation}
For any choice of two angles
$\theta_1, \theta_2 \in [-\pi, \pi)$ with
$\theta_1 \neq \theta_2 \pmod{\pi}$, setting
\begin{equation}\label{eq:g2_choice}
g_2 = -\big(e^{-{\mathsf j}2\theta_1} + e^{-{\mathsf j}2\theta_2}
        - e^{{\mathsf j}2(\theta_1 + \theta_2)}\big)
\end{equation}
yields the factorization
\begin{equation}
P_{\boldsymbol{g}}(z)
= (z^2 - e^{{\mathsf j}2\theta_1})\,(z^2 - e^{{\mathsf j}2\theta_2})\,
  (z^2 + e^{-{\mathsf j}2(\theta_1 + \theta_2)}),
\end{equation}
so that all six roots lie on the unit circle and, generically, are
distinct. The roots are listed in Table~\ref{tab:roots6}. By
Theorem~\ref{thm:rank_condition}, the corresponding generalized
Vandermonde matrix
${\boldsymbol A}_{\mathcal{I}}(z_1, \ldots, z_6)$ is rank deficient, and
the array manifold therefore admits a continuous two-parameter family of
ambiguity sets indexed by $(\theta_1, \theta_2)$.

\begin{table}[t]
\centering
\caption{Roots of
$P_{\boldsymbol{g}}(z) = 1 + g_2 z^2 + g_2^* z^4 + z^6$ in
Example~\ref{exmpl:6}, with $g_2$ given by~\eqref{eq:g2_choice}.}
\label{tab:roots6}
\renewcommand{\arraystretch}{1.4}
\setlength{\tabcolsep}{4pt}
\begin{tabular}{@{}cccccc@{}}
\toprule
$z_1$ & $z_2$ & $z_3$ & $z_4$ & $z_5$ & $z_6$ \\
\midrule
$e^{{\mathsf j}\theta_1}$
& $e^{{\mathsf j}\theta_2}$
& ${\mathsf j}\,e^{-{\mathsf j}(\theta_1 + \theta_2)}$
& $-e^{{\mathsf j}\theta_1}$
& $-e^{{\mathsf j}\theta_2}$
& $-{\mathsf j}\,e^{-{\mathsf j}(\theta_1 + \theta_2)}$ \\
\bottomrule
\end{tabular}
\end{table}
\end{example}
\begin{example}\label{exmpl:7}
\begin{figure}[t]
\centering
\begin{tikzpicture}[
    x=0.8cm, y=0.8cm,
    sensor/.style={circle, fill=black, draw=black, inner sep=0pt, minimum size=5pt},
    removed/.style={circle, fill=white, draw=black, inner sep=0pt, minimum size=5pt, line width=0.5pt},
]
  \draw[gray!60, thin] (0.5,0) -- (9.5,0);

  \foreach \p in {1,3,5,6,7,9}{
    \node[sensor] at (\p,0) {};
  }

  \foreach \p in {2,4,8}{
    \node[removed] at (\p,0) {};
  }

  \foreach \p in {1,...,9}{
    \node[anchor=north, font=\footnotesize] at (\p,-0.2) {$\p$};
  }
\end{tikzpicture}
\caption{Thinned ULA considered in Example~\ref{exmpl:7} for $L = 5$
sources with $M_{\mathcal{I}} = 6$ sensors and three holes at positions
$i_1 = 2$, $i_2 = 4$, and $i_3 = 8$, i.e.,
$\mathcal{I} = \{1, 3, 5, 6, 7, 9\}$.}
\label{fig:holes_at_2_4_8}
\end{figure}

Consider the thinned ULA depicted in Fig.~\ref{fig:holes_at_2_4_8}, with
$M_{\mathcal{I}} = 6$ sensors and three holes at positions $i_1 = 2$,
$i_2 = 4$, $i_3 = 8$, i.e., $\mathcal{I} = \{1, 3, 5, 6, 7, 9\}$ and
$L = 5$ sources. The Toeplitz matrix becomes
\begin{equation}\label{eq:example7}
\mathcal{T}(\boldsymbol{g})
=
\begin{bmatrix}
g_0 & g_1 & g_2 & g_2^* & g_1^* & g_0^* & 0     & 0     & 0\\
0   & g_0 & g_1 & g_2   & g_2^* & g_1^* & g_0^* & 0     & 0\\
0   & 0   & g_0 & g_1   & g_2   & g_2^* & g_1^* & g_0^* & 0\\
0   & 0   & 0   & g_0   & g_1   & g_2   & g_2^* & g_1^* & g_0^*
\end{bmatrix},
\end{equation}
and the complementary submatrix is
\begin{equation}
\mathcal{T}_{\bar{\mathcal{I}}}(\boldsymbol{g})
=
\begin{bmatrix}
g_1 & g_2^* & 0 \\
g_0 & g_2   & 0 \\
0   & g_1   & g_0^* \\
0   & g_0   & g_1^*
\end{bmatrix}.
\end{equation}
Setting $g_0 = 1$ without loss of generality and applying Gaussian
elimination with partial pivoting reduces the matrix to
\begin{equation}\label{eq:GEpivoting}
\mathcal{T}_{\bar{\mathcal{I}}}(\boldsymbol{g})
\;\overset{\mathrm{row}}{\equiv}\;
\begin{bmatrix}
1 & g_2 & 0 \\
0 & 1   & g_1^* \\
0 & 0   & g_1^*\, |g_2|\, (e^{{\mathsf j}\phi_2}\, g_1 - e^{-{\mathsf j}\phi_2}) \\
0 & 0   & 1 - |g_1|^2
\end{bmatrix},
\end{equation}
where $g_2 = |g_2|\, e^{{\mathsf j}\phi_2}$ and
$\overset{\mathrm{row}}{\equiv}$ denotes equivalence under elementary row
operations. The reduced matrix has full column rank if and only if at
least one of the bottom two entries in column~3 is nonzero. Rank deficiency
therefore requires
\begin{equation}\label{eq:rank_drop_cond_exmpl7}
1 - |g_1|^2 = 0
\quad \text{and} \quad
g_1^*\, |g_2|\, (e^{{\mathsf j}\phi_2}\, g_1 - e^{-{\mathsf j}\phi_2}) = 0.
\end{equation}

The first condition gives $|g_1| = 1$. For the second, we distinguish
two cases.

\emph{Case 1: $g_2 = 0$.} The second condition is satisfied
automatically, and rank deficiency holds for any $g_1$ on the unit
circle. Fixing $g_1 = 1$ without loss of generality (by an appropriate
further rotation), the coefficient vector becomes
$\boldsymbol{g} = [1, 1, 0, 0, 1, 1]^{\mathsf{T}}$, and the polynomial
factorizes as
\begin{align}
P_{\boldsymbol{g}}(z)
&= 1 + z + z^4 + z^5
= (1 + z)(1 + z^4) \notag\\
&= (z + 1) \prod_{p = 1}^{4}\!
   \big(z - e^{{\mathsf j}(2p - 1)\pi/4}\big).
\end{align}
The five roots are listed in Table~\ref{tab:roots5}.

\begin{table}[t]
\centering
\caption{Roots of $P_{\boldsymbol{g}}(z) = 1 + z + z^4 + z^5$ in
Case~1 of Example~\ref{exmpl:7}.}
\label{tab:roots5}
\renewcommand{\arraystretch}{1.2}
\begin{tabular}{@{}ccccc@{}}
\toprule
$z_1$ & $z_2$ & $z_3$ & $z_4$ & $z_5$ \\
\midrule
$-1$
& $e^{{\mathsf j}\pi/4}$
& $e^{{\mathsf j}3\pi/4}$
& $e^{{\mathsf j}5\pi/4}$
& $e^{{\mathsf j}7\pi/4}$ \\
\bottomrule
\end{tabular}
\end{table}

\emph{Case 2: $g_2 \neq 0$.} The second condition
in~\eqref{eq:rank_drop_cond_exmpl7} then forces
$g_1 = e^{-{\mathsf j}2\phi_2}$. The coefficient vector becomes
\begin{equation}
\boldsymbol{g}
= \big[1,\, e^{-{\mathsf j}2\phi_2},\, |g_2|\, e^{{\mathsf j}\phi_2},\,
       |g_2|\, e^{-{\mathsf j}\phi_2},\, e^{{\mathsf j}2\phi_2},\, 1\big]^{\mathsf{T}},
\end{equation}
parameterized by $\phi_2 \in [-\pi, \pi)$ and $|g_2| \geq 0$. The
polynomial admits the factorization
\begin{equation}
P_{\boldsymbol{g}}(z)
= \big(e^{-{\mathsf j}2\phi_2}\, z + 1\big)
  \big(e^{{\mathsf j}\phi_2}\, z^2 + \alpha\big)
  \big(e^{{\mathsf j}\phi_2}\, z^2 + \alpha^*\big),
\end{equation}
where $\alpha$ is a root of $t^2 - |g_2|\, t + 1 = 0$, i.e.,
\begin{equation}
\alpha = \frac{|g_2| + {\mathsf j}\sqrt{4 - |g_2|^2}}{2}.
\end{equation}
For $|g_2| \leq 2$, the discriminant is nonpositive, $\alpha$ lies on
the unit circle, and we can write $\alpha = e^{{\mathsf j}\psi}$ with
$\cos\psi = |g_2|/2$. In this case, all five roots of
$P_{\boldsymbol{g}}(z)$ lie on the unit circle, as listed in
Table~\ref{tab:Q_roots}. The ambiguity set is therefore parameterized by
a two-parameter family $(\phi_2, |g_2|) \in [-\pi, \pi) \times [0, 2]$.

\begin{table}[t]
\centering
\caption{Roots of $P_{\boldsymbol{g}}(z) = 1 + e^{-{\mathsf j}2\phi_2}\,z
+ |g_2|\,e^{{\mathsf j}\phi_2}\,z^2 + |g_2|\,e^{-{\mathsf j}\phi_2}\,z^3
+ e^{{\mathsf j}2\phi_2}\,z^4 + z^5$ in Case~2 of Example~\ref{exmpl:7}.}
\label{tab:Q_roots}
\renewcommand{\arraystretch}{1.4}
\setlength{\tabcolsep}{3pt}
\begin{tabular}{@{}ccccc@{}}
\toprule
$z_1$ & $z_2$ & $z_3$ & $z_4$ & $z_5$ \\
\midrule
$-e^{{\mathsf j}2\phi_2}$
& ${\mathsf j}\, e^{{\mathsf j}(\psi - \phi_2)/2}$
& $-{\mathsf j}\, e^{{\mathsf j}(\psi - \phi_2)/2}$
& ${\mathsf j}\, e^{-{\mathsf j}(\psi + \phi_2)/2}$
& $-{\mathsf j}\, e^{-{\mathsf j}(\psi + \phi_2)/2}$ \\
\bottomrule
\end{tabular}
\end{table}
\end{example}
\begin{example}\label{exmpl:7b}
\begin{figure}[t]
\centering
\begin{tikzpicture}[
    x=0.8cm, y=0.8cm,
    sensor/.style={circle, fill=black, draw=black, inner sep=0pt, minimum size=5pt},
    removed/.style={circle, fill=white, draw=black, inner sep=0pt, minimum size=5pt, line width=0.5pt},
]
  \draw[gray!60, thin] (0.5,0) -- (9.5,0);

  \foreach \p in {1,2,4,5,7,8,9}{
    \node[sensor] at (\p,0) {};
  }

  \foreach \p in {3,6}{
    \node[removed] at (\p,0) {};
  }

  \foreach \p in {1,...,9}{
    \node[anchor=north, font=\footnotesize] at (\p,-0.2) {$\p$};
  }
\end{tikzpicture}
\caption{Thinned ULA with $M_{\mathcal{I}} = 7$ sensors considered in
Example~\ref{exmpl:7b} for $L = 5$ sources, with two holes at positions
$i_1 = 3$ and $i_2 = 6$, i.e., $\mathcal{I} = \{1, 2, 4, 5, 7, 8, 9\}$.}
\label{fig:holes_at_3_6}
\end{figure}

Consider the thinned ULA depicted in Fig.~\ref{fig:holes_at_3_6}, with
$M_{\mathcal{I}} = 7$ sensors and two holes at positions $i_1 = 3$ and
$i_2 = 6$, i.e., $\mathcal{I} = \{1, 2, 4, 5, 7, 8, 9\}$ and $L = 5$
sources. The Toeplitz matrix $\mathcal{T}(\boldsymbol{g})$ has the same
form as in~\eqref{eq:example7}, and the complementary submatrix is
\begin{equation}
\mathcal{T}_{\bar{\mathcal{I}}}(\boldsymbol{g})
=
\begin{bmatrix}
g_2 & g_0^* \\
g_1 & g_1^* \\
g_0 & g_2^* \\
0   & g_2
\end{bmatrix}.
\end{equation}
Setting $g_0 = 1$ without loss of generality and performing a row swap
to bring the $g_0 = 1$ entry to the pivot position yields the equivalent
matrix
\begin{equation}
\begin{bmatrix}
1   & g_2^* \\
g_2 & 1 \\
g_1 & g_1^* \\
0   & g_2
\end{bmatrix},
\end{equation}
which, after eliminating the entries below the first pivot, reduces to
\begin{equation}\label{eq:TbarI_elim}
\mathcal{T}_{\bar{\mathcal{I}}}(\boldsymbol{g})
\;\overset{\mathrm{row}}{\equiv}\;
\begin{bmatrix}
1 & g_2^* \\
0 & 1 - |g_2|^2 \\
0 & g_1^* - g_1\, g_2^* \\
0 & g_2
\end{bmatrix}.
\end{equation}
The first column has the nonzero pivot $1$, and the second column below
the first row vanishes entirely if and only if
\begin{equation}
1 - |g_2|^2 = 0,
\qquad
g_2 = 0,
\qquad
g_1^* - g_1\, g_2^* = 0
\end{equation}
all hold simultaneously. The first two conditions require both
$|g_2| = 1$ and $g_2 = 0$, which is impossible. Hence the second column
of~\eqref{eq:TbarI_elim} cannot vanish below the pivot, and
\begin{equation}
\operatorname{rank}\!\big(\mathcal{T}_{\bar{\mathcal{I}}}(\boldsymbol{g})\big) = 2
\end{equation}
for all $g_1, g_2 \in \mathbb{C}$. By Theorem~\ref{thm:rank_condition},
the array manifold ${\boldsymbol A}_{\mathcal{I}}(z_1, \ldots, z_5)$ has
full column rank for every set of distinct generators, and the array is
therefore unambiguous for $L = 5$ sources.
\end{example}
\begin{example}\label{exmpl:8a}
\begin{figure}[t]
\centering
\begin{tikzpicture}[
    x=0.8cm, y=0.8cm,
    sensor/.style={circle, fill=black, draw=black, inner sep=0pt, minimum size=5pt},
    removed/.style={circle, fill=white, draw=black, inner sep=0pt, minimum size=5pt, line width=0.5pt},
]
  \draw[gray!60, thin] (0.5,0) -- (9.5,0);

  \foreach \p in {1,7,8,9}{
    \node[sensor] at (\p,0) {};
  }

  \foreach \p in {2,3,4,5,6}{
    \node[removed] at (\p,0) {};
  }

  \foreach \p in {1,...,9}{
    \node[anchor=north, font=\footnotesize] at (\p,-0.2) {$\p$};
  }
\end{tikzpicture}
\caption{Thinned ULA considered in Example~\ref{exmpl:8a} for $L = 3$
sources with $M_{\mathcal{I}} = 4$ sensors at
$\mathcal{I} = \{1, 7, 8, 9\}$, obtained by removing positions
$\bar{\mathcal{I}} = \{2, 3, 4, 5, 6\}$ from a nine-element ULA.}
\label{fig:I_1_7_8_9}
\end{figure}

Consider the design of a thinned ULA with $M_{\mathcal{I}} = 4$ sensors,
obtained by thinning a nine-element ULA, that is unambiguous for $L = 3$
sources. For $L = 3$, the centro-Hermitian polynomial coefficient vector
is $\boldsymbol{g} = [g_0, g_1, g_1^*, g_0^*]^{\mathsf{T}}$, and the
Toeplitz matrix takes the form
\begin{equation}\label{eq:example8a}
\mathcal{T}(\boldsymbol{g}) =
\begin{bmatrix}
g_0 & g_1 & g_1^* & g_0^* & 0     & 0     & 0     & 0     & 0     \\
0   & g_0 & g_1   & g_1^* & g_0^* & 0     & 0     & 0     & 0     \\
0   & 0   & g_0   & g_1   & g_1^* & g_0^* & 0     & 0     & 0     \\
0   & 0   & 0     & g_0   & g_1   & g_1^* & g_0^* & 0     & 0     \\
0   & 0   & 0     & 0     & g_0   & g_1   & g_1^* & g_0^* & 0     \\
0   & 0   & 0     & 0     & 0     & g_0   & g_1   & g_1^* & g_0^*
\end{bmatrix}.
\end{equation}
By Corollary~\ref{crl:corollary2}, a safe choice is to thin the array so
that a maximum-consecutive-lag array is obtained, i.e.,
$\mathcal{I} = \{1, 7, 8, 9\}$, as depicted in Fig.~\ref{fig:I_1_7_8_9}.
The corresponding complementary submatrix is
\begin{equation}\label{eq:TbarI_exmpl8a}
\mathcal{T}_{\bar{\mathcal{I}}}(\boldsymbol{g}) =
\begin{bmatrix}
g_1 & g_1^* & g_0^* & 0     & 0     \\
g_0 & g_1   & g_1^* & g_0^* & 0     \\
0   & g_0   & g_1   & g_1^* & g_0^* \\
0   & 0     & g_0   & g_1   & g_1^* \\
0   & 0     & 0     & g_0   & g_1   \\
0   & 0     & 0     & 0     & g_0
\end{bmatrix}.
\end{equation}
The $5 \times 5$ submatrix formed by the last five rows
of~\eqref{eq:TbarI_exmpl8a} is upper triangular with diagonal entries
$g_0$, so its determinant is $g_0^5 \neq 0$. Hence
$\mathcal{T}_{\bar{\mathcal{I}}}(\boldsymbol{g})$ has full column rank for
all $\boldsymbol{g}$, and by Theorem~\ref{thm:rank_condition} the array
manifold is unambiguous for $L = 3$. The same conclusion extends to
$L < 3$, since the retained set contains the three-element contiguous
ULA subarray $\{7, 8, 9\}$, which by the Vandermonde determinant theorem
is unambiguous for any $L \leq 3$ sources. The result is also a direct
instance of Corollary~\ref{crl:corollary2}.

However, the array contains a dense three-element subarray at positions
$\{7, 8, 9\}$, which makes it susceptible to mutual coupling
effects~\cite{7458883LiuPP}. This motivates the alternative design in
Example~\ref{exmpl:8b} below, which removes the dense subarray at the
cost of a more involved ambiguity analysis.
\end{example}
\begin{example}\label{exmpl:8b}
\begin{figure}[t]
\centering
\begin{tikzpicture}[
    x=0.8cm, y=0.8cm,
    sensor/.style={circle, fill=black, draw=black, inner sep=0pt, minimum size=5pt},
    removed/.style={circle, fill=white, draw=black, inner sep=0pt, minimum size=5pt, line width=0.5pt},
]
  \draw[gray!60, thin] (0.5,0) -- (9.5,0);

  \foreach \p in {1,2,7,9}{
    \node[sensor] at (\p,0) {};
  }

  \foreach \p in {3,4,5,6,8}{
    \node[removed] at (\p,0) {};
  }

  \foreach \p in {1,...,9}{
    \node[anchor=north, font=\footnotesize] at (\p,-0.2) {$\p$};
  }
\end{tikzpicture}
\caption{Thinned ULA with $\mathcal{I} = \{1, 2, 7, 9\}$ considered in
Example~\ref{exmpl:8b} for $L = 3$ sources, obtained by removing
$\bar{\mathcal{I}} = \{3, 4, 5, 6, 8\}$ from a nine-element ULA. The
array contains only a single minimum-baseline pair, namely between
sensors $1$ and $2$.}
\label{fig:single_min_baseline}
\end{figure}

As an alternative to Example~\ref{exmpl:8a}, we seek a sparse design with
$M_{\mathcal{I}} = 4$ sensors and $L = 3$ sources that avoids the dense
three-element subarray of Fig.~\ref{fig:I_1_7_8_9}. The motivation is to
reduce mutual coupling effects, which are dominated by sensors at the
minimum baseline $d_{\min} = \lambda/2$~\cite{7458883LiuPP}.

A guideline for selecting the removal set is provided by the special
coefficient choice $g_0 = 1$, $g_1 = 0$ in~\eqref{eq:example8a}, under
which the columns of $2$, $5$ and $8$ of $\mathcal{T}(\boldsymbol{g})$ are linearly dependent. Hence, these positions cannot be jointly selected to be cointained in the removal set $\bar{\mathcal{I}}$. Combined with the
requirement that the first and last sensors at positions $1$ and $9$ be
retained (to preserve the array aperture), this leads to the candidate
configuration $\mathcal{I} = \{1, 2, 7, 9\}$ shown in
Fig.~\ref{fig:single_min_baseline}, in which only one pair of sensors
sits at the minimum baseline.

For this configuration, the complementary submatrix is
\begin{equation}\label{eq:TbarI_exmpl8b}
\mathcal{T}_{\bar{\mathcal{I}}}(\boldsymbol{g})
=
\begin{bmatrix}
g_1^* & g_0^* & 0     & 0     & 0     \\
g_1   & g_1^* & g_0^* & 0     & 0     \\
g_0   & g_1   & g_1^* & g_0^* & 0     \\
0     & g_0   & g_1   & g_1^* & 0     \\
0     & 0     & g_0   & g_1   & g_0^* \\
0     & 0     & 0     & g_0   & g_1^*
\end{bmatrix}.
\end{equation}
Rather than directly analyzing the rank of
$\mathcal{T}_{\bar{\mathcal{I}}}(\boldsymbol{g})$
in~\eqref{eq:TbarI_exmpl8b}, we apply Corollary~\ref{crl:corollary5} with
one prescribed generator $\acute{z}_1 = 1$, i.e $L'$. The cofactor polynomial has
degree $L - L' = 2$ with centro-Hermitian coefficient vector
$\boldsymbol{f} = [f_0, f_1, f_0^*]^{\mathsf{T}}$, where
$f_0 \in \mathbb{C}$ and $f_1 \in \mathbb{R}$. The associated
$7 \times 6$ augmented matrix
in~\eqref{eq:generalizeVandermonde} reads
\begin{equation}\label{eq:A_f0_f1}
\mathcal{A}(\boldsymbol{f}, \acute{z}_1)
=
\begin{bmatrix}
f_0^* & 0     & 0     & 0     & 0     & 1\\
f_1   & f_0^* & 0     & 0     & 0     & 1\\
f_0   & f_1   & f_0^* & 0     & 0     & 1\\
0     & f_0   & f_1   & f_0^* & 0     & 1\\
0     & 0     & f_0   & f_1   & 0     & 1\\
0     & 0     & 0     & f_0   & f_0^* & 1\\
0     & 0     & 0     & 0     & f_1   & 1
\end{bmatrix}.
\end{equation}
The following proposition establishes that
$\mathcal{A}(\boldsymbol{f}, \acute{z}_1)$ has full column rank for all
admissible coefficient choices, so that by
Corollary~\ref{crl:corollary5} the array manifold
${\boldsymbol A}_{\mathcal{I}}(z_1, \ldots, z_3)$ is unambiguous for any
prescribed first generator $\acute{z}_1 = 1$ and any centro-Hermitian
cofactor polynomial.

\begin{proposition}\label{prop:exmpl8b} Fix one generator at $\acute{z}_1 = 1$ without loss of generality. Let $f_0 = e^{{\mathsf j}\phi_0}$ with $|f_0| = 1$, and $f_1 \in \mathbb{R}$
with $F(\acute{z}_1) = f_0 + f_1 + f_0^* \neq 0$, where
$F(z) = f_0 + f_1 z + f_0^* z^2$ is the cofactor polynomial. Then the
augmented matrix $\mathcal{A}(\boldsymbol{f}, \acute{z}_1)$
in~\eqref{eq:A_f0_f1} has full column rank, i.e.,
\begin{equation}
\operatorname{rank}\!\big(\mathcal{A}(\boldsymbol{f}, \acute{z}_1)\big) = 6
\end{equation}
for all $|f_0| = 1$ and $f_1 \in \mathbb{R}$.
\end{proposition}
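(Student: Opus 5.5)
I will prove the claim by showing directly that the null space of $\mathcal{A}(\boldsymbol{f},\acute{z}_1)$ in~\eqref{eq:A_f0_f1} is trivial, using a polynomial encoding of its columns in the spirit of~\eqref{eq:conv_b_h_g_matrix}. Identify a vector in $\mathbb{C}^7$ with a polynomial in $x$ of degree at most $6$, with row $r$ carrying the coefficient of $x^{r-1}$. Define $G(x) = f_0^* + f_1 x + f_0 x^2$, which is the reversal of $F$. Then:
\begin{itemize}
\item columns $k=1,\ldots,4$ of $\mathcal{A}(\boldsymbol{f},\acute{z}_1)$ correspond to $x^{k-1}G(x)$;
\item column $5$ corresponds to the truncated shift $x^5(f_0^*+f_1x) = x^5G(x) - f_0x^7$;
\item column $6$ corresponds to $\sum_{r=0}^{6}x^r = (x^7-1)/(x-1)$.
\end{itemize}
A null vector $[\alpha_1,\ldots,\alpha_5,\beta]^{\mathsf T}$ is therefore equivalent to the polynomial identity
\begin{equation*}
\big(a(x)+\alpha_5x^5\big)G(x)(x-1) - \alpha_5 f_0 x^7(x-1) + \beta(x^7-1) = 0,
\qquad a(x)=\sum_{k=1}^{4}\alpha_k x^{k-1}.
\end{equation*}
Before multiplying by $(x-1)$, this identity holds coefficientwise in degrees $0,\ldots,6$.

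The first step is to evaluate the identity at the roots of $G$. Let $w_1,w_2$ be the roots of $G$; since $f_0\neq0$, both are nonzero. Evaluating the unmultiplied identity at $x=w_i$ gives
\begin{equation*}
-\alpha_5 f_0 w_i^7 + \beta\sum_{r=0}^{6}w_i^r = 0,
\end{equation*}
which is a $2\times2$ linear system in $(\alpha_5,\beta)$. The hypothesis $F(1)\neq0$ is equivalent to $G(1)\neq0$, so $w_i\neq1$. Hence this system reads
\begin{equation*}
\alpha_5 f_0 w_i^7(w_i-1) = \beta(w_i^7-1),\qquad i=1,2.
\end{equation*}

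The second step is to show that this system forces $\alpha_5=\beta=0$. The cases split by discriminant: either $w_1\neq w_2$, or there is a double root. In the double-root case, differentiate and use the derivative condition in place of the second equation. Once $\alpha_5=\beta=0$, the identity reduces to $a(x)G(x)=0$, so $a\equiv0$ because $G\not\equiv0$. This gives all $\alpha_k=0$ and hence full column rank $6$.

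The key obstacle is the determinant of the $2\times2$ system, namely
\begin{equation*}
w_1^7(w_1-1)(w_2^7-1) - w_2^7(w_2-1)(w_1^7-1).
\end{equation*}
This quantity is not obviously nonzero, and it is exactly where $|f_0|=1$, $f_1\in\mathbb{R}$ and $F(1)\neq0$ must enter. The centro-Hermitian structure gives $w_1w_2 = f_0^*/f_0$, which is unimodular. It also forces the roots either to lie both on the unit circle or to form a pair $w_2 = 1/w_1^*$.

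To prove nonvanishing, I would substitute $w_2 = (f_0^*/f_0)/w_1$, clear denominators, and reduce the determinant to a polynomial in $w_1$ and $f_0$. I would then try to factor it through $(w_1-1)(w_2-1)\propto G(1)$ and a remaining factor that can be shown to be nonzero under the unimodularity constraints.

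If this algebra stalls, I have a fallback. Perform forward substitution on rows $1$--$4$, which are lower triangular in $(\alpha_1,\ldots,\alpha_4)$ with pivot $f_0^*\neq0$, to express $\alpha_1,\ldots,\alpha_4$ in terms of $\beta$. Rows $5$--$7$ then give three linear equations in $(\alpha_5,\beta)$. Rank $2$ is then certified by exhibiting one nonzero $2\times2$ minor, splitting into cases on whether $f_1 = -2\operatorname{Re}(f_0)$, which is precisely the excluded case $F(1)=0$.
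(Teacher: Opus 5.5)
Your encoding of the columns as $G$, $xG$, $x^2G$, $x^3G$, $x^5G-f_0x^7$ and $S(x)=\sum_{r=0}^{6}x^r$ is correct. The problem is that passing to the $2\times2$ system throws information away, and the nonvanishing of the determinant, which you flag as the key obstacle, is in fact false. Evaluating at the roots of $G$ only records that $G$ divides $\alpha_5f_0x^7-\beta S(x)$. It forgets that the quotient must have the form $a(x)+\alpha_5x^5$ with $\deg a\le3$, i.e.\ that there is no $x^4G(x)$ column (position $7$ is retained).

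Consequently your determinant vanishes at admissible parameters. Take $f_0=1$, $f_1=-\sqrt{2}$. Then $F(1)=2-\sqrt{2}\neq0$, and $G(x)=x^2-\sqrt{2}x+1$ has the distinct roots $w_{1,2}=e^{\pm\mathsf{j}\pi/4}$. Since $w_i^8=1$, you get $w_i^7(w_i-1)=-(w_i^7-1)$ for both roots, so $(\alpha_5,\beta)=(1,-1)$ solves your system. Indeed $x^7+S(x)=(x^8-1)/(x-1)$ is divisible by $G$. However, the quotient $(x+1)(x^2+1)(x^2+\sqrt{2}x+1)$ has $x^4$-coefficient $1+\sqrt{2}\neq0$, and that is the only reason the matrix keeps full rank there. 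The points $f_0=\pm1$, $f_1=\pm\sqrt{2}$ are among those where the paper's combined condition $F_7+f_0^2F_5=0$ also holds, so the paper needs a separate check $F_5\neq0$. Your plan has no counterpart of that check, so its second step cannot be proved.

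The missing ingredient is the top coefficient of your identity: the $x^6$-coefficient (row~$7$) gives $\beta=-f_1\alpha_5$. With it, $\alpha_5=0$ forces $\beta=0$ and then $a\equiv0$. For $\alpha_5\neq0$, rank deficiency is equivalent to $G(x)\mid f_0x^7+f_1S(x)$, and this can be excluded in your spirit:
\begin{itemize}
\item The case $f_1=0$ is immediate, since $G(0)\neq0$.
\item Otherwise write $f_0=1/t$ and $w=tu$ with $u^2+f_1u+1=0$. Using $w_i\neq1$, the conditions at the roots $u$ and $1/u$ become $u^2+1=t^6u^7(t+u)$ and $u^6(u^2+1)=t^6(1+tu)$. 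Dividing one by the other gives $tu(1-u^{12})=u^{14}-1$.
\item If $u^{12}=1$, this forces $u^2=1$.
\item If $u=e^{\mathsf{j}\gamma}$ with $u^{12}\neq1$, then $t=-\sin7\gamma/\sin6\gamma$ is real, so $t=\pm1$. Then $|u^2+1|=|1\pm u|$ forces $u^{12}=1$ anyway.
\item For real $u$ with $|u|\neq1$, the requirement $|t|=1$ forces $(|u|-1)(|u|^{13}+1)=0$, which is impossible.
\item The double roots $u=\pm1$ already fail $f_0w^7+f_1S(w)=0$. With $\tau=w$, this reduces to $\tau^6+2\sum_{r=0}^{5}\tau^r=0$, which has no unimodular root.
\end{itemize}

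Your fallback does not close the gap by itself. After forward substitution, the three $2\times2$ minors of the $3\times2$ system are multiples of $F_5$, $F_5$ and $F_7$. Each of these vanishes somewhere on the admissible set: for example $F_5(1,-1)=0$, and $F_7(1,\cdot)$ has a root in $(1.5,1.6)$. So no single minor certifies rank~$2$. One must show that $F_5$ and $F_7$ never vanish together, which is exactly the paper's argument. The case split at $F(1)=0$ is also beside the point, since the matrix is full rank there as well (e.g.\ $F_5(1,-2)=-15$).
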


\begin{IEEEproof}
Since $f_0 = e^{{\mathsf j}\phi_0} \neq 0$, the first five columns of
$\mathcal{A}(\boldsymbol{f}, \acute{z}_1)$ are linearly independent (the
first column has $f_0^*$ in row 1 as a pivot; the others can be reduced
to staircase form by elementary operations using $f_0$ on the
subdiagonals). Hence rank deficiency can occur only if the sixth column
(the all-ones Vandermonde column for $\acute{z}_1 = 1$) lies in the span
of the first five.

Suppose, for some $\alpha_1, \ldots, \alpha_5 \in \mathbb{C}$,
\begin{equation}
\sum_{j=1}^{5} \alpha_j\, [\mathcal{A}(\boldsymbol{f}, \acute{z}_1)]_{:,j}
= [\mathcal{A}(\boldsymbol{f}, \acute{z}_1)]_{:,6} = \boldsymbol{1}.
\end{equation}
Writing out the seven row equations gives the system
\begin{subequations}\label{eq:Af0f1_system}
\begin{align}
\label{eq:Af0f1_alpha1}
f_0^*\,\alpha_1 &= 1,\\
\label{eq:Af0f1_alpha2}
f_1\,\alpha_1 + f_0^*\,\alpha_2 &= 1,\\
\label{eq:Af0f1_alpha3}
f_0\,\alpha_1 + f_1\,\alpha_2 + f_0^*\,\alpha_3 &= 1,\\
\label{eq:Af0f1_alpha4}
f_0\,\alpha_2 + f_1\,\alpha_3 + f_0^*\,\alpha_4 &= 1,\\
\label{eq:Af0f1_alpha5}
f_0\,\alpha_3 + f_1\,\alpha_4 &= 1,\\
\label{eq:Af0f1_alpha6}
f_0\,\alpha_4 + f_0^*\,\alpha_5 &= 1,\\
\label{eq:Af0f1_alpha7}
f_1\,\alpha_5 &= 1.
\end{align}
\end{subequations}
Equations~\eqref{eq:Af0f1_alpha1}--\eqref{eq:Af0f1_alpha4}
and~\eqref{eq:Af0f1_alpha6} can be solved recursively for
$\alpha_1, \ldots, \alpha_5$ in terms of $f_0$ and $f_1$. Substituting
the resulting expressions into the remaining
equations~\eqref{eq:Af0f1_alpha5} and~\eqref{eq:Af0f1_alpha7} yields two
consistency conditions
\begin{equation}\label{eq:F5_F7}
F_5(f_0, f_1) = 0,
\qquad
F_7(f_0, f_1) = 0,
\end{equation}
where
\begin{align}\label{eq:F5_f0_f1}
F_5(f_0, f_1)
&= -f_0^4 f_1^4 + f_0^3 f_1^3 + (3 f_0^4 - f_0^2)\, f_1^2 \notag\\
&\quad + (-2 f_0^3 + f_0)\, f_1 + (f_0^2 - f_0^4 - 1),\\
\label{eq:F7_f0_f1}
F_7(f_0, f_1)
&= f_0^6 f_1^4 - f_0^5 f_1^3 + (-2 f_0^6 + f_0^4)\, f_1^2 \notag\\
&\quad + (f_0^5 - f_0^3 + f_0)\, f_1 - 1.
\end{align}
We show by contradiction that~\eqref{eq:F5_F7} has no solution with
$|f_0| = 1$ and $f_1 \in \mathbb{R}$.

Define $F(f_0, f_1) = F_7(f_0, f_1) + f_0^2\, F_5(f_0, f_1)$. The
$f_1^4$ and $f_1^3$ terms cancel, and a direct simplification gives
\begin{equation}
F(f_0, f_1)
= f_0^6 f_1^2 - f_0^6 - f_0^5 f_1 + f_0^4 - f_0^2 + f_0 f_1 - 1.
\end{equation}
Writing $f_0 = e^{{\mathsf j}\phi_0}$ and dividing by
$f_0^3 = e^{{\mathsf j}3\phi_0}$, the equation $F(f_0, f_1) = 0$ becomes
\begin{equation}
\begin{aligned}
&(f_1^2 - 1)\, e^{{\mathsf j}3\phi_0}
- f_1\, e^{{\mathsf j}2\phi_0}
+ e^{{\mathsf j}\phi_0}
- e^{-{\mathsf j}\phi_0}\\
&\quad
+ f_1\, e^{-{\mathsf j}2\phi_0}
- e^{-{\mathsf j}3\phi_0} = 0.
\end{aligned}
\end{equation}
Separating real and imaginary parts yields
\begin{align}\label{eq:realpart_g}
(f_1^2 - 2)\,\cos(3\phi_0) &= 0,\\
\label{eq:imagpart_g}
f_1^2\,\sin(3\phi_0) - 2 f_1\,\sin(2\phi_0) + 2\,\sin(\phi_0) &= 0.
\end{align}

\emph{Case 1: $\cos(3\phi_0) \neq 0$.}
Equation~\eqref{eq:realpart_g} forces $f_1^2 = 2$. Substituting
into~\eqref{eq:imagpart_g} gives
\begin{equation}
\sin(3\phi_0) - f_1\,\sin(2\phi_0) + \sin(\phi_0) = 0.
\end{equation}
Using the identity
$\sin(3\phi_0) + \sin(\phi_0) = 2\sin(2\phi_0)\cos(\phi_0)$, this reduces
to
\begin{equation}
\sin(2\phi_0)\,\big(2\cos(\phi_0) - f_1\big) = 0.
\end{equation}

If $\sin(2\phi_0) = 0$, then $\phi_0 \in \{0, \pi\}$ (the values
$\phi_0 = \pm\pi/2$ are excluded by the case assumption
$\cos(3\phi_0) \neq 0$), giving $f_0 = \pm 1$ and $f_1 = \pm\sqrt{2}$.
Direct substitution into~\eqref{eq:F5_f0_f1} shows
$F_5(\pm 1, \pm\sqrt{2}) \neq 0$, contradicting~\eqref{eq:F5_F7}.

If instead $f_1 = 2\cos(\phi_0)$, then combined with $f_1^2 = 2$ we obtain
$\cos(\phi_0) = \pm 1/\sqrt{2}$, hence
$\phi_0 \in \{\pm\pi/4, \pm 3\pi/4\}$. Substitution
into~\eqref{eq:F5_f0_f1} again yields $F_5(f_0, f_1) \neq 0$, a
contradiction.

\emph{Case 2: $\cos(3\phi_0) = 0$.}
Then $\phi_0 \in \{\pi/6, \pi/2, 5\pi/6, 7\pi/6, 3\pi/2, 11\pi/6\}$.
Substituting each of these values into~\eqref{eq:imagpart_g} yields
either the quadratic $f_1^2 \pm \sqrt{3}\, f_1 + 1 = 0$ (with no real
solution, since the discriminant is $3 - 4 = -1 < 0$), or $f_0 = \pm{\mathsf j}$
together with $f_1^2 = 2$. In the latter case, direct substitution
shows $F_5(\pm{\mathsf j}, \pm\sqrt{2}) \neq 0$, again contradicting~\eqref{eq:F5_F7}.

Thus $F_5(f_0, f_1)$ and $F_7(f_0, f_1)$ cannot vanish simultaneously
for any $|f_0| = 1$ and $f_1 \in \mathbb{R}$. The sixth column of
$\mathcal{A}(\boldsymbol{f}, \acute{z}_1)$ therefore never lies in the
span of the first five columns, and
$\operatorname{rank}(\mathcal{A}(\boldsymbol{f}, \acute{z}_1)) = 6$ for
all admissible $f_0$ and $f_1$.
\end{IEEEproof}

By Corollary~\ref{crl:corollary5}, the array
$\mathcal{I} = \{1, 2, 7, 9\}$ is therefore unambiguous for $L = 3$
sources whenever the cofactor polynomial $F(z)$ does not vanish at the
prescribed generator $\acute{z}_1 = 1$. This design achieves unambiguity
with only a single minimum-baseline pair, in contrast to the
maximum-consecutive-lag array of Example~\ref{exmpl:8a}.
\end{example}
\begin{example}\label{exmpl:9b}
\begin{figure}[t]
\centering
\begin{tikzpicture}[
    x=0.36cm, y=0.36cm,
    sensor/.style={circle, fill=black, draw=black, inner sep=0pt, minimum size=4pt},
    removed/.style={circle, fill=white, draw=black, inner sep=0pt, minimum size=4pt, line width=0.4pt},
]
  \draw[gray!60, thin] (0.5,0) -- (23.5,0);

  \foreach \i in {1,3,4,5,6,7,8,9,10,12,13,14,15,20,23}{
    \node[sensor] at (\i,0) {};
  }

  \foreach \i in {2,11,16,17,18,19,21,22}{
    \node[removed] at (\i,0) {};
  }

  \foreach \i in {1,5,10,15,20,23}{
    \node[anchor=north, font=\tiny] at (\i,-0.3) {\i};
  }
\end{tikzpicture}
\caption{Thinned ULA considered  in
 Example~\ref{exmpl:9b} for $L = 10$ sources and in Example~\ref{exmpl:9a} for $L = 11$
sources, with
$M_{\mathcal{I}} = 15$ sensors obtained from a $23$-element ULA. Filled
circles denote the retained sensors at positions
$\mathcal{I} = \{1, 3, 4, 5, 6, 7, 8, 9, 10, 12, 13, 14, 15, 20, 23\}$;
open circles mark the removed positions
$\bar{\mathcal{I}} = \{2, 11, 16, 17, 18, 19, 21, 22\}$.}
\label{fig:thinned_ula}
\end{figure}
Consider the thinned ULA depicted in Fig.~\ref{fig:thinned_ula}, with
$M_{\mathcal{I}} = 15$ sensors obtained from a $23$-element ULA by
removing the sensors at positions
$\bar{\mathcal{I}} = \{2, 11, 16, 17, 18, 19, 21, 22\}$, and $L = 10$
sources. The sensor posistion where selected  in agreement with the mutually exclusive indices sets display in Table \ref{tab:Example11} that are obtained from Corollary \ref{crl:corollary4b}.

\begin{table}
\centering
\caption{Mutually exclused index pairs ${\mathcal S}_{p,q}$ for $M=23$, $L=10$, $K=11$ according to Corollary \ref{crl:corollary4b}.}
\label{tab:Example11}
{\setlength{\tabcolsep}{3pt}
\begin{tabular}{rrrr}
\hline
${\mathcal S}_{0,4} = \{4, 14\}$  & ${\mathcal S}_{0,5} = \{5, 15\}$ & ${\mathcal S}_{0,6} = \{6, 16\}$ & ${\mathcal S}_{0,7} = \{7, 17\}$ \\
${\mathcal S}_{0,8} = \{8, 18\}$  & ${\mathcal S}_{0,9} = \{9, 19\}$ & ${\mathcal S}_{0,10} = \{10, 20\}$ & ${\mathcal S}_{1,6} = \{6, 15\}$ \\
${\mathcal S}_{1,7} = \{7, 16\}$  & ${\mathcal S}_{1,8} = \{8, 17\}$ & ${\mathcal S}_{1,9} = \{9, 18\}$ & ${\mathcal S}_{2,8} = \{8, 16\}$ \\
\hline
\end{tabular}}
\end{table}
\begin{table}
\centering
\caption{Mutually exclusive index pairs ${\mathcal S}_{p,q}$ for $M=23$, $L=10$, $K=11$ according to Corollary \ref{crl:corollary4b}.}
\label{tab:Example12}
{\setlength{\tabcolsep}{3pt}
\begin{tabular}{rrrr}
\hline
${\mathcal S}_{0,2} = \{2, 13\}$  & ${\mathcal S}_{0,3} = \{3, 14\}$  & ${\mathcal S}_{0,4} = \{4, 15\}$  & ${\mathcal S}_{0,5} = \{5, 16\}$ \\
${\mathcal S}_{0,6} = \{6, 17\}$  & ${\mathcal S}_{0,7} = \{7, 18\}$  & ${\mathcal S}_{0,8} = \{8, 19\}$  & ${\mathcal S}_{0,9} = \{9, 20\}$ \\
${\mathcal S}_{0,10} = \{10, 21\}$ & ${\mathcal S}_{0,11} = \{11, 22\}$ & ${\mathcal S}_{1,4} = \{4, 14\}$  & ${\mathcal S}_{1,5} = \{5, 15\}$ \\
${\mathcal S}_{1,6} = \{6, 16\}$  & ${\mathcal S}_{1,7} = \{7, 17\}$  & ${\mathcal S}_{1,8} = \{8, 18\}$  & ${\mathcal S}_{1,9} = \{9, 19\}$ \\
${\mathcal S}_{1,10} = \{10, 20\}$ & ${\mathcal S}_{2,6} = \{6, 15\}$  & ${\mathcal S}_{2,7} = \{7, 16\}$  & ${\mathcal S}_{2,8} = \{8, 17\}$ \\
${\mathcal S}_{2,9} = \{9, 18\}$  & ${\mathcal S}_{3,8} = \{8, 16\}$  &  &  \\
\hline
\end{tabular}}
\end{table}
The cofactor coefficient vector is now
$$\boldsymbol{g} = [g_0, g_1, g_2, g_3, g_4, g_5, g_4^*, g_3^*, g_2^*,
g_1^*, g_0^*]^{\mathsf{T}},$$ where $g_1, g_2, g_3, g_4 \in \mathbb{C}$
and $g_5 \in \mathbb{R}$, and the Toeplitz matrix
$\mathcal{T}(\boldsymbol{g}) \in \mathbb{C}^{13 \times 23}$ has the same
form 

\begin{align}\label{eq:bigfatToeplitz}
&{\mathcal{T}}(\boldsymbol{g}) = \\
& 
\small
\setlength{\arraycolsep}{0.4pt}
\left[\begin{array}{ccccccccccccccccccccccc}
g_0 & g_1 & g_2 & g_3 & g_4 & g_5 & g_4^* & g_3^* & g_2^* & g_1^* & g_0^* & 0 & 0 & 0 & 0 & 0 & 0 & 0 & 0 & 0 & 0 & 0 & 0\\
0 & g_0 & g_1 & g_2 & g_3 & g_4 & g_5 & g_4^* & g_3^* & g_2^* & g_1^* & g_0^* & 0 & 0 & 0 & 0 & 0 & 0 & 0 & 0 & 0 & 0 & 0\\
0 & 0 & g_0 & g_1 & g_2 & g_3 & g_4 & g_5 & g_4^* & g_3^* & g_2^* & g_1^* & g_0^* & 0 & 0 & 0 & 0 & 0 & 0 & 0 & 0 & 0 & 0\\
0 & 0 & 0 & g_0 & g_1 & g_2 & g_3 & g_4 & g_5 & g_4^* & g_3^* & g_2^* & g_1^* & g_0^* & 0 & 0 & 0 & 0 & 0 & 0 & 0 & 0 & 0\\
0 & 0 & 0 & 0 & g_0 & g_1 & g_2 & g_3 & g_4 & g_5 & g_4^* & g_3^* & g_2^* & g_1^* & g_0^* & 0 & 0 & 0 & 0 & 0 & 0 & 0 & 0\\
0 & 0 & 0 & 0 & 0 & g_0 & g_1 & g_2 & g_3 & g_4 & g_5 & g_4^* & g_3^* & g_2^* & g_1^* & g_0^* & 0 & 0 & 0 & 0 & 0 & 0 & 0\\
0 & 0 & 0 & 0 & 0 & 0 & g_0 & g_1 & g_2 & g_3 & g_4 & g_5 & g_4^* & g_3^* & g_2^* & g_1^* & g_0^* & 0 & 0 & 0 & 0 & 0 & 0\\
0 & 0 & 0 & 0 & 0 & 0 & 0 & g_0 & g_1 & g_2 & g_3 & g_4 & g_5 & g_4^* & g_3^* & g_2^* & g_1^* & g_0^* & 0 & 0 & 0 & 0 & 0\\
0 & 0 & 0 & 0 & 0 & 0 & 0 & 0 & g_0 & g_1 & g_2 & g_3 & g_4 & g_5 & g_4^* & g_3^* & g_2^* & g_1^* & g_0^* & 0 & 0 & 0 & 0\\
0 & 0 & 0 & 0 & 0 & 0 & 0 & 0 & 0 & g_0 & g_1 & g_2 & g_3 & g_4 & g_5 & g_4^* & g_3^* & g_2^* & g_1^* & g_0^* & 0 & 0 & 0\\
0 & 0 & 0 & 0 & 0 & 0 & 0 & 0 & 0 & 0 & g_0 & g_1 & g_2 & g_3 & g_4 & g_5 & g_4^* & g_3^* & g_2^* & g_1^* & g_0^* & 0 & 0\\
0 & 0 & 0 & 0 & 0 & 0 & 0 & 0 & 0 & 0 & 0 & g_0 & g_1 & g_2 & g_3 & g_4 & g_5 & g_4^* & g_3^* & g_2^* & g_1^* & g_0^* & 0\\
0 & 0 & 0 & 0 & 0 & 0 & 0 & 0 & 0 & 0 & 0 & 0 & g_0 & g_1 & g_2 & g_3 & g_4 & g_5 & g_4^* & g_3^* & g_2^* & g_1^* & g_0^*
\end{array}\right].\nonumber
\end{align}

Setting $g_0 = 1$ without loss of generality (by
an appropriate rotation of the roots), and selecting the columns of
$\mathcal{T}(\boldsymbol{g})$ indexed by
$\bar{\mathcal{I}} = \{2, 11, 16, 17, 18, 19, 21, 22\}$, we obtain the
complementary submatrix
\begin{equation}\label{eq:TbarI_exmpl9b}
\mathcal{T}_{\bar{\mathcal{I}}}(\boldsymbol{g}) =
\setlength{\arraycolsep}{2pt}
\begin{bmatrix}
g_1 & 1     & 0     & 0     & 0     & 0     & 0     & 0\\
1   & g_1^* & 0     & 0     & 0     & 0     & 0     & 0\\
0   & g_2^* & 0     & 0     & 0     & 0     & 0     & 0\\
0   & g_3^* & 0     & 0     & 0     & 0     & 0     & 0\\
0   & g_4^* & 0     & 0     & 0     & 0     & 0     & 0\\
0   & g_5   & 1     & 0     & 0     & 0     & 0     & 0\\
0   & g_4   & g_1^* & 1     & 0     & 0     & 0     & 0\\
0   & g_3   & g_2^* & g_1^* & 1     & 0     & 0     & 0\\
0   & g_2   & g_3^* & g_2^* & g_1^* & 1     & 0     & 0\\
0   & g_1   & g_4^* & g_3^* & g_2^* & g_1^* & 0     & 0\\
0   & 1     & g_5   & g_4^* & g_3^* & g_2^* & 1     & 0\\
0   & 0     & g_4   & g_5   & g_4^* & g_3^* & g_1^* & 1\\
0   & 0     & g_3   & g_4   & g_5   & g_4^* & g_2^* & g_1^*
\end{bmatrix}.
\end{equation}

\begin{proposition}\label{prop:exmpl9b}
The complementary submatrix
$\mathcal{T}_{\bar{\mathcal{I}}}(\boldsymbol{g})$
in~\eqref{eq:TbarI_exmpl9b} has full column rank for every choice of
coefficients, i.e.,
\begin{equation}
\operatorname{rank}\!\big(\mathcal{T}_{\bar{\mathcal{I}}}(\boldsymbol{g})\big) = 8.
\end{equation}
\end{proposition}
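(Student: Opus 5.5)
We write $\boldsymbol{\alpha} = [\alpha_1,\ldots,\alpha_8]^{\mathsf T}$ and assume, seeking a contradiction, that $\mathcal{T}_{\bar{\mathcal{I}}}(\boldsymbol{g})\,\boldsymbol{\alpha} = \boldsymbol{0}$ with $\boldsymbol{\alpha} \neq \boldsymbol{0}$. We keep the normalization $g_0 = 1$, which is justified by Corollary~\ref{crl:rotation_invariance} together with $g_0 \neq 0$. The argument has three steps: first show that the last six columns are always independent, then use the top five rows to force a special coefficient structure, and finally show that this structure makes the bottom rows inconsistent.

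\emph{Step 1: columns $3$--$8$ are always independent.} Take the $6\times 6$ submatrix of~\eqref{eq:TbarI_exmpl9b} formed by rows $6,7,8,9,11,12$ and columns $3,\ldots,8$. It is lower triangular, with unit diagonal entries coming from the $g_0 = 1$ entries in positions $(6,3),(7,4),(8,5),(9,6),(11,7),(12,8)$. Its determinant is therefore $1$, so columns $3$--$8$ are linearly independent for every $\boldsymbol{g}$. Consequently any null vector must have $(\alpha_1,\alpha_2) \neq (0,0)$.

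\emph{Step 2: the top rows force a special structure.} Rows $1$--$5$ involve only $\alpha_1$ and $\alpha_2$. If $\alpha_2 = 0$, row~2 gives $\alpha_1 = 0$, which contradicts Step~1. Hence $\alpha_2 \neq 0$, and we normalize $\alpha_2 = 1$. The rows then give:
\begin{itemize}
\item rows $3$--$5$: $g_2 = g_3 = g_4 = 0$;
\item row~2: $\alpha_1 = -g_1^*$;
\item row~1: $1 - |g_1|^2 = 0$, so $|g_1| = 1$.
\end{itemize}

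\emph{Step 3: the bottom rows are inconsistent.} Substitute $g_2=g_3=g_4=0$ into rows $6$--$13$. The system becomes very sparse, and we solve it recursively.
\begin{itemize}
\item Rows $6$--$9$ give $\alpha_3 = -g_5$, $\alpha_4 = g_1^* g_5$, $\alpha_5 = -(g_1^*)^2 g_5$ and $\alpha_6 = (g_1^*)^3 g_5$.
\item Row~10 reads $g_1 + g_1^*\alpha_6 = 0$, which forces $g_5 = -g_1^5$. Since $|g_1| = 1$, this gives $|g_5| = 1$ and hence $g_5^2 = 1$.
\item Row~11 gives $\alpha_7 = g_5^2 - 1$.
\item Row~12 gives $\alpha_8 = -g_5\alpha_4 - g_1^*\alpha_7 = g_1^*(1 - 2g_5^2)$.
\end{itemize}
Row~13 then reduces to
\begin{equation*}
g_5\alpha_5 + g_1^*\alpha_8 = (g_1^*)^2\,(1 - 3g_5^2) = 0,
\end{equation*}
which requires $g_5^2 = 1/3$. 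This contradicts $g_5^2 = 1$.

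Therefore no nonzero null vector exists, and the rank of $\mathcal{T}_{\bar{\mathcal{I}}}(\boldsymbol{g})$ equals $8$. By Theorem~\ref{thm:rank_condition}, the array is unambiguous for $L = 10$.

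The main obstacle is the bookkeeping in Step~3. Each row entry has to be read off correctly from~\eqref{eq:TbarI_exmpl9b} once $g_2=g_3=g_4=0$, and the equations have to be ordered so that rows~10 and~13 serve as the two independent consistency conditions. Only their combination yields the contradiction; rows~6--12 alone are always solvable.
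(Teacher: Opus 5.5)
Your proof is correct and follows essentially the same route as the paper. The paper uses the same unit-lower-triangular block (rows $6$--$9$, $11$, $12$) to reduce everything to showing $\alpha_2=0$. Your Step~2 is the contrapositive of its Cases~A and~B, and your Step~3 is its Case~C: your residual $(g_1^*)^2(1-3g_5^2)$ equals the paper's $-2(g_1^*)^2$ once $g_5^2=1$, a step that silently uses $g_5\in\mathbb{R}$, although your final contradiction ($g_5^2=1/3$ versus $|g_5|=1$) would hold even without it.
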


By Theorem~\ref{thm:rank_condition}, it follows that the generalized
Vandermonde matrix
${\boldsymbol A}_{\mathcal{I}}(z_1, \ldots, z_{10})$ has full column
rank for every set of distinct generators, and the array manifold is
therefore unambiguous for $L = 10$ sources.

\begin{IEEEproof}
We show that the only null vector
$\boldsymbol{\alpha} = [\alpha_1, \ldots, \alpha_8]^{\mathsf{T}} \in
\mathbb{C}^8$ of $\mathcal{T}_{\bar{\mathcal{I}}}(\boldsymbol{g})$ is the
zero vector. The system
$\mathcal{T}_{\bar{\mathcal{I}}}(\boldsymbol{g})\, \boldsymbol{\alpha}
= \boldsymbol{0}$ reads
{
\begin{subequations}\label{eq:TbarI-alpha-system}
\allowdisplaybreaks
\begin{align}
g_1\alpha_1 + \alpha_2 &= 0, \label{eq:TbarI-alpha-1}\\
\alpha_1 + g_1^*\alpha_2 &= 0, \label{eq:TbarI-alpha-2}\\
g_2^*\alpha_2 &= 0, \label{eq:TbarI-alpha-3}\\
g_3^*\alpha_2 &= 0, \label{eq:TbarI-alpha-4}\\
g_4^*\alpha_2 &= 0, \label{eq:TbarI-alpha-5}\\
g_5\alpha_2 + \alpha_3 &= 0, \label{eq:TbarI-alpha-6}\\
g_4\alpha_2 + g_1^*\alpha_3 + \alpha_4 &= 0, \label{eq:TbarI-alpha-7}\\
g_3\alpha_2 + g_2^*\alpha_3 + g_1^*\alpha_4 + \alpha_5 &= 0, \label{eq:TbarI-alpha-8}\\
g_2\alpha_2 + g_3^*\alpha_3 + g_2^*\alpha_4 + g_1^*\alpha_5 + \alpha_6 &= 0, \label{eq:TbarI-alpha-9}\\
g_1\alpha_2 + g_4^*\alpha_3 + g_3^*\alpha_4 + g_2^*\alpha_5 + g_1^*\alpha_6 &= 0, \label{eq:TbarI-alpha-10}\\
\alpha_2 + g_5\alpha_3 + g_4^*\alpha_4 + g_3^*\alpha_5 + g_2^*\alpha_6 + \alpha_7 &= 0, \label{eq:TbarI-alpha-11}\\
g_4\alpha_3 + g_5\alpha_4 + g_4^*\alpha_5 + g_3^*\alpha_6 + g_1^*\alpha_7 + \alpha_8 &= 0, \label{eq:TbarI-alpha-12}\\
g_3\alpha_3 + g_4\alpha_4 + g_5\alpha_5 + g_4^*\alpha_6 + g_2^*\alpha_7 + g_1^*\alpha_8 &= 0. \label{eq:TbarI-alpha-13}
\end{align}
\end{subequations}
}
Equations~\eqref{eq:TbarI-alpha-6}--\eqref{eq:TbarI-alpha-9},
\eqref{eq:TbarI-alpha-11}, and~\eqref{eq:TbarI-alpha-12} form a
lower-triangular system in the unknowns
$\alpha_3, \ldots, \alpha_8$ with unit diagonal: each successive
equation introduces the next unknown with coefficient~$1$. Hence, once
$\alpha_2 = 0$ is established, these six equations recursively force
$\alpha_3 = \cdots = \alpha_8 = 0$, and~\eqref{eq:TbarI-alpha-2} then
yields $\alpha_1 = 0$. It therefore suffices to show $\alpha_2 = 0$.

\emph{Case~A: $g_k \neq 0$ for some $k \in \{2, 3, 4\}$.}
Equations~\eqref{eq:TbarI-alpha-3}--\eqref{eq:TbarI-alpha-5} give
$g_k^*\,\alpha_2 = 0$, and since $g_k^* \neq 0$, we obtain $\alpha_2 = 0$.

\emph{Case~B: $g_2 = g_3 = g_4 = 0$ and $|g_1| \neq 1$.}
Equations~\eqref{eq:TbarI-alpha-1} and~\eqref{eq:TbarI-alpha-2} reduce to
$g_1\,\alpha_1 + \alpha_2 = 0$ and $\alpha_1 + g_1^*\,\alpha_2 = 0$.
Eliminating $\alpha_1$ yields $(1 - |g_1|^2)\,\alpha_2 = 0$, which forces
$\alpha_2 = 0$ whenever $|g_1| \neq 1$.

\emph{Case~C: $g_2 = g_3 = g_4 = 0$ and $|g_1| = 1$.}
We argue by contradiction. Suppose $\alpha_2 \neq 0$. With
$g_2 = g_3 = g_4 = 0$, equations~\eqref{eq:TbarI-alpha-6}--%
\eqref{eq:TbarI-alpha-9} yield recursively
\begin{align}
\alpha_3 &= -g_5\,\alpha_2,
& \alpha_4 &= g_1^*\, g_5\,\alpha_2,\\
\alpha_5 &= -(g_1^*)^2\, g_5\,\alpha_2,
& \alpha_6 &= (g_1^*)^3\, g_5\,\alpha_2.
\end{align}
Substituting into~\eqref{eq:TbarI-alpha-10} gives
$\big(g_1 + (g_1^*)^4\, g_5\big)\,\alpha_2 = 0$, hence
\begin{equation}
g_1 = -(g_1^*)^4\, g_5.
\end{equation}
Taking moduli, $|g_1| = |g_1|^4\, |g_5|$, and using $|g_1| = 1$ gives
$|g_5| = 1$. Since $g_5 \in \mathbb{R}$, we obtain $g_5 = \pm 1$, so
$g_5^2 = 1$.

From~\eqref{eq:TbarI-alpha-11} we then compute
\begin{equation}
\alpha_7 = -\alpha_2 - g_5\,\alpha_3
= -\alpha_2 + g_5^2\,\alpha_2
= (g_5^2 - 1)\,\alpha_2 = 0,
\end{equation}
and from~\eqref{eq:TbarI-alpha-12},
\begin{equation}
\alpha_8 = -g_5\,\alpha_4 - g_1^*\,\alpha_7
= -g_1^*\, g_5^2\,\alpha_2 = -g_1^*\,\alpha_2.
\end{equation}
Finally, equation~\eqref{eq:TbarI-alpha-13} reduces to
$g_5\,\alpha_5 + g_1^*\,\alpha_8 = 0$. Substituting,
\begin{align}
- g_5 \,(g_1^*)^2\, g_5\,\alpha_2\big)
- g_1^* g_1^*\,\alpha_2
&= -(g_1^*)^2\,(g_5^2 + 1)\,\alpha_2\notag\\
&= -2\,(g_1^*)^2\,\alpha_2 = 0.
\end{align}
Since $|g_1| = 1$, this forces $\alpha_2 = 0$, contradicting the
assumption $\alpha_2 \neq 0$.

In every case $\alpha_2 = 0$, and by the opening remark
$\boldsymbol{\alpha} = \boldsymbol{0}$. Hence
$\operatorname{rank}\!\big(\mathcal{T}_{\bar{\mathcal{I}}}(\boldsymbol{g})\big) = 8$.
\end{IEEEproof}

By Theorem~\ref{thm:rank_condition} and
Proposition~\ref{prop:exmpl9b}, the thinned array
$\mathcal{I} = \{1, 3, 4, \ldots, 10, 12, \ldots, 15, 20, 23\}$ is
unambiguous for $L = 10$ sources. 
\end{example}
\begin{example}\label{exmpl:9a}
Consider again the thinned ULA of Fig.~\ref{fig:thinned_ula}, but now
with $L = 11$ sources. For this scenario,
Corollary~\ref{crl:corollary4b} yields the mutually exclusive index
pairs displayed in Table~\ref{tab:Example12}, which differ from those
of the previous example with $L = 10$ sources. In particular, both
indices of the pair $\mathcal{S}_{0, 11} = \{11, 22\}$ are contained
in $\bar{\mathcal{I}}$, which suggests that the corresponding thinned
steering matrix ${\boldsymbol A}_{\mathcal{I}}(z_1, \ldots, z_{11})$
is not full column rank for every generator set.

To compute  ambiguity sets for which a rank drop is observe, we use Corollary~\ref{crl:corollary5} and fix one prescribed
generator at $\acute{z}_1 = 1$. The cofactor polynomial has degree
$L - L' = 10$ with centro-Hermitian coefficient vector
$\boldsymbol{f} = [f_0, f_1, f_2, f_3, f_4, f_5, f_4^*, f_3^*, f_2^*,
f_1^*, f_0^*]^{\mathsf{T}}$, where $f_\ell \in \mathbb{C}$ for
$\ell = 0, 1, \ldots, 4$ and $f_5 \in \mathbb{R}$. Then the Toeplitz matrix
$\mathcal{T}(\boldsymbol{f}) \in \mathbb{C}^{13 \times 23}$ has the same
form as in~\eqref{eq:bigfatToeplitz} (with $\boldsymbol{g}$ replaced
by $\boldsymbol{f}$). Assume without loss
of generality $|f_0| = 1$.

Selecting the columns of $\mathcal{T}(\boldsymbol{f})$ indexed by
$\bar{\mathcal{I}}$ and appending the Vandermonde column for
$\acute{z}_1 = 1$, the augmented matrix
in~\eqref{eq:generalizeVandermonde} reads
\begin{equation}\label{eq:A_exmpl9a}
\mathcal{A}(\boldsymbol{f}, \acute{z}_1) =
\begin{bmatrix}
f_1 & f_0^{*} & 0 & 0 & 0 & 0 & 0 & 0 & 1\\
f_0 & f_1^{*} & 0 & 0 & 0 & 0 & 0 & 0 & 1\\
0 & f_2^{*} & 0 & 0 & 0 & 0 & 0 & 0 & 1\\
0 & f_3^{*} & 0 & 0 & 0 & 0 & 0 & 0 & 1\\
0 & f_4^{*} & 0 & 0 & 0 & 0 & 0 & 0 & 1\\
0 & f_5 & f_0^{*} & 0 & 0 & 0 & 0 & 0 & 1\\
0 & f_4 & f_1^{*} & f_0^{*} & 0 & 0 & 0 & 0 & 1\\
0 & f_3 & f_2^{*} & f_1^{*} & f_0^{*} & 0 & 0 & 0 & 1\\
0 & f_2 & f_3^{*} & f_2^{*} & f_1^{*} & f_0^{*} & 0 & 0 & 1\\
0 & f_1 & f_4^{*} & f_3^{*} & f_2^{*} & f_1^{*} & 0 & 0 & 1\\
0 & f_0 & f_5 & f_4^{*} & f_3^{*} & f_2^{*} & f_0^{*} & 0 & 1\\
0 & 0 & f_4 & f_5 & f_4^{*} & f_3^{*} & f_1^{*} & f_0^{*} & 1\\
0 & 0 & f_3 & f_4 & f_5 & f_4^{*} & f_2^{*} & f_1^{*} & 1
\end{bmatrix}.
\end{equation}
By Corollary~\ref{crl:corollary5}, the array manifold is ambiguous if and
only if $\mathcal{A}(\boldsymbol{f}, \acute{z}_1)$ is rank deficient,
i.e., if there exists a nonzero null vector
$\boldsymbol{\alpha} = [\alpha_1, \alpha_2, \ldots, \alpha_9]^{\mathsf{T}}
\in \mathbb{C}^9$ satisfying
\begin{equation}\label{eq:Aalphaequalzero}
\mathcal{A}(\boldsymbol{f}, \acute{z}_1)\, \boldsymbol{\alpha}
= \boldsymbol{0}.
\end{equation}
The thirteen row equations of~\eqref{eq:Aalphaequalzero} read
\begin{subequations}
\allowdisplaybreaks
\begin{alignat}{2}
f_1\alpha_1 + f_0^*\alpha_2 + \alpha_9 &= 0, &&\label{r1}\\
f_0\alpha_1 + f_1^*\alpha_2 + \alpha_9 &= 0, &&\label{r2}\\
f_2^*\alpha_2 + \alpha_9 &= 0, &&\label{r3}\\
f_3^*\alpha_2 + \alpha_9 &= 0, &&\label{r4}\\
f_4^*\alpha_2 + \alpha_9 &= 0, &&\label{r5}\\
f_5\alpha_2 + f_0^*\alpha_3 + \alpha_9 &= 0, &&\label{r6}\\
f_4\alpha_2 + f_1^*\alpha_3 + f_0^*\alpha_4
  + \alpha_9 &= 0, &&\label{r7}\\
f_3\alpha_2 + f_2^*\alpha_3 + f_1^*\alpha_4
  + f_0^*\alpha_5 + \alpha_9 &= 0, &&\label{r8}\\
f_2\alpha_2 + f_3^*\alpha_3 + f_2^*\alpha_4
  + f_1^*\alpha_5 & \notag\\
\quad {}+ f_0^*\alpha_6 + \alpha_9 &= 0, &&\label{r9}\\
f_1\alpha_2 + f_4^*\alpha_3 + f_3^*\alpha_4
  + f_2^*\alpha_5 & \notag\\
\quad {}+ f_1^*\alpha_6 + \alpha_9 &= 0, &&\label{r10}\\
f_0\alpha_2 + f_5\alpha_3 + f_4^*\alpha_4
  + f_3^*\alpha_5 & \notag\\
\quad {}+ f_2^*\alpha_6 + f_0^*\alpha_7
  + \alpha_9 &= 0, &&\label{r11}\\
f_4\alpha_3 + f_5\alpha_4 + f_4^*\alpha_5
  + f_3^*\alpha_6 & \notag\\
\quad {}+ f_1^*\alpha_7 + f_0^*\alpha_8
  + \alpha_9 &= 0, &&\label{r12}\\
f_3\alpha_3 + f_4\alpha_4 + f_5\alpha_5
  + f_4^*\alpha_6 & \notag\\
\quad {}+ f_2^*\alpha_7 + f_1^*\alpha_8
  + \alpha_9 &= 0. &&\label{r13}
\end{alignat}
\end{subequations}

Rows~\eqref{r3}-\eqref{r5} immediately yield
\begin{equation}\label{eq:f2isf3isf4is_alpha9}
f_2^*\,\alpha_2 = f_3^*\,\alpha_2 = f_4^*\,\alpha_2 = -\alpha_9.
\end{equation}
If $\alpha_2\! = \!0$, then $\alpha_9\! =\! 0$, and back-substitution
through~\eqref{r6}-\eqref{r12} forces $\alpha_3 = \cdots = \alpha_8 = 0$
and then $\alpha_1 = 0$, yielding only the trivial solution
$\boldsymbol{\alpha} = \boldsymbol{0}$. The case $\alpha_2 \neq 0$ is
therefore the only one capable of producing a rank drop; we normalize
$\alpha_2 = 1$, so that~\eqref{eq:f2isf3isf4is_alpha9} gives
\begin{equation}\label{eq:f2_f3_f4_alpha}
f_2 = f_3 = f_4 = -\alpha_9^*.
\end{equation}
The boundary subcase $\alpha_9 = 0$ was treated in the previous Example~\ref{exmpl:9b}, where it was shown to admit only the trivial null vector. We
therefore assume $\alpha_9 \neq 0$ in what follows, and distinguish two
subcases based on the relation between $f_0$ and $f_1$.

\emph{Case 1: $f_0 \neq f_1$.}
With $\alpha_2 = 1$, rows~\eqref{r1} and~\eqref{r2} become
\begin{align}
f_1\,\alpha_1 + f_0^* + \alpha_9 &= 0,\\
f_0\,\alpha_1 + f_1^* + \alpha_9 &= 0.
\end{align}
Eliminating $\alpha_1$ between these two equations and using
$f_0 f_0^* = 1$ yields
\begin{equation}\label{eq:alpha9of_f1_f0}
\alpha_9 = -\frac{|f_1|^2 - 1}{f_1 - f_0}.
\end{equation}
Thus $\alpha_9$ is determined by $(f_0, f_1)$ and is not an independent
parameter. Once $f_0$, $f_1$, and $f_5$ are fixed,
rows~\eqref{r6}--\eqref{r9}, \eqref{r11}, and~\eqref{r12} determine
$\alpha_3, \alpha_4, \ldots, \alpha_8$ recursively. The two remaining
rows~\eqref{r10} and~\eqref{r13} then produce two complex residual
conditions
\begin{equation}\label{eq:F10_F13}
F_{10}(f_0, f_1, f_5) = 0,
\qquad
F_{13}(f_0, f_1, f_5) = 0,
\end{equation}
where the functional dependence on $f_1^*$ is implicit through conjugate
relations. With the parameterization $f_0 = e^{{\mathsf j}\phi_0}$ and
$f_1 = \operatorname{Re}(f_1) + {\mathsf j}\,\operatorname{Im}(f_1)$,
condition~\eqref{eq:F10_F13} becomes four real equations in the four
real unknowns $(\phi_0, \operatorname{Re}(f_1), \operatorname{Im}(f_1),
f_5)$, which can be solved numerically.

Numerical solutions, modulo the rotation and conjugation symmetries of
Corollary~\ref{crl:rotation_invariance}
and~\ref{crl:conjugate_invariance}, are listed in
Table~\ref{tab:phi_0_ref2_imf2_f5}. For each solution, the
corresponding coefficients $f_2 = f_3 = f_4 = -\alpha_9^*$ are obtained
from~\eqref{eq:alpha9of_f1_f0} via~\eqref{eq:f2_f3_f4_alpha}. A rank
drop of $\mathcal{A}(\boldsymbol{f}, \acute{z}_1)$ and of the
corresponding generalized Vandermonde matrix
${\boldsymbol A}_{\mathcal{I}}(\acute{z}_1, \acute{z}_2, \ldots,
\acute{z}_L)$, where $\acute{z}_2, \ldots, \acute{z}_L$ are the roots
of the cofactor polynomial $F(z) = \sum_{\ell = 0}^{L - 1} f_\ell z^\ell$, was verified numerically. We note that if
$(\phi_0, f_1, f_5)$ is a rank-deficient parameter set, then
$(\pi - \phi_0, -f_1^*, -f_5)$ is also rank deficient, reflecting the
combined rotation and conjugation symmetries of the underlying
Vandermonde structure.

\begin{table}[t]
\centering
\caption{Numerical solutions for Example~\ref{exmpl:9a} in the case
$f_0 \neq f_1$, modulo rotation and conjugation symmetries.}
\label{tab:phi_0_ref2_imf2_f5}
\resizebox{0.49\textwidth}{!}{
{\setlength{\tabcolsep}{1pt}
\begin{tabular}{rrrr}
\toprule
$\phi_0$ & $\operatorname{Re}(f_1)$ & $\operatorname{Im}(f_1)$ & $f_5$\\
\midrule
0.000000000000 & $-1.039973406495$ & $\phantom{-}0.000000000000$ & $\phantom{-}0.662198920517$\\
0.023196867809 & $-0.695163964062$ & $-1.112068476921$ & $-0.967366476067$\\
0.132990232726 & $\phantom{-}2.999347102105$ & $-0.226170897152$ & $\phantom{-}3.774964804957$\\
1.284336051706 & $-0.731021118103$ & $-0.811102392446$ & $\phantom{-}0.650559840753$\\
1.296630596069 & $\phantom{-}2.268382842982$ & $\phantom{-}1.376088506000$ & $\phantom{-}3.263768860057$\\
1.426090554865 & $\phantom{-}2.114982871683$ & $\phantom{-}0.462754841214$ & $\phantom{-}1.221641753427$\\
\bottomrule
\end{tabular}}}
\end{table}

\emph{Case 2: $f_0 = f_1 = u$ with $|u| = 1$.}
In this case, rows~\eqref{r1} and~\eqref{r2} coincide, both reducing to
$u\,\alpha_1 + u^* + \alpha_9 = 0$, which gives
$\alpha_1 = -(u^* + \alpha_9)/u$. Unlike Case~1, the coefficient
$\alpha_9$ is now a free parameter: the relation $f_2 = f_3 = f_4 = -\alpha_9^*$
in~\eqref{eq:f2_f3_f4_alpha} relates $\alpha_9$ to $f_2$ but neither is
determined by $f_0$ and $f_5$ alone. The free parameters in this branch
are therefore $(\phi_0, f_2, f_5)$, with the additional unknowns
$\alpha_3, \ldots, \alpha_8$ obtained recursively
from~\eqref{r6}--\eqref{r9}, \eqref{r11}, and~\eqref{r12}.

The remaining rows~\eqref{r10} and~\eqref{r13} produce two complex
residual conditions
\begin{equation}\label{eq:F10_F13_of_f0_f2_f5}
F_{10}(f_0, f_2, f_5) = 0,
\qquad
F_{13}(f_0, f_2, f_5) = 0,
\end{equation}
which, under the parameterization $f_0 = e^{{\mathsf j}\phi_0}$ and
$f_2 = \operatorname{Re}(f_2) + {\mathsf j}\,\operatorname{Im}(f_2)$, become
four real equations in the four real unknowns
$(\phi_0, \operatorname{Re}(f_2), \operatorname{Im}(f_2), f_5)$.
Numerical solutions are listed in Table~\ref{tab:phi0_ref2_imf2_f5_2}
for $\phi_0 \in [0, \pi/2]$. The solution $\boldsymbol{f} = \boldsymbol{1}$ in line~3 of the table
is associated with the polynomial
$P_{\boldsymbol{f}}(z) = \sum_{\ell = 0}^{11} z^\ell$ and yields,
together with the prescribed generator $\acute{z}_1 = 1$, the full
polynomial
\begin{equation}
P_{\boldsymbol{g}}(z) = (z - 1)\, P_{\boldsymbol{f}}(z) = z^{12} - 1,
\end{equation}
with coefficient vector
$\boldsymbol{g} = [-1, 0, \ldots, 0, 1]^{\mathsf{T}}$. This solution
corresponds exactly to the linearly dependent column pair
$\mathcal{S}_{0, 11} = \{11, 22\}$ identified by
Corollary~\ref{crl:corollary4b} for the rotated coefficient vector
$\grave{\boldsymbol{g}} = [1, 0, \ldots, 0, 1]^{\mathsf{T}}$.

Together with
Example~\ref{exmpl:9b}, we observer from this example that the array is unambiguous for $L = 10$ but admits
ambiguities (a discrete family enumerated in
Tables~\ref{tab:phi_0_ref2_imf2_f5}
and~\ref{tab:phi0_ref2_imf2_f5_2}) for $L = 11$.
\begin{table}[t]
\centering
\caption{Numerical solutions for Example~\ref{exmpl:9a} in the case
$f_0 = f_1 = u$, modulo rotation and conjugation symmetries.}
\label{tab:phi0_ref2_imf2_f5_2}
\resizebox{0.49\textwidth}{!}{
{\setlength{\tabcolsep}{1pt}
\begin{tabular}{rrrr}
\toprule
$\phi_0$ & $\operatorname{Re}(f_2)$ & $\operatorname{Im}(f_2)$ & $f_5$\\
\midrule
0.000000000000 & $\phantom{-}1.474626617563$ & $\phantom{-}0.000000000000$ & $\phantom{-}3.581545957939$\\
0.000000000000 & $-0.395336994467$ & $\phantom{-}0.000000000000$ & $-1.112009743749$\\
0.000000000000 & $\phantom{-}1.000000000000$ & $\phantom{-}0.000000000000$ & $\phantom{-}1.000000000000$\\
0.078696513363 & $\phantom{-}2.314363910853$ & $\phantom{-}0.538253664385$ & $\phantom{-}2.787547619723$\\
0.158922681697 & $-0.104274031963$ & $-0.710777517533$ & $-0.012539904542$\\
0.422404114891 & $\phantom{-}2.073102149108$ & $-0.364030872381$ & $\phantom{-}1.455768079484$\\
0.673341074455 & $-0.487448016716$ & $\phantom{-}0.043842172066$ & $-1.202778495245$\\
1.138254739236 & $-0.959826532409$ & $\phantom{-}0.695048367057$ & $-0.785105048282$\\
\bottomrule
\end{tabular}}}
\end{table}
\end{example}

\section{Conclusion and Future Work}\label{sec:ConclusionAndFutureWork}

This paper has developed a scalable algebraic framework for the
multi-source identifiability analysis of thinned uniform linear arrays.
The central result, Theorem~\ref{thm:rank_condition}, relates the rank
deficiency of the generalized Vandermonde matrix associated with the
sparse array steering matrix to the rank deficiency of a thinned Toeplitz
matrix derived from the centro-Hermitian polynomial parameterization of
the array manifold. This algebraic characterization enables, for the
first time, the systematic enumeration of all ambiguity sets in large
sparse arrays without resorting to combinatorial search procedures, and
applies to arbitrary aperture sizes, sensor counts, and source numbers.

A second contribution is Corollary~\ref{crl:corollary5} (Vandermonde
reduction), which relates the rank condition of the thinned-array
steering matrix to that of an augmented full-ULA steering matrix with
prescribed generators. This reveals a structural link between sparse
and fully populated arrays, and provides a constructive computational
pathway for ambiguity enumeration that was demonstrated on representative
examples ranging from small holes patterns in $9$-element ULAs to a
$15$-sensor thinning of a $23$-element ULA. Together with the
guidelines for mutually exclusive sensor indices in
Corollaries~\ref{crl:corollary3}--\ref{crl:corollary4b}, the framework
provides both analytical insight into when ambiguities arise and
practical tools for designing sparse layouts that avoid them.

The results suggest several directions for future research.\\
\emph{Sparse-array design guidelines.}
The design corollaries presented in Section~\ref{sec:ArrayDesign},
namely, Corollary~\ref{crl:corollary3} on forbidden center elements and
Corollaries~\ref{crl:corollary4a} and~\ref{crl:corollary4b} on
mutually exclusive index pairs, were derived from specific
canonical choices of the polynomial coefficient vector $\boldsymbol{g}$
(in particular $\boldsymbol{g} = [1, 0, \ldots, 0, 1]^{\mathsf{T}}$ and
$\boldsymbol{g} = [1, 1, \ldots, 1]^{\mathsf{T}}$). Each such special
case identifies a structural condition under which the complementary
submatrix $\mathcal{T}_{\bar{\mathcal{I}}}(\boldsymbol{g})$ drops rank,
thereby yielding a necessary condition on the removal set
$\bar{\mathcal{I}}$ for the array to remain unambiguous. Systematic
exploration of other special parameter and subset choices is likely to yield
additional design criteria of the same form, each excluding further
combinations of sensor positions from the removal set. Such guidelines
would substantially reduce the combinatorial search space for the
optimal sparse-array configuration and represent a natural direction
for further research.

\emph{Optimal sparse-array design under aperture and sensor constraints.}
For a fixed aperture $M$ and sensor count $M_{\mathcal{I}}$, the
framework developed here makes it possible to formulate the optimal
sparse-array design as a discrete optimization problem: among all
admissible thinning patterns, identify the one whose steering matrix
attains the largest spark, equivalently, the highest source number
for which the manifold remains unambiguous, subject to additional
constraints on the longest consecutive lag and on the
mutual-coupling profile of the retained sensor set. The trade-off
between identifiability and coupling robustness is particularly
relevant for dense arrays in which the effective inter-element spacing
approaches the half-wavelength limit, including holographic and
extremely large-scale MIMO arrays at mmWave and sub-THz frequencies \cite{10045715Yuan},
medical ultrasound and CMUT/pMUT transducer arrays \cite{25247666Boujenoui}, and compact
direction-finding antenna arrays on size-constrained
platforms~\cite{7458883LiuPP}.

\emph{Coarray-based design of symmetric sparse arrays.}
A complementary line of research concerns the design of symmetric
thinned arrays that maximize the unambiguous source count attainable
from the difference coarray \cite{10835190gini}. Conventional coarray design criteria select geometries that maximize the number of consecutive lags, since
a contiguous lag set of length $K$ guarantees identifiability of up to
$K - 1$ sources from second-order statistics. However, coarrays typically also produce a substantial number of \emph{non-%
consecutive} lags beyond the contiguous segment, and these additional
lags contribute to identifiability in ways that the consecutive-lag
count alone does not capture. The analysis tools developed in this
paper could in principle provide tighter upper bounds on the spark of
the coarray steering matrix that account for both the consecutive and
the non-consecutive lags, potentially yielding identifiability
guarantees that exceed the conventional consecutive-lag bound.

\emph{Extension to more general array structures.}
The polynomial parameterization underlying our framework is specific
to ULAs with $\lambda/2$-grid sensor positions. Extending the framework
to shift-invariant sparse arrays (covering, for example, ESPRIT-compatible
geometries \cite{478488Haardt}) and to partly calibrated sparse arrays \cite{1025573PesaventoRARE},\cite{10636507liu}, in which only a subset of
inter-sensor displacements is known, would broaden the applicability
of the rank-drop characterization to a wider class of practical array
geometries. The shift-invariance property gives access to an algebraic
structure analogous to the Toeplitz parameterization used here, while
partly calibrated arrays naturally fit into the prescribed-generator
formulation of Corollary~\ref{crl:corollary5}.

Beyond array signal processing, the underlying mathematical problem
of determining the spark of generalized Vandermonde matrices remains
open in its full generality. Its connections to super-resolution,
finite-rate-of-innovation sampling, sparse polynomial interpolation,
moment problems, algebraic coding theory, and control theory, outlined
in the introduction,  suggest that progress on the structured rank
characterization developed here may contribute to these areas as well.
\bibliographystyle{IEEEtran}
\bibliography{references_doi_note}
 \end{document}